\documentclass[3p,preprint]{elsarticle}






\journal{Theoretical Computer Science}









\usepackage{amssymb}





\usepackage[figuresright]{rotating}

\newtheorem{theorem}{Theorem}
\newtheorem{lemma}[theorem]{Lemma}
\newtheorem{proposition}[theorem]{Proposition}
\newtheorem{corollary}[theorem]{Corollary}
\newtheorem{definition}[theorem]{Definition}

\newtheorem{example}{Example}

\newproof{proof}{Proof}

\usepackage[utf8]{inputenc}
\usepackage[algosection]{algorithm2e}
\usepackage{endnotes}
\usepackage{listings}
\usepackage{mathtools}
\usepackage{verbatim}
\usepackage{mathrsfs}
\usepackage{shuffle}
\usepackage{longtable}
\usepackage{enumitem}
\usepackage{etoolbox}
\usepackage{float}
\usepackage{color}
\usepackage{hyperref}


\usepackage{complexity}

\newclass{\NQA}{NQA}
\newclass{\NSA}{NSA}
\newclass{\NFPA}{NFPA}

\newclass{\DCM}{DCM}
\newclass{\eDCM}{eDCM}
\newclass{\eNPDA}{eNPDA}
\newclass{\DPDA}{DPDA}
\newclass{\PDA}{PDA}
\newclass{\DCMNE}{DCM_{NE}}
\newclass{\TwoDCM}{2DCM}
\newclass{\NCM}{NCM}
\newclass{\eNCM}{eNCM}
\newclass{\eNQA}{eNQA}
\newclass{\eNSA}{eNSA}
\newclass{\eNPCM}{eNPCM}
\newclass{\eNQCM}{eNQCM}
\newclass{\eNSCM}{eNSCM}
\newclass{\DPCM}{DPCM}
\newclass{\NPCM}{NPCM}
\newclass{\NQCM}{NQCM}
\newclass{\DQCM}{DQCM}
\newclass{\NSCM}{NSCM}
\newclass{\NPDA}{NPDA}
\newclass{\TRE}{TRE}
\newclass{\NFA}{NFA}
\newclass{\DFA}{DFA}
\newclass{\NCA}{NCA}
\newclass{\DCA}{DCA}
\newclass{\DTM}{DTM}
\newclass{\NTM}{NTM}
\newclass{\DLOG}{DLOG}

\newcommand{\LFam}{{\cal L}}
\newcommand{\MFam}{{\cal M}}
\newcommand{\SFam}{{\cal S}}

\newsavebox{\spacebox}
\begin{lrbox}{\spacebox}
\verb*! !
\end{lrbox}
\newcommand{\blank}{\usebox{\spacebox}}%
\newcommand\rw{\downarrow}

\newcommand\acc{Acc}
\newcommand\coacc{co\mbox{-}Acc}

\begin{document}

\begin{frontmatter}

\title{On Store Languages of Language Acceptors\tnoteref{t1}}

\tnotetext[t1]{\textcopyright 2018. This manuscript version is made available under the CC-BY-NC-ND 4.0 license \url{http://creativecommons.org/licenses/by-nc-nd/4.0/}}

\author[label1]{Oscar H. Ibarra\fnref{fn1}}
\address[label1]{Department of Computer Science\\ University of California, Santa Barbara, CA 93106, USA}
\ead[label1]{ibarra@cs.ucsb.edu}
\fntext[fn1]{Supported, in part, by
NSF Grant CCF-1117708 (Oscar H. Ibarra).}

\author[label2]{Ian McQuillan\fnref{fn2}}
\address[label2]{Department of Computer Science, University of Saskatchewan\\
Saskatoon, SK S7N 5A9, Canada}
\ead[label2]{mcquillan@cs.usask.ca}
\fntext[fn2]{Supported, in part, by Natural Sciences and Engineering Research Council of Canada Grant 2016-06172 (Ian McQuillan).}

\begin{abstract}
It is well known that the ``store language'' of every pushdown 
automaton --- the set of store
configurations (state and stack contents) that can appear as an
intermediate step in accepting
computations --- is a regular language.  Here
many models of language acceptors with various
store structures are examined, along with a study of their store
languages. 
For each model, an attempt is made 
to find the simplest model
that accepts their store languages.   
Some connections between store languages of one-way and two-way machines are demonstrated, as with connections between nondeterministic and deterministic machines.
A
nice application of these store language results is also presented, showing a general technique for proving families accepted by many deterministic models are closed under right quotient with regular languages, resolving some open questions (and significantly simplifying proofs for others that are known) in the literature.
Lower bounds on the space complexity of Turing machines for having non-regular store 
languages are obtained.
 \end{abstract}

\begin{keyword}
Store Languages \sep Turing Machines \sep Storage Structures \sep Right Quotient \sep Automata
\end{keyword}

\end{frontmatter}

\section{Introduction}

A store configuration of a one-way or two-way language acceptor consists
of the state followed by the contents of its memory (store) structure.
It does not include the input and the position of the input
head.  For example, for a nondeterministic pushdown automaton
($\NPDA$), a store configuration is represented by a string $qx$,
where $q$ is a state and $x$ is the contents of the pushdown
stack.  For multi-tape acceptors, such as for an $\NPDA$ augmented
with $k$ reversal-bounded counters
($\NPCM$) \cite{Ibarra1978}, the store configuration is represented  
by the string $qx c_1^{j_1} \cdots c_k^{j_k}$,
where $j_i$ represents the value of counter $i$ in unary notation,
and the $c_i$ symbols and the symbols of $x$ are disjoint.
For a machine $M$, let $S(M)$ be the set of store
configurations that can appear as an intermediate step in accepting 
computations of $M$.

It is well-known that $S(M)$ is a regular language for any $\NPDA$
$M$ \cite{CFHandbook,GreibachCFStore}. Greibach used this result to provide an alternative
proof \cite{GreibachCFStore} that regular canonical systems produce regular languages \cite{Buchi1990}. Also, it was a key component to showing that it is decidable 
whether the set of all infixes (subwords) of the language accepted by a reversal-bounded\footnote{Reversal-bounded means that there is a bound on the number of switches between non-decreasing and non-increasing the size of the pushdown.} $\NPDA$ is equal to $\Sigma^*$ (i.e., is dense) \cite{DLTJournalIJFCS}.  
Connections between store languages and the area of verification and model checking
have also been recently explored \cite{StoreApplications}.

Due to the usefulness of the store language concept, the
store languages of several models of
language acceptors are studied in this paper.  
For machine models with  an undecidable emptiness problem, membership in the store
language is undecidable. Hence,
the investigation of store languages is particularly focused on
machine models with a decidable emptiness problem.
Results are given here that generalize (in often non-obvious ways)
the aforementioned result concerning $\NPDA$s to many other machine models,
such as the following:
\begin{enumerate}
\item The following nondeterministic machine models with one-way read-only input have regular store languages: $k$-flip pushdown automata \cite{flipPushdown} (which
are like pushdown automata but can flip the pushdown 
store up to $k$ times), reversal-bounded queue automata, nondeterministic Turing machines with a reversal-bounded worktape, and
stack automata \cite{Stack2,StackAutomata}. The result for stack automata was shown recently \cite{KutribCIAA2016} and so our result 
becomes an alternate proof that follows from
existing results in the literature.  Also, a new simple but general
method is presented for translating results between two-way machines
and one-way machines.
\item
The store languages of finite-crossing\footnote{Finite-crossing means that the input head crosses the boundary of
any two adjacent input symbols at most a fixed number of times.} two-way nondeterministic machines
with reversal-bounded counters can be accepted by one-way deterministic machines with
reversal-bounded counters ($\DCM$).
\item
There is a non-finite-crossing two-way deterministic
machine with one reversal-bounded counter whose store
language cannot be accepted by any $\NPCM$.
\item Some machine models (e.g., deterministic pushdown automata with reversal-bounded counters, $\DPCM$) cannot accept their own store languages.
\end{enumerate}
$\NPCM$s and $\NCM$s have been extensively studied
since their introductions in \cite{Ibarra1978, Baker1974}.
They have found applications in areas such as timed automata
\cite{Dang2001}, model-checking and verification \cite{HagueLin2011,IBARRA2002165},
membrane computing \cite{PaunBook}, and Diophantine equations \cite{IbarraDang}.

Another interesting application is presented here
showing the closure of many families of languages accepted by
deterministic machines under right quotient with regular languages.
Some of these resolve open problems in the literature,
and others simplify existing known proofs. These
include deterministic stack automata (known with a lengthy proof in \cite{DetStackQuotient}), deterministic $k$-flip pushdown
automata (stated as an unresolved open problem in \cite{kflipquotient}), certain types of deterministic Turing machines, deterministic checking stack automata, and deterministic reversal-bounded queue automata. An alternate proof
of the result for deterministic pushdown automata that was
shown in \cite{GinsburgDPDAs} is also given.
This general closure is somewhat surprising given the determinism of the machines and the nondeterministic nature of deletion occurring with quotients. 

Finally, lower bounds are obtained on the space complexity of different types of Turing machines in order to have non-regular store languages.

\section{Notation}

An alphabet $\Sigma$ is a set of symbols (usually assumed to be finite
unless stated otherwise). The set of all words over $\Sigma$ is
denoted by $\Sigma^*$, and the set of all non-empty words is denoted by $\Sigma^+$.
A {\em language} $L$ over $\Sigma$ is any subset of $\Sigma^*$.
Given a word $w \in \Sigma^*$, the {\em length} of $w$ is denoted by $|w|$.
Given $a \in \Sigma$, then $|w|_a$ is the number of $a$'s in $w$.
The {\em empty word} is denoted by $\epsilon$. 
The {\em reverse} of a word $w$ is denoted by $w^R$, extended to the
reverse $L^R$ of a language $L$ in the natural way.
Given two languages $L_1,L_2$, the {\em left quotient} of $L_2$ by $L_1$, $L_1^{-1}L_2 = \{ y \mid xy \in L_2, x \in L_1\}$, and the {\em right quotient} of $L_1$ by $L_2$ is $L_1 L_2^{-1} = \{x \mid xy \in L_1, y \in L_2\}$.
A language $L\subseteq \Sigma^*$ is {\em letter-bounded} if there exists
(not necessarily distinct) $a_1, \ldots, a_l \in \Sigma$ such that
$L \subseteq a_1^* \cdots a_l^*$. A language $L$ is {\em bounded} if there
exists $w_1, \ldots, w_l \in \Sigma^*$ such that $L\subseteq w_1^* \cdots w_l^*$.
Given two words $u,v \in \Sigma^*$, $u$ is a {\em prefix} of $v$ if $v = ux$,
for some $x \in \Sigma^*$, $u$ is a {\em suffix} of $v$ if $v=xu$ for some 
$x \in \Sigma^*$, $u$ is an {\em infix} of $v$ if $v= xuy$, for some $x,y \in \Sigma^*$, and $u$ is a {\em subsequence} of $v$ if $v = x_0 u_1 x_1 \dots x_{n-1} u_n x_n, x_0, \ldots, x_n, u_1, \ldots, u_n \in \Sigma^*, u = u_1 \cdots u_n$.

In this paper, introductory knowledge of automata and 
formal language theory is assumed (see \cite{HU} for an introduction), including finite automata ($\NFA$s and $\DFA$s), pushdown
automata ($\NPDA$s), Turing machines ($\NTM$s and $\DTM$s), and
generalized sequential machines (gsms).
Let $\LFam(\REG)$ be the family of languages accepted by $\NFA$s.

\section{Store Languages of One-Way Machines}
\label{sec:oneway}

Many different kinds of machine models are studied in this paper, such as finite automata, pushdown automata \cite{HU}, reversal-bounded multicounter
machines \cite{Ibarra1978}, stack automata (similar to a pushdown automata with the ability to read, but not change on the inside of the pushdown) \cite{Stack2,StackAutomata}, Turing machines \cite{HU}, queue automata \cite{Harju2002278}, flip-pushdown automata (machines with the ability to flip the pushdown at most $k$ times) \cite{flipPushdown}, and 
also combinations of their stores within individual machines. The store language of each depends on 
the precise definition of each type of machine. It is possible to define all such models generally by varying 
the ``store type'' similar to Abstract Families of Automata \cite{G75} or storage types \cite{ENGELFRIET}, and then the store language only 
needs to be defined once for all types of machines. A similar approach is followed here due to the large number of machine models
considered, because it allows to make general connections between types of machines, and because store languages depend considerably on the precise
definition of the machines.

\begin{definition}
A {store type} is a tuple $\Omega = (\Gamma, I,f,g,c_0, L_{I})$, where
\begin{itemize}
\item $\Gamma$ is the set of store symbols (potentially infinite, available to all machines using this store),
\item $I$ is the set of instructions,
\item $f$ is the write function, a partial function from $\Gamma^* \times I$ to $\Gamma^*$,
\item $g$ is the read function, a partial function from $\Gamma^*$ to
$\Gamma$,
\item $c_0 \in \Gamma^*$ is the initial store configuration,
\item $L_{I} \subseteq I^*$ is the instruction language.
\end{itemize}
\end{definition}
Thus, a store type defines a type of auxiliary store. 
The write function $f$ indicates how each store contents change
in response to each instruction, and the read
function $g$ indicates how machines read from each store contents.
Every machine using this store type starts with $c_0$ on its store.
Lastly, $L_I$ is a type of filter that can restrict the allowable
sequences of instructions. This is useful for several purposes, such
as defining reversal-bounded store types.


\begin{example}
\label{pushdownstore}
The {\em pushdown} store type is 
$\Omega = (\Gamma, I,f,g,c_0, L_{I})$ where 
$c_0 = Z_0 \in \Gamma$, 
$\Gamma_0 = \Gamma - \{Z_0\}$ ($Z_0$ is the bottom-of-stack marker), 
$I = \Gamma^*$, $L_{I} = I^*$ (i.e.\ there is no restriction as to the possible sequences of instructions),
$g(xa)=a, x \in \Gamma^*, a \in \Gamma$, 
$f(xa,y) = xy$, where $y \in \Gamma^*, xa, xy \in Z_0 \Gamma_0^*$.
\end{example}
Intuitively, a pushdown store can read the rightmost symbol,
and can replace the rightmost symbol with any word; these
words are the instructions (which can be the empty word for popping).

The machines (defined next) using this store type are equivalent to standard pushdown automata \cite{HU}.

\begin{definition}
Given store types $\Omega_1, \ldots, \Omega_k$ with
$\Omega_i = (\Gamma_i,I_i,f_i,g_i,c_{0,i}, L_{I,i}), 1 \leq i \leq k$, where $\Gamma_i$ are pairwise disjoint, for $1\leq i \leq k$, a 
one-way nondeterministic $(\Omega_1, \ldots,\Omega_k)$-machine is a tuple
$M = (Q,\Sigma, \Gamma, \delta, q_0, F)$, where $Q$ is the
finite set of states, $\Sigma$ is the input alphabet,
$q_0 \in Q$ is the initial state, $F\subseteq Q$ is the set of final states, 
$\Gamma$ is a finite subset of $\Gamma_1 \cup \cdots \cup \Gamma_k$,
and
the finite transition relation $\delta$ is from 
$Q \times (\Sigma \cup \{\epsilon\}) \times \Gamma_1 \times \cdots \times \Gamma_k$ to
$Q \times I_1 \times \cdots \times I_k$.

A configuration of $M$ is a tuple $(q,w, \gamma_1, \ldots, \gamma_k)$, where
$q\in Q$ is the current state, $w \in \Sigma^*$ is the remaining input, 
and $\gamma_i \in \Gamma_i^*$ is the contents of the $i$'th store, for $1 \leq i \leq k$. 
The derivation
relation $\vdash_M$ is defined by: 
$(q,aw,\gamma_1, \ldots, \gamma_k) \vdash_M (q', w, \gamma_1', \ldots, \gamma_k'), 
w \in \Sigma^*,
a\in \Sigma \cup \{\epsilon\}, \gamma_i,\gamma_i' \in \Gamma_i^*, 1 \leq i \leq k,
q,q' \in Q$, if there exists 
\begin{equation}(q',\iota_1, \ldots,\iota_k ) \in \delta(q,a,d_1, \ldots, d_k), \label{transition}
\end{equation}
such
that $g_i(\gamma_i) = d_i, f(\gamma_i,\iota_i) = \gamma_i'$, for all $i$, $1 \leq i \leq k$.
This is extended to $\vdash_M^*$, the reflexive and transitive closure of $\vdash_M$.
Sometimes, bijective labels $T$ will be associated with transitions
of $M$, and in such cases, the derivation relation using
transition $t$ is sometimes written as $\vdash_M^t, t \in T$, 
generalized to
words $\vdash_M^x, x \in T^*$. For $1 \leq i \leq k$, define
a homomorphism $\pi_i$ from $T^*$ to $I_i^*$ where
$\pi_i(t) = \iota_i$ for $t$ of the form of (\ref{transition}).
Then $x \in T^*$ is valid if $\pi_i(x) \in L_{I,i},$ for each $i$, $1 \leq i \leq k$.

The {\em language accepted by $M$}, $L(M) = \{w \mid (q_0,w, c_{0,1}, \ldots, c_{0,k}) \vdash^x_M 
(q_f, \epsilon, \gamma_1, \ldots, \gamma_k), q_f \in F, w \in \Sigma^*, \gamma_i \in \Gamma_i, 1 \leq i \leq k, x \in T^* \mbox{~is valid}\}$. 

The {\em store language} of $M$, 
$$S(M) = \{q\gamma_1 \cdots \gamma_k \mid \begin{array}[t]{l}
(q_0, uv , c_{0,1}, \ldots, c_{0,k}) \vdash_M^{x} (q, v, \gamma_1, \ldots, \gamma_k )\vdash_M^{y} (q_f, \epsilon, \gamma_1', \ldots, \gamma_k'), \\
q_f \in F, u,v \in \Sigma^*, \gamma_i, \gamma_i' \in \Gamma_i, 1 \leq i \leq k, x,y \in T^*, xy \mbox{~is valid}\}.\end{array}$$
\end{definition}
Thus, $S(M)$ is the set of store configuration representatives that can appear as an intermediate step of an accepting computation. 
It is also enforced that, if $k>1$, then $\Gamma_1, \ldots, \Gamma_k$ are all disjoint. By a slight abuse 
of notation, machines with several tapes that have the same store type are
assumed to have disjoint tape alphabets.
Thus, since
the letters used in each component are disjoint, when reading
a string of $S(M)$, it is possible to know which of the $k$ stores
is being read.

\begin{definition}
For a given set of machines $\MFam$, let $\LFam(\MFam)$ be the family of languages accepted by machines in $\MFam$, and
$\SFam(\MFam)$ be the family of
store languages of machines in $\MFam$.
\end{definition}

Various types of one-way deterministic automata will also be studied
in this paper. These  machines are defined to scan input $w \lhd$, where $w \in \Sigma^*$, and $\lhd$ is the right end-marker
(this is needed for some types of machines such as $\DCM$ \cite{eDCM}). Then, a machine is deterministic if
$|\delta(q,a,d_1, \ldots, d_k) \cup \delta(q,\epsilon, d_1, \ldots, d_k)| \leq 1$ for all
$q \in Q, a \in \Sigma \cup \{\lhd\} , d_i \in \Gamma_i$, the language accepted by $M$, 
$L(M) = \{w \mid (q_0,w \lhd, c_{0,1}, \ldots, c_{0,k}) \vdash^x_M 
(q_f, \epsilon, \gamma_1, \ldots, \gamma_k), q_f \in F, w \in \Sigma^*, \gamma_i \in \Gamma_i^*, 1 \leq i \leq k, x \mbox{~is valid}\}$, and the store language of $M$,
$S(M) = \{q \gamma_1 \cdots \gamma_k \mid (q_0,w \lhd, c_{0,1}, \ldots, c_{0,k}) \vdash^x_M (q, w', \gamma_1, \ldots, \gamma_k) \vdash_M^y
(q_f, \epsilon, \gamma_1', \ldots, \gamma_k'), q_f \in F, w \in \Sigma^*, w' \in \Sigma^*\lhd \cup \{\epsilon\} \gamma_i, \gamma_i' \in \Gamma_i^*, 1 \leq i \leq k, x,y \in T^*, xy \mbox{~is valid}\}$.

Given store types $\Omega_1, \ldots, \Omega_k$, the set of all one-way
nondeterministic, or deterministic, 
$(\Omega_1, \ldots, \Omega_k)$-machines that can be built using this store type will be examined.


In this notation, store types that are equivalent to standard automata models from the literature will be described. 
Indeed, pushdown automata are machines that start with $Z_0$ on the
pushdown, can read the top of the pushdown on its transitions, and
replace the topmost letter with a word. These correspond with $\Omega$-machines as built with the pushdown store type of Example \ref{pushdownstore}.
Let $\NPDA$ be the set of all such pushdown automata.

An $l$-reversal-bounded pushdown store is the same except $L_{I}$ is set to the concatenation of $l$ alternating sequences of
$(\{y \mid y \in \Gamma^*, |y| \geq 1\})^*$ and $(\{y \mid y \in \Gamma^*, |y| \leq 1\})^*$ i.e.\ there are at most $l$ alternations between
non-decreasing and non-increasing the size of the stack.

A counter store type restricts a pushdown store to having a single symbol $c \in \Gamma_0$ (plus $Z_0$). 
At each step, essentially based on whether
the counter is empty or non-empty, a machine can change each counter by $+1,0$, or $-1$.
Similarly, $l$-reversal-bounded counters can be defined exactly like $l$-reversal-bounded
pushdowns.

One can consider machines with $k$ $l$-reversal-bounded counters. The set of all machines that have $k$ $l$-reversal-bounded counters,
for some $k,l \geq 1$ is denoted by $\NCM$. 
Note that $\NCM$ is a union of sets of machines that can be built using store types.


\begin{example}
A queue store type is a tuple $\Omega=(\Gamma,I,f,g,c_0, L_{I})$, where
$I = \{{\rm enqueue}(y) \mid y \in \Gamma^*\} \cup \{{\rm dequeue}\}$,
$c_0 = \epsilon$, $L_{I} = I^*$, $g(x)$ is $\epsilon$ if $x = \epsilon$, and the leftmost symbol of $x$ otherwise, $f(x,{\rm dequeue}) = \Gamma^{-1} x$,  and
$f(x,{\rm enqueue}(y)) = xy$.
\end{example}

\begin{example}
\label{flipdefinition}
The $k$-flip-pushdown store type is a tuple
$\Omega = (\Gamma,I,f,g,c_0, L_{I})$, where $c_0 = Z_0$,
$\Gamma_0 = \Gamma - \{Z_0\}$,
$I = \Gamma^* \cup \{{\rm flip}\}$, 
$g(xa)=a, x \in \Gamma^*,a \in \Gamma$, $f(xa,y) = xy$,
where $y \in \Gamma^*, xa, xy \in Z_0 \Gamma_0^*$, and
$f(x,{\rm flip}) = Z_0(Z_0^{-1}x)^R,  x \in \Gamma^*$ (the pushdown above the end-marker flips), and
$L_{I}$ restricts at most $k$ flip moves to be applied.

For example, consider a machine $M = (Q, \Sigma, \Gamma,\delta,q_0,F),
\Sigma = \{a,b ,\$\}, F = \{q_2\}$, with transitions
\begin{align*}
& \delta(q_0,c,d) = \{(q_0,dc) \}, \forall d \in \{Z_0,a,b \}, c \in \{a,b\}, && \delta(q_0,\$, c) = \{(q_1,{\rm flip}) \}, \forall c \in \{a,b\},\\
& \delta(q_1,c,c) = \{(q_1, \epsilon)\}, \forall c\in \{a,b\}, &&  \delta(q_1, \epsilon, Z_0) = \{(q_2,Z_0)\}.
\end{align*}
In every accepting computation, $M$ must push all contents $w$ onto the stack until $\$$, making a stack of $Z_0w$,
where a flip occurs producing $Z_0 w^R$. Then the contents are popped and matched to the input. Thus,
$L(M) = \{w \$ w \mid w \in \{a,b\}^+\}$.
\end{example}
Machines defined with this type are equivalent to those
in \cite{flipPushdown}, although they are classically defined
where the flips are performed with a separate function.

Next, we define stacks similarly to the definition in \cite{Stack2,G75}.
They are defined like pushdowns, but there are also instructions
to enter the inside of the stack in a two-way read-only fashion.
\begin{example}
The stack store type is a tuple 
$\Omega = (\Gamma, I,f,g,c_0, L_I)$,
where 
$\Gamma$ is an infinite set of store
symbols available to stacks, with special symbols $\downarrow \in \Gamma$
that gives the position of the read/write head in the stack, $Z_b\in \Gamma$ is the bottom-of-stack marker, and $Z_t \in \Gamma$ is the top-of-stack marker, with
$\Gamma_0 = \Gamma - \{\downarrow\}$, $\Gamma_1 = \Gamma_0 - \{Z_b,Z_t\}$,
$I = \Gamma_0^* \cup \{{\rm D}, {\rm S}, {\rm U}\}$
is the set of instructions of the stack, where
the first set are changing the top symbol of the stack, 
and the rest (down, stay, or up) move the read/write head inside the stack,
$L_I = I^*$, $c_0 = Z_b \downarrow Z_t$,
$g( x a \downarrow  x') = a, 
a \in \Gamma_0, x,x'\in \Gamma_0^*$ with 
$xax' \in Z_b \Gamma_1^* Z_t$,
and $f$ is defined as:
\begin{itemize}
\item $f(x a\downarrow Z_t, y) = x y \downarrow Z_t$ for $x,y \in \Gamma_0^*,a \in \Gamma_0,xa,xy \in Z_b \Gamma_1^*$,
\item $f(Z_b x a \downarrow x'   , {\rm D}) = Z_b x \downarrow a x'$,
for $x, x' \in \Gamma_0^*, a \in \Gamma_1 \cup \{Z_t\}$, with $xax' \in \Gamma_1^* Z_t$,
\item $f(Z_b x \downarrow x', {\rm S}) = Z_b x\downarrow x'$, for
$x,x' \in \Gamma_0^*, x x' \in \Gamma_1^* Z_t$,
\item $f(Z_b x \downarrow a x', {\rm U}) = Z_b x a \downarrow x'$,
for $x,x' \in \Gamma_0^*, a \in \Gamma_1 \cup \{Z_t\}, xax' \in \Gamma_1^* Z_t$.
\end{itemize}
Also, the checking stack store type is a restriction of stack store type above where $L_I$ is restricted to be in 
 $y \mid y \in \Gamma_0^+\}^* \{{\rm D}, {\rm S}, {\rm U}\}^*$. 
 That is, a checking stack has two phases, a ``writing phase'',
 where it can push or stay (no pop), and then a ``reading phase'', where it
 enters the stack in read-only mode. But once it starts reading, it cannot change the stack again. 
\end{example}
Stacks require that the 
read/write head be included in the store language in a similar fashion to Turing tapes,
as defined next:

\begin{example}
A Turing store is a tuple $\Omega = (\Gamma,I,f,g,c_0, L_{I})$
where $c_0 = \blank\rw$ ($\rw$ is the read/write head that reads the symbol before it, and $\blank$
is the blank symbol, so each tape initially only has a blank followed by the read/write head, $\Gamma_0 = \Gamma - \{\downarrow\},
\Gamma_1 = \Gamma_0 - \{\blank\}$), 
$I = \{a^{\leftarrow}, a^{\rightarrow}, a \mid a \in \Gamma_0\}$ (this groups together both the new symbol written in the current tape cell, and the direction for the head to move), $L_{I} = I^*$,
$g(x b \rw x') = b, b \in \Gamma_0, x,x' \in \Gamma_0^*$, $x$ does not start with $\blank$, and
$x'$ does not end with $\blank$. 
At each step, a machine with this store can read the symbol under
the read/write head, and execute an instruction which 
corresponds to a standard Turing machine instruction, writing
an $a$ in the current cell and moving left, right, or staying. 
The write function is defined by, for all $x \in \Gamma_1 \Gamma_0^* \cup \{\epsilon\}, x' \in \Gamma_0^* \Gamma_1 \cup \{\epsilon\}, a,b \in \Gamma_0$:
\begin{itemize}
\item $f(x b\rw x', a) = x a\rw  x'$,
\item $f(x b \rw x', a^{\leftarrow}) = 
\begin{cases}
\blank \rw a x' & \mbox{~if~} a x' \notin \blank^* \mbox{~and~} x=\epsilon,\\ 
x \rw a x' & \mbox{~if~} a x' \notin \blank^* \mbox{~and~} x\neq \epsilon,\\ 
\blank \rw & \mbox{~if~} a x' \in \blank^* \mbox{~and~} x=\epsilon,\\ 
x \rw & \mbox{~if~} a x' \in \blank^* \mbox{~and~} x\neq \epsilon,
\end{cases}$
\item $f(x b\rw  x', a^{\rightarrow}) = 
\begin{cases}
x a c \rw x'' & \mbox{~if~} xa \notin \blank^*, x' = cx'', c \in \Gamma_0, \\ 
c \rw x'' & \mbox{~if~} xa \in \blank^*, x' = cx'', c \in \Gamma_0, \\ 
x a \blank \rw & \mbox{~if~} xa \notin \blank^*, x' = \epsilon, \\ 
\blank \rw  & \mbox{~if~} xa \in \blank^*, x' = \epsilon. 
\end{cases}$
\end{itemize}
A machine with one such store type is a Turing machine
with a one-way read-only input tape, and one read/write store tape. The store starts
off empty (a blank followed by the read/write head), and they can
extend in both directions as symbols are added to the
left and right. They can also shrink in size if everything to the right of the
read/write head is a blank, as with the left. This is exactly how
configurations of Turing machines change \cite{HU}.
Furthermore, $l$-reversal-bounded 1-tape Turing stores can be defined by restricting $L_{I}$ so that the number of alternations
between moving right and left on the tape is at most $l$.

For example, consider a deterministic machine $M = (Q,\Sigma,\Gamma,\delta,q_0,F)$ with a $2$-reversal-bounded Turing store accepting
$\{w \$ w \mid w \in \{a,b\}^+\}$, with $\Sigma = \{a,b\}, \Gamma = \{a,b,\blank,\rw\},  F = \{q_4\}$, and $\delta$ is as follows:
\begin{align*}
& \delta(q_0,c,\blank) = \{(q_1, c^{\rightarrow})\}, \forall c \in\{a,b\}, && \delta(q_1,c,\blank) = \{(q_1, c^{\rightarrow})\}, \forall c \in\{a,b\},\\
& \delta(q_1,\$,\blank) = \{(q_2,\blank^{\leftarrow})\} && \delta(q_2,\epsilon,c) = \{q_2,c^{\leftarrow})\}, \forall c \in \{a,b\},\\
& \delta(q_2,\epsilon,\blank) = \{(q_3,\blank^{\rightarrow})\}, && \delta(q_3,c,c) = \{(q_3,c^{\rightarrow})\}, \forall c \in \{a,b\},\\
& \delta(q_3,\lhd,\blank) = \{(q_4,\blank)\}.
\end{align*}
Despite $L(M)$ being non-context-free, the store language
$$
S(M) = \begin{array}[t]{l}\{q_4 x  \blank \rw\ \mid x \in \{a,b\}^+\} \cup \{q_3 x_1 \rw x_2 \mid \mbox{either~} x_1 \in \{a,b\}^+, x_2 \in \{a,b\}^* \mbox{~or~} x_1 \in \{a,b\}^+\blank, x_2 = \epsilon\} \cup\\
\{q_2 x_1 \rw x_2 \mid \mbox{~either~} x_1 \in \{a,b\}^+, x_2 \in \{a,b\}^* \mbox{~or~} x_1 = \blank, x_2 \in \{a,b\}^+\} \cup \{q_1 x_1 \blank \rw\ \mid x_1 \in \{a,b\}^+\} \\
 \cup \{q_0 \blank\rw\}, \end{array}
 $$
 which is a regular language.
\end{example}

Define machines with one pushdown
for the first tape, and $k$ additional reversal-bounded counters,
where each word in the store language is of the form 
$q x c_1^{j_1} \cdots c_k^{j_k}$, where $q$ is a state, $x$ is the contents
of the pushdown, and $j_1, \ldots, j_k$ are the contents of the counters.
Let $\NPCM$ be the set of machines with one pushdown, and 
some number $k$ of counters where
the pushdown is unrestricted, but the counters are reversal-bounded.
The family of languages accepted by $\NPCM$  
\cite{Ibarra1978,InsertionOperationsJCSS} is of interest since it has a
decidable emptiness and membership problem, and only accepts semilinear
languages.

Let $\NQA$ be the set of queue automata \cite{Harju2002278}.
As with $\NPCM$, define machines with one queue for
the first tape, and $k$ additional counters. Let $\NQCM$ be the set
of machines with one queue, and some number $k$ of counters
where the counters are reversal-bounded (if the queue is also
reversal-bounded, these
only accept semilinear languages \cite{Harju2002278}, otherwise
they have the same power as Turing machines).
Let $\NSA$ be the set of stack automata \cite{StackAutomata,Stack2}.
Also, define machines with one stack for
the first tape, and $k$ additional counters. Let $\NSCM$ be the set
of machines with one stack, and some number $k$ of counters
where the counters are reversal-bounded (if the stack is reversal-bounded,
this implies that there is also a bound on the number of changes in direction of the read head when it reads
inside the stack structure).
Let $\NFPA$ be the set of $k$-flip pushdown machines, for some $k$ \cite{flipPushdown}. Replacing N with D gives each deterministic
variant.


\subsection{Store Languages of Turing Machines and Other One-Way Automata Models}
\label{onewayTMs}

Store languages have already been investigated for nondeterministic pushdown
automata.
It has been shown \cite{CFHandbook,GreibachCFStore} that the store language of each $\NPDA$ is a regular language. Moreover, the proof contains an effective construction. 
\begin{proposition} \cite{CFHandbook,GreibachCFStore}
Given a one-way $\NPDA$ $M$, $S(M)$ is a regular language, and
$\SFam(\NPDA) \subseteq \LFam(\REG)$.
\label{NPDAStorage}
\end{proposition}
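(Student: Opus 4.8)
The plan is to decompose the store language by state and, for each fixed state, to separate the forward (reachability) condition from the backward (co-reachability) condition. Write a pushdown store as $Z_0 A_1 \cdots A_n$ with the bottom marker $Z_0$ on the left and the top symbol $A_n$ on the right, matching the store type of Example~\ref{pushdownstore}. For a state $q$, let $\acc_M(q)$ be the set of stores $\gamma$ with $(q_0, u, Z_0) \vdash_M^* (q, \epsilon, \gamma)$ for some $u \in \Sigma^*$, and let $\coacc_M(q)$ be the set of stores $\gamma$ with $(q, v, \gamma) \vdash_M^* (q_f, \epsilon, \gamma')$ for some $v \in \Sigma^*$, $q_f \in F$, and $\gamma'$. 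By definition $S(M) = \bigcup_{q \in Q} q\,(\acc_M(q) \cap \coacc_M(q))$. Since $Q$ is finite and the regular languages are closed under intersection, union, and prepending a fixed letter, it suffices to prove that each $\acc_M(q)$ and each $\coacc_M(q)$ is regular.

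First I would handle $\coacc_M(q)$, the easier direction. The key object is the finite relation $\mathrm{Pop} \subseteq Q \times \Gamma \times Q$, where $(p, A, p') \in \mathrm{Pop}$ means $(p, u, \alpha A) \vdash_M^* (p', \epsilon, \alpha)$ for some input $u$ and some (equivalently, every) $\alpha$, i.e. $M$ can read some input, erase a top $A$ while never descending below it, entering the pop in state $p$ and finishing in state $p'$. This is exactly the relation underlying the standard PDA-to-grammar ``triple'' construction, and it is computable by a fixed-point iteration: $(p, A, p')$ is added whenever a transition rewrites $A$ by a string $B_1 \cdots B_m$ and there are witnessing states for the successive popping of $B_1, \ldots, B_m$ already present. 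Reaching a final state from $(q, \gamma)$ means either reaching one without ever reading below the current top (a predicate depending only on the state and the top symbol, read off from $\mathrm{Pop}$), or popping the top via $\mathrm{Pop}$ and recursing on the shorter store. This yields an NFA scanning $\gamma$ from the top symbol down to $Z_0$ and verifying co-accessibility; since the regular languages are closed under reversal, $\coacc_M(q)$ is regular.

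The harder direction is $\acc_M(q)$, where I must recognise, from a static store word, whether some computation actually produces it. I would precompute the finite ``balanced'' relation $\mathrm{Bal} \subseteq Q \times \Gamma \times Q$, where $(p, A, p')$ holds when $(p, u, \beta A) \vdash_M^* (p', \epsilon, \beta A)$ with the stack never dropping below $\beta A$, so that material may be pushed above $A$ and fully popped again, returning to top $A$ with the region below untouched; like $\mathrm{Pop}$, this is obtained by a fixed point over the finitely many triples. A store $Z_0 A_1 \cdots A_n$ is reachable in state $q$ exactly when there are states $s_0, \ldots, s_n = q$ with $(q_0, Z_0, s_0) \in \mathrm{Bal}$ and, for each $i$, a balanced computation from $s_{i-1}$ keeping $A_{i-1}$ on top followed by a single rewriting transition that freezes the region up to $A_{i-1}$ forever and exposes $A_i$ in state $s_i$. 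This condition is local and left-to-right, hence checkable by an NFA scanning $\gamma$ from $Z_0$ upward whose states track the simulated state $s_i$.

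The main obstacle, and the only genuinely delicate point, is that a pushdown transition rewrites the top symbol by an arbitrary finite string rather than pushing one symbol at a time; consequently the symbols $A_1 \cdots A_n$ are not produced one per move, and a symbol below the top may itself have been written (with its neighbours rewritten) by transitions occurring after lower symbols were frozen. Handling this cleanly requires defining the ``frozen prefix'' relations so that a block of several consecutive final symbols produced by one transition, together with any intervening balanced subcomputations that rewrite the temporary top, is absorbed into a bounded number of NFA moves; because transition right-hand sides have bounded length and $\mathrm{Bal}$, $\mathrm{Pop}$ range over finite sets, this absorption keeps the automaton finite. Once $\acc_M(q)$ and $\coacc_M(q)$ are regular, the decomposition of the first paragraph together with the stated closure properties gives that $S(M)$ is regular, and hence $\SFam(\NPDA) \subseteq \LFam(\REG)$. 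I expect the bookkeeping for the accessibility automaton to be the bulk of the work, while co-accessibility and the final assembly are routine.
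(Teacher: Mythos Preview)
The paper does not give its own proof of this proposition; it is quoted as a known result with a citation to \cite{CFHandbook} (Autebert--Berstel--Boasson), and only the remark that ``the proof contains an effective construction'' is added. Your approach---decomposing $S(M)=\bigcup_{q\in Q} q\,(\acc_M(q)\cap\coacc_M(q))$ and proving each $\acc_M(q)$ and $\coacc_M(q)$ regular via the finite ``pop'' and ``balanced'' relations---is exactly the standard argument from that reference. Indeed, the paper explicitly invokes this very decomposition later, in the proof for $k$-flip pushdown automata, where it writes ``It is known that for $\NPDA$s $M$, both $\acc(q)$ and $\coacc(q)$ are regular \cite{CFHandbook}, and therefore $S(M)$ is as well.'' So your proposal reconstructs the cited proof rather than departing from anything the paper itself does.

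Your outline is correct. The co-accessibility direction is routine, as you note. For accessibility, the characterisation you give---tracking a sequence of states $s_0,\ldots,s_n$ with balanced phases interleaved with transitions that permanently freeze successive stack cells---is the right idea, and your identification of multi-symbol pushes as the delicate point is accurate: a single transition replacing the top $A$ by $B_1\cdots B_m$ may freeze several consecutive cells simultaneously, so the NFA must be allowed to consume a bounded block of store symbols per simulated transition (bounded by the maximum right-hand-side length in $\delta$), interspersed with $\mathrm{Bal}$-labelled $\epsilon$-moves. Once that is set up the construction is finite and the argument closes as you describe.
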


First, a general decidability proposition is proved for
machine models where the emptiness problem is undecidable.
\begin{proposition}
Let $\MFam$ be a set of machines defined using (potentially multiple)
store types, such that the emptiness problem
is undecidable for $M \in \MFam$. Then the problem, given $M \in \MFam$
and a word $x$,
determine whether $x \in S(M)$, is undecidable.
\end{proposition}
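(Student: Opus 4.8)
The plan is to give a computable many-one reduction from the emptiness problem for $\MFam$ (undecidable by hypothesis) to the store-membership problem, so that undecidability of the latter follows immediately. The guiding observation is that the \emph{initial} store configuration sits at the front of every accepting computation, and therefore belongs to $S(M)$ precisely when $M$ has some accepting computation at all, i.e.\ when $L(M) \neq \emptyset$.

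Concretely, given $M = (Q,\Sigma,\Gamma,\delta,q_0,F) \in \MFam$ built from store types $\Omega_1, \ldots, \Omega_k$ with fixed initial store contents $c_{0,1}, \ldots, c_{0,k}$, I would form the word $w_0 = q_0 c_{0,1} \cdots c_{0,k}$. This word is effectively computable from the description of $M$, since each $c_{0,i}$ is the fixed initial configuration determined by the store type $\Omega_i$ that $M$ uses. The reduction sends $M$ to the instance $(M, w_0)$, and I claim that $w_0 \in S(M)$ if and only if $L(M) \neq \emptyset$.

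To verify the claim, I would instantiate the definition of $S(M)$ with the empty prefix derivation (the label $x = \epsilon$), so that $q = q_0$, $\gamma_i = c_{0,i}$, and $u = \epsilon$; this is permitted since $\vdash_M^*$ is reflexive. Then $w_0 \in S(M)$ holds exactly when there exist $v \in \Sigma^*$ and a continuation $(q_0, v, c_{0,1}, \ldots, c_{0,k}) \vdash_M^{y} (q_f, \epsilon, \gamma_1', \ldots, \gamma_k')$ with $q_f \in F$ and $y$ valid. Such a continuation exists precisely when some input $v$ is accepted by $M$, that is, when $L(M) \neq \emptyset$. Both directions of the equivalence follow directly from this reading.

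Since the map $M \mapsto (M, w_0)$ is computable and reduces emptiness to store membership, and emptiness is undecidable for $\MFam$, the store-membership problem is undecidable as well. The argument is essentially immediate, so there is no real obstacle; the only point needing care is confirming that the initial store configuration word $w_0$ genuinely qualifies as a member of $S(M)$ under its definition whenever an accepting computation exists, which is exactly the purpose of taking the empty prefix derivation above.
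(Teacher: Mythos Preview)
Your proposal is correct and takes essentially the same approach as the paper: both reduce the emptiness problem to store-membership by observing that the initial store configuration $q_0 c_{0,1}\cdots c_{0,k}$ lies in $S(M)$ if and only if $L(M)\neq\emptyset$. Your write-up is more explicit about the reduction and the instantiation via the empty prefix derivation, but the underlying idea is identical.
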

\begin{proof}
Let $M \in \MFam$ be a machine with initial state $q_0$ and 
initial store contents $z$ (which can be the concatenation of multiple
initial store contents for multi-store machines). 
Then $q_0 z$ is in the store language of $M$ if and only
if $L(M)$ is not empty. 

Hence, membership in $S(M)$ for $M \in \MFam$ is undecidable.
\qed
\end{proof}
This is true for sets of one-way machines, and also two-way machines
investigated later in the paper.
And in fact, it even holds for complexity classes, such as 
deterministic Turing machines with a one-way read-only input tape
and a logspace bounded worktape (the store).
These have a decidable membership problem but an undecidable emptiness
problem. Despite the languages accepted by these machines
being recursive (and in $\P$), membership in the store language
is undecidable (and so there cannot be an effective construction
to accept the store languages with another model, such as any
model with a decidable membership problem).

Next, store languages of restricted $\NTM$s will be studied.
They will be especially useful for characterizing store languages of other machine models.
In particular, $\NTM$s with a one-way read-only input tape
 and one reversal-bounded
read/write worktape are considered. In terms of languages accepted, these machines are powerful enough to simulate a number
of different machine models exactly, such as one-way nondeterministic reversal-bounded pushdown automata, reversal-bounded
queue automata, reversal-bounded stack automata, and reversal-bounded $k$-flip pushdown automata, where the worktape
acts exactly like the other stores. 

Next, the store languages of these Turing machines are examined. Although $\NTM$s in
general have non-regular store languages (investigated in Section \ref{sec:bounds}), when there is only one worktape, and it is reversal-bounded,
the store languages are always regular.
\begin{proposition} 
\label{TMReg}
Let $M= (Q,\Sigma,\Gamma,\delta, q_0,F)$ be an $\NTM$ with a one-way read-only input tape and a reversal-bounded
read/write worktape.  Then $S(M) \in \LFam(\REG)$.
\end{proposition}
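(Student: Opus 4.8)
The plan is to reduce the problem to reachability of worktape configurations and then exploit the reversal bound through a crossing-sequence argument. First observe that in the definition of $S(M)$ the input word $uv$ is existentially quantified, so for the purpose of deciding membership in the store language the one-way input is irrelevant: as far as the input is concerned a transition is always applicable (we may read any symbol or $\epsilon$). Hence $M$ may be treated as an autonomous device acting on its reversal-bounded worktape. Writing $\acc_M(q)$ for the set of worktape configurations $\gamma = x\rw b x'$ reachable in state $q$ from the initial configuration $(q_0,\rw\blank)$, and $\coacc_M(q)$ for those $\gamma$ from which some accepting configuration is reachable, we have $S(M) = \bigcup_{q \in Q} q\,(\acc_M(q) \cap \coacc_M(q))$. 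Since $\LFam(\REG)$ is closed under finite union, intersection, and prepending a fixed symbol, it suffices to prove that each $\acc_M(q)$ and each $\coacc_M(q)$ is regular. Both are handled by the same construction below, with the roles of the initial and terminal boundary conditions interchanged, so I describe it for $\acc_M(q)$.

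The key consequence of the reversal bound is that the worktape is \emph{finite-crossing}: because the head changes direction at most $l$ times during the whole computation, it crosses the boundary between any two adjacent cells at most $l+1$ times. Associate to each such boundary its \emph{crossing sequence}, the finite list recording, in temporal order, the state and the direction in which the head crossed. As there are finitely many states and each crossing sequence has length at most $l+1$, only finitely many crossing sequences can occur. I would then build an $\NFA$ that reads a candidate configuration $\gamma = x\rw b x'$ from left to right and, between consecutive tape symbols, nondeterministically guesses the crossing sequence of the corresponding boundary, carrying the most recent guess in its finite control.

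The verification is purely local at each cell. Given the guessed crossing sequences on the left and right boundaries of a cell, the interleaving of entries and exits is fully determined, and this interleaving dictates a succession of visits during which the head reads the current contents of the cell and overwrites it; I would check that this succession is consistent with $\delta$ and that the \emph{last} value written to the cell equals the symbol that $\gamma$ displays at that position (including the $\rw$ marker at the distinguished cell). Global conditions pin down the endpoints: the cell guessed to be the initial one must begin its script with the head present in state $q_0$ over a blank, the marked cell must end its script with the head resting on it in state $q$, adjacent boundaries must have matching incoming and outgoing directions, and boundaries beyond the visited region carry the empty crossing sequence. Because the amount of information tracked at any moment is finite (one bounded-length crossing sequence plus finitely many flags), this device is a genuine $\NFA$, so $\acc_M(q)$, and hence $S(M)$, is regular.

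The main obstacle is the faithful bookkeeping of the \emph{evolving} cell contents through the crossing sequence: unlike a read-only two-way automaton, here each visit may rewrite the cell, so the content seen on the $i$th entry is the value left by the $(i-1)$st exit, and only the final write is what appears in $\gamma$. Getting the local consistency relation exactly right — so that the guessed per-boundary crossing sequences splice together into a single globally valid computation with the correct initial cell, the correct terminal cell and state, and no spurious disconnected pieces — is the delicate part. The remaining issues, namely the two-way growth of the tape at its ends and the trimming of blank cells required by the canonical store representation, I would handle by treating the extremal cells and the empty-tape case separately.
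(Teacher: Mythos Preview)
Your approach is correct and genuinely different from the paper's. The paper encodes the candidate configuration alongside snapshots of the worktape at each reversal point as parallel tracks of a single string, then runs a two-way $\NFA$ that verifies each track arises from the previous one by a single sweep of $M$ on guessed input, checking at a nondeterministically chosen sweep that the first track lies between two consecutive snapshots; a gsm then projects out the first track and strips the blank padding. You instead observe that reversal-boundedness forces the worktape to be finite-crossing and run a crossing-sequence construction, adapted so that the per-cell consistency check threads the bounded chain of overwrites rather than a fixed symbol. Your route builds a one-way $\NFA$ directly and is arguably more classical; the paper's snapshot encoding sidesteps your edge-handling for cells that were visited but later trimmed to blanks (those cells simply appear as blank padding on every track), and it carries over almost verbatim to the counter-augmented model in the next proposition, where the two-way $\NFA$ becomes a finite-crossing $2\NCM$. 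One detail in your outline needs tightening: the decomposition $S(M)=\bigcup_q q(\acc_M(q)\cap\coacc_M(q))$ overgenerates as written, because validity is imposed on the concatenation $xy$, not on $x$ and $y$ separately, so a configuration reachable in $l$ reversals and co-reachable in $l$ reversals need not lie on any single $l$-reversal accepting run; you should refine by the reversal budget spent in each half and the head direction at the splice point, which is still a finite union of regular sets.
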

\begin{proof} 
Let $M$ make at most $l$ reversals on the worktape.  
Note that $L(M) \subseteq \Sigma^*$, and $S(M) \subseteq Q \Gamma^*$.
Let $\Gamma_0 = \Gamma - \{\rw\}$, and $\Gamma_0' = \{a' \mid a \in \Gamma_0\}$, a new alphabet (each letter is a ``primed''
version of a letter in $\Gamma_0$, including a primed version of the blank symbol $\blank$). 
Define a new alphabet $C$ whose symbols have ``tracks'', with less than or equal
to $(l+2)$-tracks 
of the form $(a_1,  a_2, \ldots, a_p), p \leq l+2$  where
$a_1$ is in $\Gamma_0 \cup \Gamma_0' \cup Q$ and $a_i \in \Gamma_0 \cup \Gamma_0'$, for each $i$, $2 \leq i \leq p$. 

An intermediate 2-way $\NFA$ $M'$ is constructed,
whose input is in $C^*$  delimited 
with end-markers $\rhd$ and $\lhd$. 
Thus the input to $M'$ is $\rhd w \lhd$, where 
$w \in C^*$.   The input $w$ can be thought of as 
having less than or equal to $l+2$ tracks.

Intuitively, $M'$ is trying to verify that the contents of the first track represents a configuration in
an accepting computation of $M$, where a symbol $a' \in \Gamma_0'$ is
used in place of $a\rw $ in the store language. To do this, $M'$ nondeterministically guesses an input $x\in \Sigma^*$
and simulates $M$ on $x$, track $2$ is verified to be the initial
store contents, each track from tracks $3$ to $p-1$ is verified to be
the store contents at a point of reversal, and track $p$ is verified
to be a final accepting configuration. All tracks are padded by
blank symbols to all be of the same length.

 Then 
$M'$ operates as follows:
\begin{enumerate}
\item The first track is verified to contain a word $\blank^n q w' \blank^l$, where
$q \in Q, n,l \geq 0$,
$w'  \in \Gamma_0^* \Gamma_0' \Gamma_0^*$ 
that does not start or end with $\blank$ 
(below, $M'$ will verify that the word obtained from $qw'$ by replacing
$a'$ with $a \rw$ is in $S(M)$).

\item $M'$ goes to the left end-marker $\rhd$.  $M'$ checks that the 
second track contains $\blank^m \blank' \blank^r $  for some $m,r \geq 0$ (the worktape
starts off with only blanks, 
it is implied that all tracks are of length $m + r + 1 = n+ l + |w'| +1$).

\item \label{simulationstep} $M'$ then simulates the $\NTM$ $M$ on a guessed input $x\in \Sigma^*$, letter-by-letter, but instead of writing, verifies
the next track contents is an updated version of the current
track at the next point of reversal.
Between the initial configuration and the first reversal, between every two reversals, and between the last reversal and the final configuration,
$M'$ checks that the contents
of  track $i+1$ is the updated contents of the worktape from the worktape
on track $i$ for $i = 2, \ldots, p-1$. 
To do this, if track $i$ contains
$\blank^{\alpha} x a' y \blank^{\beta}, x,y \in \Gamma_0^*, a \in \Gamma_0'$,
and say $i$ is even (the case is similar if $i$ is odd), then it is verified that track $i+1$ starts with
$\blank^{\alpha}x$. Then, if $M$ uses a transition that replaces $a$
with $b$ and moves right on the worktape, then track $i+1$ has $b$ next (a sequence
of transitions that stays on the same storage cell are remembered
in the finite control), and this simulation continues until
$M$ makes a reversal. When 
$M$ makes a reversal, say from moving right to left, $M'$  first
``marks'' the point of reversal by reading a primed symbol
in that position of track $i+1$ (storing the read/write head in track $i+1$), 
then it moves to the right end-marker and checks that each symbol in track
$i$ from the point of reversal to the right end-marker matches the
 symbols in track $i+1$ (this also implies that track $i+1$
 has exactly one symbol from $\Gamma_0'$).  $M'$  then moves left back to the
point of reversal (which is retrievable from the primed symbol), 
and resumes the simulation using the next track from the current track.

\item At some nondeterministically guessed reversal of the simulation
as described in step \ref{simulationstep} (say
while scanning track $i$, and track $i+1$ reverses from left to right),
while $M'$ is verifying that track $i+1$ follows from track $i$, in parallel,
$M'$ verifies that the contents of track 1 is a configuration
of the Turing machine between these two reversals. 
To do this, $M'$ remembers the state $q$ on track 1, then compares track $1$ to track $i+1$ symbol-by-symbol, 
until reaching the read/write head in track 1, where the current
state of the simulated machine is verified to be $q$ and
the remaining part of track $1$ is verified to be the same as track $i$. Thus, track
$1$ is a configuration between tracks $i$ and $i+1$. $M'$ then continues
the simulation as in step 3.
\end{enumerate}

Since $M'$ can only read and not write (on the input), it just verifies
the moves and that the changes in the symbols of the $\NTM$ $M$ on track $i$ 
are reflected in the $i+1$st track.
$M'$ accepts if $M$ accepts and step 4 above has been successful.

It is known that 2-way $\NFA$s accept only regular languages
\cite{HU}.
Then, apply a gsm \cite{HU} to extract just the word $w'$ from the first track, erasing blanks appropriately,
and replacing any symbol $a' \in \Gamma'$ by $a\rw $.
Since regular languages are closed under gsm mappings, the result follows.
\qed \end{proof}

As $\NTM$s with a reversal-bounded worktape only give regular store languages, generalizations of
these $\NTM$s that still have a decidable emptiness problem are also of interest.
In \cite{Harju2002278}, it was shown that such $\NTM$s augmented by reversal-bounded
counters have a decidable emptiness problem. Therefore, understanding the store
languages of this model is valuable, which is studied next.
The proof
uses $2\NCM$s, which are two-way nondeterministic reversal-bounded multicounter
machines, together with a similar technique as in the proof of Proposition
\ref{TMReg}, ultimately determining that the store languages
are accepted by one-way $\NCM$ machines.

\begin{proposition}
\label{TMCounters}
Let $M$ be an $\NTM$ with a one-way read-only input tape, a reversal-bounded
read/write worktape, and $k$ reversal-bounded counters. Then $S(M) \in \LFam(\NCM)$.
\end{proposition}
\begin{proof} 
Here, the store consists
of the state, read/write tape, and the values of the counters.
The proof of Proposition \ref{TMReg} is generalized (using the
same alphabets).

Construct an intermediate $2\NCM$ $M'$ that is reversal-bounded on the input tape, with $2k$ reversal-bounded counters.

\begin{enumerate}
\item $M'$ will have as input $z =  w c_1^{i_1} \cdots c_k^{i_k}$  
(with end-markers)
where $w$ has multiple read-only tracks (over the alphabet $C$, just like in Proposition \ref{TMReg}), and the first track is $\blank^n q w' \blank^l,
q\in Q, n,l  \geq 0, w' \in \Gamma_0^* \Gamma_0' \Gamma_0^*$
that does not start or end with $\blank$.

\item $M'$ simulates $M$'s
reversal-bounded read/write worktape on the tracks of the read-only $w$
(as in Proposition \ref{TMReg}) and using reversal-bounded counters to simulate the
reversal-bounded counters of $M$ faithfully. However,
$M'$ keeps two copies of each counter, where the two sets of counters
are updated synchronously (and are therefore identical during the first part of the simulation). 

\item At some point, $M'$ nondeterministically guesses that the contents 
of track 1, $qw'$ together with counter values $(i_1, \ldots, i_k)$ encoded in the input is a representation of
a configuration between the current track and the next track. Then, as in Proposition \ref{TMReg},
$M'$ matches the symbols in the first track with track $i+1$ until the
symbol from $\Gamma_0'$ in track 1, and
if so, stops updating one set of the counters.
Then $M'$ continues by matching track $1$ with track $i$.  Then the
simulation of $M$ continues, using the other set of counters and on the
remaining tracks (as in Proposition \ref{TMReg}). At the end of the simulation, $M'$ verifies
that the non-updated set of counters has the same values as $i_1,\ldots, i_k$ encoded on the input $c_1^{i_1} \cdots  c_k^{i_k}$. If this is the case, then
$qw' c_1^{i_1} \cdots c_k^{i_k}$ is indeed an intermediate store
configuration of an accepting computation, 
and if so, $M'$ accepts.
\end{enumerate}

Now $M'$ is a reversal-bounded (on the input) $2\NCM$. Hence, $M'$ can be
converted to an equivalent one-way $\NCM$ $M''$ , i.e., $L(M'') = L(M')$
(this can be done even for the more general finite-crossing $2\NCM$s)
\cite{Gurari1981220}.

Since $\LFam(\NCM)$ is closed under gsm mappings (follows from 
closure under homomorphism, inverse homomorphism, and intersection with regular
languages \cite{Ibarra1978}), 
construct an $\NCM$  $M'''$ that applies a gsm that extracts 
$qw' c_1^{i_1} \cdots c_k^{i_k}$ (and replaces $a' \in \Gamma'$
with $a \rw $)
from the first track of $w$ and $z$.
It follows that $S(M)$ is in $\LFam(\NCM)$.
\qed \end{proof}

It is possible to accept
the store languages of other machine models,
such as reversal-bounded $\NQCM$ with $\LFam(\NCM)$
(this does not follow
directly
from the fact that such Turing machines can simulate the input languages of this model, as store
languages rather than input languages are of interest here). 
\begin{proposition}
\label{TMQueue}
If $M$ be a reversal-bounded $\NQCM$, then $S(M) \in \LFam(\NCM)$. 
\end{proposition}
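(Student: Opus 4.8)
The plan is to reduce to Proposition \ref{TMCounters} by simulating the reversal-bounded queue with a single reversal-bounded read/write worktape, so that the store language of the simulating $\NTM$ lies in $\LFam(\NCM)$, and then to recover $S(M)$ from it with a gsm. The point worth stressing is that this concerns store languages rather than input languages, so the simulation must be set up so that the worktape configurations of the $\NTM$ encode the queue configurations of $M$ in a way a gsm can decode.

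First I would build an $\NTM$ $M'$ with a one-way read-only input tape, one read/write worktape, and $k$ reversal-bounded counters that faithfully simulates $M$. The counters of $M$ are simulated directly by the counters of $M'$, preserving reversal-boundedness. The queue content $\gamma = a_1 \cdots a_n$ (front $a_1$ on the left, back $a_n$ on the right) is kept on the worktape, with already-dequeued cells on the left overwritten by a distinguished ``deleted'' marker so that the live content is the maximal marker-free, blank-free block. An enqueue is performed at the back with the head moving right; a dequeue marks the current front cell and advances the head one cell to the right (so the front also moves rightward). Thus the head of $M'$ moves rightward throughout any maximal run of enqueues and throughout any maximal run of dequeues; the only leftward motion is a single sweep from the back to the current front, needed when $M$ switches from enqueuing to dequeuing.

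The key point is that this simulation makes only a bounded number of head reversals. Since $M$ is reversal-bounded on its queue, the number of switches from an enqueue phase to a dequeue phase is bounded by a constant, and each such switch costs exactly two head reversals (right-to-left to locate the front, then left-to-right to begin consuming), while each dequeue-to-enqueue switch costs none, as the head simply continues rightward to the back. Hence $M'$ is an $\NTM$ with a reversal-bounded worktape and $k$ reversal-bounded counters, so Proposition \ref{TMCounters} gives $S(M') \in \LFam(\NCM)$.

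Finally I would recover $S(M)$ from $S(M')$ by a gsm. The state of $M'$ is designed to record the simulated state of $M$ together with a flag marking whether $M'$ currently sits at a configuration corresponding to an actual step of $M$, as opposed to an intermediate step (mid-sweep, or partway through writing a multi-symbol enqueue). The gsm accepts only at such boundary configurations; on these it maps the $M'$-state to the corresponding $M$-state, deletes the leading ``deleted'' markers, the trailing blanks and the read/write head marker $\rw$, outputs the live block $\gamma$, and copies the unary counter contents verbatim, yielding exactly $q \gamma c_1^{i_1} \cdots c_k^{i_k}$. Since $\LFam(\NCM)$ is a full trio and hence closed under gsm mappings \cite{Ibarra1978}, it follows that $S(M) \in \LFam(\NCM)$. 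The main obstacle is the first three steps taken together: arguing that the reversal-bounded queue genuinely translates into a boundedly-reversing worktape, and arranging the state and marker bookkeeping so that the gsm filters out the intermediate worktape configurations (which need not be store configurations of $M$) while reproducing precisely the store configurations of $M$.
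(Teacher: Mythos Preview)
Your proposal is correct and follows essentially the same approach as the paper: simulate the reversal-bounded queue by a reversal-bounded worktape, invoke Proposition \ref{TMCounters}, and recover $S(M)$ by a gsm. The only cosmetic difference is that the paper has the simulating machine overwrite dequeued cells with blanks (so the Turing store type trims them away and the worktape literally holds the queue), whereas you overwrite them with a distinguished marker that the gsm later deletes; your reversal-count analysis and your explicit handling of the sweep/intermediate configurations are more detailed than the paper's but entirely in the same spirit.
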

\begin{proof}
Given $M$ with $k$ counters, construct an intermediate $\NTM$ $Z$ with an input tape plus
one reversal-bounded read/write worktape, and
$k$ additional reversal-bounded counters, whose resulting store
language will be in $\LFam(\NCM)$ by Proposition \ref{TMCounters}.
$Z$ operates as follows:  
every time $M$ enqueues $y = b_1 \cdots b_m$, with $b_i$ being a letter, $m \geq 1$, $Z$ writes $y$ to the right end of the read/write worktape writing one letter at a time (using new intermediate states) and
simulating the counters exactly. Every time $M$ dequeues, $Z$ writes blank characters to
the left end of the tape towards the right (thus removing characters from the store
of the Turing machine as well). Each time such a reversal occurs (switching between enqueueing and
dequeueing), the Turing machine moves
its tape head from one end of the tape to the other on new intermediate states. Since the queue
is reversal-bounded, the Turing tape is reversal-bounded as well.

The store language of $Z$ has each word of the form 
$qw c_1^{i_1} \cdots c_k^{i_k}$ and is in $\LFam(\NCM)$ by
Proposition \ref{TMCounters}. From there, $S(Z)$ will be transformed
into $S(M)$ by a gsm $g$. Indeed,
the read/write head does not appear explicitly in the store language of the queue nor the blank symbol before the read/write head, whereas it does in the
Turing machine, but they can be removed by $g$. 
Furthermore, if $M$ enqueues more
than one symbol ($m>1$), then $Z$ requires $m$ moves. Then all intermediate
states used by $Z$ when writing each $b_i, i<m$ are not mapped by $g$ as they do not have
corresponding configurations of $M$. Similarly, the
intermediate states that $Z$ uses when it switches between simulating enqueuing and dequeuing instructions (by moving the tape head to the opposite end) are not mapped by $g$. Since $\LFam(\NCM)$ is closed under
gsm mappings, the store language of $M$ is in $\LFam(\NCM)$.
\qed \end{proof}
Note that in the proof of the result above, if there are no counters in $M$, then only a Turing machine with one
reversal-bounded read/write worktape is required, whose store language is a regular language
by Proposition \ref{TMReg}.
\begin{corollary}
If $M$ is a one-way reversal-bounded queue automaton, then
$S(M) \in \LFam(\REG)$. 
\end{corollary}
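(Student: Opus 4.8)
The plan is to specialize the construction in the proof of Proposition~\ref{TMQueue} to the case of no counters. A one-way reversal-bounded queue automaton is precisely a reversal-bounded $\NQCM$ in which the number $k$ of reversal-bounded counters is zero. Hence the same simulation applies verbatim, with every reference to the counter components simply dropped, and the target model drops from $\LFam(\NCM)$ down to $\LFam(\REG)$.

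First I would build the intermediate machine $Z$ exactly as in Proposition~\ref{TMQueue}, but now $Z$ is an $\NTM$ with a one-way read-only input tape and a single reversal-bounded read/write worktape, carrying no counters since $k=0$. Whenever $M$ enqueues a word $y = b_1 \cdots b_m$, $Z$ appends $y$ to the right end of its worktape; whenever $M$ dequeues, $Z$ deletes a symbol from the left end of the tape, moving rightward. Because the queue of $M$ is reversal-bounded, the number of switches between enqueuing (extending on the right) and dequeuing (deleting on the left) is bounded, so the read/write head of $Z$ makes only a bounded number of changes in direction; that is, $Z$ is reversal-bounded on its worktape.

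Applying Proposition~\ref{TMReg} to $Z$ then yields $S(Z) \in \LFam(\REG)$, since $Z$ is exactly an $\NTM$ with a one-way read-only input tape and a reversal-bounded read/write worktape. Each word of $S(Z)$ has the form $q w$, where $q$ is a state of $Z$ and $w$ is the tape content in the Turing encoding (including the read/write head marker and the trailing blank). Finally I would apply the gsm $g$ from the proof of Proposition~\ref{TMQueue}, restricted to the counter-free setting: $g$ removes the read/write head marker and the blank following it that are artifacts of the Turing tape but absent from the queue store, relabels each state $q$ of $Z$ by the corresponding simulated state of $M$, and suppresses the intermediate states that $Z$ passes through while enqueuing a multi-symbol word $b_1 \cdots b_m$ with $m>1$. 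This produces exactly $S(M)$, and since $\LFam(\REG)$ is closed under gsm mappings, $S(M) \in \LFam(\REG)$.

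There is no real obstacle here, as the statement is a direct specialization of Proposition~\ref{TMQueue}; the only points to verify are that the worktape of $Z$ remains reversal-bounded (which follows from reversal-boundedness of the queue) and that the mismatch between the Turing-tape encoding and the queue store, together with the multi-symbol enqueue bookkeeping, is precisely the kind of local rewriting a gsm can perform.
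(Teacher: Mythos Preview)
Your proposal is correct and follows exactly the paper's approach: the corollary is obtained by specializing Proposition~\ref{TMQueue} to the case $k=0$, so that the intermediate machine $Z$ has no counters and Proposition~\ref{TMReg} applies directly to give $S(Z)\in\LFam(\REG)$, after which the same gsm yields $S(M)\in\LFam(\REG)$. The paper states this in one sentence immediately after Proposition~\ref{TMQueue}; your write-up simply spells out the details.
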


Next, it will be shown that the same is true for reversal-bounded
stack automata augmented by reversal-bounded counters. Recall that stack automata can operate like pushdown automata with additional instructions that can read in the pushdown store in
a read-only fashion \cite{Stack2}.

\begin{proposition}
Let $M$ be a reversal-bounded $\NSCM$ or a reversal-bounded $\NPCM$.
Then $S(M) \in \LFam(\NCM)$.
\end{proposition}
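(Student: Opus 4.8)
The plan is to mimic the strategy used for Proposition~\ref{TMQueue}: build an intermediate $\NTM$ $Z$ with a one-way read-only input tape, a single reversal-bounded read/write worktape, and $k$ reversal-bounded counters, so that $S(Z) \in \LFam(\NCM)$ by Proposition~\ref{TMCounters}, and then recover $S(M)$ from $S(Z)$ by a gsm using the fact that $\LFam(\NCM)$ is a full trio. First I would observe that the reversal-bounded $\NPCM$ case is a special case of the reversal-bounded $\NSCM$ case: a pushdown is just a stack whose read head never leaves the top, so it suffices to treat an arbitrary reversal-bounded $\NSCM$ $M$ with $k$ reversal-bounded counters, and then remark that the pushdown subcase is strictly simpler, with the worktape head pinned to the right end.

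For the simulation, $Z$ keeps the stack contents of $M$ on its read/write worktape, with the top of the stack at the right end of the written portion. A push corresponds to moving the head to the right end and writing, a pop to erasing at the right end, and an interior read-only move of $M$'s stack head (the $-1,0,+1$ instructions) corresponds exactly to moving $Z$'s head left, staying, or moving right without altering the symbol there. Since $M$ is reversal-bounded, there are only boundedly many alternations between growing and shrinking the stack and boundedly many changes in direction of the stack read head while it reads inside the stack; as these are the only sources of direction changes of $Z$'s worktape head, $Z$'s worktape is reversal-bounded. The $k$ reversal-bounded counters of $M$ are carried over directly to $k$ reversal-bounded counters of $Z$, and from every state of $Z$ the simulated state $q$ of $M$ is recoverable. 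Crucially, the position of $M$'s stack read head (recorded by $\rw$ in the stack store configuration) is tracked faithfully by the position of $Z$'s read/write head, so store configurations of $M$ and of $Z$ are in close correspondence.

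By Proposition~\ref{TMCounters}, $S(Z) \in \LFam(\NCM)$, where each word has the form $q w c_1^{i_1}\cdots c_k^{i_k}$ with $w$ a Turing-tape encoding of a stack configuration (carrying the $\rw$ marker and padding blanks). Since $\LFam(\NCM)$ is a full trio \cite{Ibarra1978}, it is closed under gsm mappings, so I would apply a gsm $g$ that maps the state of $Z$ to the simulated state of $M$, deletes the blank symbols padding the worktape encoding (and, in the pushdown case, the read/write head marker together with the trailing blank, since a pushdown store carries no interior head), and otherwise leaves the stack symbols and the counter blocks intact; this yields exactly $S(M)$. Hence $S(M)\in\LFam(\NCM)$.

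The main obstacle is the middle step: justifying that the interior read-only excursions of the stack head translate into a reversal-bounded worktape for $Z$, which relies on the reversal-bound of an $\NSCM$ bounding the number of direction changes of the stack read head (not merely the number of stack-growth reversals), together with faithfully encoding the stack head position so that the gsm can recover the store configurations of $M$. For the $\NPCM$ case this obstacle disappears, since the head never enters the interior and the worktape reversals coincide with the pushdown reversals.
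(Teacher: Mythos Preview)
Your proposal is correct and follows essentially the same approach as the paper: simulate the stack (or pushdown) faithfully on a reversal-bounded Turing worktape while carrying the $k$ counters over directly, invoke Proposition~\ref{TMCounters} to conclude $S(Z)\in\LFam(\NCM)$, and then clean up with a gsm. The paper's own proof is considerably terser and leaves the gsm step implicit, whereas you spell out explicitly why the worktape remains reversal-bounded (using the fact that reversal-boundedness of an $\NSCM$ bounds the interior head reversals as well) and how to recover $S(M)$ from $S(Z)$; this added detail is sound and does not depart from the paper's argument.
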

\begin{proof}
Let $M$ be a reversal-bounded stack automaton with $k$ counters. 
Construct a Turing machine $M'$ 
with a one-way input tape, a reversal-bounded worktape, and $k$ counters.
A stack automaton
is very similar to a restricted type of Turing machine with a one-way input tape and a worktape to simulate
the stack that only changes values at the right end of the tape, except for the following minor differences: 
instructions that read from the inside of the stack are simulated by transitions that move but do not change from the inside of the Turing tape, the bottom-of-stack marker can be initially placed on the tape, the top of the stack marker is simulated with a blank, and a stack automaton allows to push multiple symbols in one transition. The latter type
can be simulated with new intermediate
states  that push one symbol at a time.
The counters
are simulated faithfully.
Then $S(M') \in \LFam(\NCM)$, by Proposition \ref{TMCounters}.

 As the intermediate configurations of $M'$ that are involved in simulating
 push transitions of more than one symbol are not configurations
of $M$,  a
gsm can be used to not output on those intermediate configurations
(similar to the proof of Proposition \ref{TMQueue}).
From the differences described, it is clear that
$S(M)$ is in $\LFam(\NCM)$. The proof is similar for $\NPCM$s.
\qed \end{proof}
This immediately implies that every reversal-bounded stack automaton
has a regular store language, but this result will be improved
later in the paper.

Next, a $k$-flip-pushdown automaton is a pushdown automaton, with the ability to flip its store. This can happen at most $k$ times in an 
accepting computation (see Example \ref{flipdefinition}). Despite the additional ability to flip the
store, regularity of the store language is preserved.
\begin{proposition}
If $M = (Q,\Sigma,\Gamma,\delta,q_0,F)$, $k \geq 0$
is a one-way $k$-flip pushdown automaton,
then $S(M) \in \LFam(\REG)$.
\end{proposition}
\begin{proof}
For an $\NPDA$ $M'$ with state $q$,
let $\acc_{M'}(q) = \{q x \mid (q_0, w, Z_0) \vdash_{M'}^* (q,\epsilon, x) \}$
and $\coacc_{M'}(q) = \{qx \mid (q,w,x) \vdash_{M'}^* (q_f,\epsilon, x'), q_f \in F \}$. 
That is, $\acc_{M'}(q)$ is the set of store contents in state $q$ that are reachable from the initial configuration, and $\coacc_{M'}(q)$ is the set of store contents in state $q$
that can reach an accepting configuration. 
It is known that for $\NPDA$s $M'$ and all states $q$, both $\acc_{M'}(q)$ and $\coacc_{M'}(q)$
are regular \cite{CFHandbook}. 

Every instruction of a $k$-flip $\NPDA$ is either a standard $\NPDA$ instruction or a flip instruction, as defined in Example \ref{flipdefinition}, and at most $k$ flip instructions can be applied in every accepting computation. First, note that the store language $S(M)$ is the union of the
store languages obtained using each final state separately. Since the regular languages
are closed under union, assume without loss of generality that $M$ only has one
final state $q_f$.

Consider any computation of $M$, not necessarily from an initial configuration nor to an accepting
configuration,
$$\alpha: (r_0,w_0,\gamma_0 ) \vdash \cdots \vdash (r_m , w_m, \gamma_m),$$
$r_j \in Q, w_j \in \Sigma^*, \gamma_j \in Z_0 \Gamma_0^*$, for $ 0 \leq j \leq m$ using
at most $k$ flips. From $\alpha$, there is a sequence, denoted by 
\begin{equation} f(\alpha) =  p_1, \ldots, p_{2l} \in X,
\label{flipequation}
\end{equation}
such that $\alpha$ has $l \leq k$ flip transitions, and the $i$'th
flip transition applied is from $p_{2i-1}$ to $p_{2i}$, for all $1 \leq i \leq l$.
Also, if $\alpha$ is an accepting computation (starting from an initial configuration and ending in an accepting configuration), let
$S_{\alpha}(M) = \{r_0\gamma_0, \ldots, r_m \gamma_m\}$. This is generalized to sets of accepting
derivations $Y$, as $S_Y(M)$.

Let $X$ be the finite set of all sequences $z = p_1, \ldots, p_{2l}$, where $l \leq k$, and there is some
transition of $M$ that flips while switching from $p_{2i-1}$ to $p_{2i}$, for all $i$. Given any $z \in X$,
let $g(z)$ be the set of all accepting computations $\alpha$ of $M$ such that $f(\alpha) = z$. It is clear
that $S(M) = \bigcup_{z \in X} S_{g(z)}(M)$. Thus, it is enough to show that, for each $z \in X, S_{g(z)}(M)$ is regular. 

Let $z = p_1,\ldots, p_{2l} \in X$. Then each word in $S_{g(z)}(M)$ is either derived from a
configuration between the $j$'th flip and the $(j+1)$'st flip, for $0 \leq j < l$, or after the
$l$'th flip (and before the end). For $0 \leq j \leq l$, let $S_{g(z),j}$ be all those store
contents between the $j$'th flip (or the beginning if $j=0$) and the transition before the
$j+1$'st flip (or the end of the computation if $j=l$). Again, if each
$S_{g(z),j}(M)$ is regular, then $S_{g(z)}(M)$ is regular. Let $0 \leq j \leq  l$. For each $i$ from
$0$ to $j$, let $z_i = p_1,\ldots, p_{2i}$ (if $i=0$, then there are no flip transitions), and
let $\acc_i(q) = \{qx \mid \alpha: (q_0,w,Z_0 ) \vdash^* (q,\epsilon, x), \alpha \in g(z_i)\}$. 
It will be shown that each 
$\acc_i(q)$ is regular. This will be done by building an $\NPDA$ $M'$ without flips such that
$\acc_{M'}(q) = \acc_i(q)$, which then must be regular. This is done inductively on $z_i$.

Consider the $\NPDA$ $M_0$ obtained from $M$ by keeping all $\NPDA$ transitions but omitting flip transitions. 
Then $\acc_{M_0}(p_1) = \acc_0(p_1)$ is regular. Hence, $Y_1 = p_2 Z_0( (p_1Z_0 )^{-1}\acc_0(p_1))^R$
is regular as well (since the regular languages are closed under left quotient, concatenation, and reversal.
This is exactly the set of store contents that can be derived from those in $\acc_0(p_1)$
via a flip transition from $p_1$ to $p_2$.
Then build another $\NPDA$ $M_1$ that pushes an arbitrary word of $Y_1$, then
simulates $M$ without flips until $p_3$. Then $\acc_{M_1}(p_3) = \acc_1(p_3)$, which is
again regular. This same procedure proceeds inductively by reversing the regular store language until an 
$\NPDA$ $M_j$ can be built such that for each $p \in Q$, $\acc_{M_j}(p) = \acc_j(p)$, which is regular.

Similarly, for each $i$ from $j$ to $l$, let $z_i = p_{2i+1}, \ldots, p_{2l}$, and
$\coacc_i(q) = \{qx \mid \alpha: (q,w,\gamma) \vdash^* (q_f,\epsilon, \gamma'), \alpha \in g(z_i)\}$.
In a similar fashion, for each $p \in Q, \coacc_{M_j}(p) = \coacc_j(p)$, which is again regular.
Furthermore, $\bigcup_{p \in Q}(\acc_j(p) \cap \coacc_j(p)) = S_{g(z),j}(M)$. Hence,
$S_{g(z),j}(M)$ is regular, $S_{g(z)}(M)$ is regular, and $S(M)$ is regular.
\qed \end{proof}
This is indeed quite a general family to have only regular store languages.

Next, the store languages of $\NCM$s are analyzed. Surprisingly,
only deterministic machines in $\DCM$ are needed to accept them.
\begin{proposition}\label{NCMtoDCM}
If $M$ is an $\NCM$,  then $S(M) \in \LFam(\DCM)$.
Thus, $\SFam(\NCM) \subsetneq \LFam(\DCM)$.
\end{proposition}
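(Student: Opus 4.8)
The plan is to decompose $S(M)$ by state and reduce the problem to testing membership in semilinear sets. Since an $\NCM$ has only counters as storage, every word of $S(M)$ has the form $q c_1^{j_1} \cdots c_k^{j_k}$, where $q$ is a state and $(j_1,\ldots,j_k) \in \mathbb{N}^k$ are the counter contents. By the definition of the store language, $q c_1^{j_1} \cdots c_k^{j_k} \in S(M)$ exactly when the configuration with state $q$ and counters $(j_1,\ldots,j_k)$ is both reachable from the initial configuration (having consumed some prefix of the input) and co-reachable to an accepting configuration (by consuming some suffix). Writing $\mathrm{Reach}(q)$ and $\mathrm{CoReach}(q)$ for the corresponding sets of counter tuples, I would record $S(M) = \bigcup_{q \in Q} \{q\} L_q$, where $L_q = \{c_1^{j_1} \cdots c_k^{j_k} \mid (j_1,\ldots,j_k) \in R_q\}$ and $R_q = \mathrm{Reach}(q) \cap \mathrm{CoReach}(q)$.

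The key structural fact is that each $R_q$ is semilinear. It is known that the reachable configuration set of a reversal-bounded multicounter machine is an effectively computable semilinear set \cite{Ibarra1978}; projecting onto the counter coordinates at the fixed state $q$ shows $\mathrm{Reach}(q)$ is semilinear, and applying the same fact to a reversed copy of $M$ (which stays reversal-bounded) shows $\mathrm{CoReach}(q)$ is semilinear. As semilinear sets are closed under intersection, each $R_q$ is semilinear. I would then invoke the standard characterization that a letter-bounded language whose set of exponent tuples is semilinear is accepted by a $\DCM$; hence each $L_q \in \LFam(\DCM)$, say by a machine $D_q$.

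To assemble a single deterministic machine for $S(M)$, I would exploit that each word begins with exactly one state symbol, which uniquely determines the relevant branch. The $\DCM$ reads this first symbol; if it is a state $q \in Q$ it deterministically simulates $D_q$ on the remaining input (using the disjoint counter sets of the finitely many $D_q$, with reversal bound the maximum over $q$), and otherwise rejects. This deterministic dispatch realizes the finite union $\bigcup_q \{q\} L_q$ without leaving $\LFam(\DCM)$, so $S(M) \in \LFam(\DCM)$. For the strictness $\SFam(\NCM) \subsetneq \LFam(\DCM)$, observe that every $\NCM$ store language is contained in $q_1^* \cdots q_m^* c_1^* \cdots c_k^*$ (one state symbol followed by the counter blocks) and is therefore bounded, whereas $\LFam(\DCM)$ contains non-bounded languages such as the regular language $\{a,b\}^*$.

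I expect the main obstacle to be the semilinearity groundwork rather than the determinization: establishing $\mathrm{CoReach}(q)$ cleanly (the backward direction) and, above all, correctly stating and citing the two transfer results I rely on, namely semilinearity of $\NCM$ reachability sets and the $\DCM$-acceptability of semilinear letter-bounded sets. Once these are in hand, the determinization is essentially free because the leading state symbol resolves the union, and the strictness reduces to the elementary observation that store languages of an $\NCM$ are always bounded.
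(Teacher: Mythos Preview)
Your proposal is correct, but it follows a different route from the paper. The paper gives a single direct construction: it builds an $\NCM$ $M'$ with $2k$ counters that, on input $z = q c_1^{i_1}\cdots c_k^{i_k}$, simulates $M$ on a guessed input using two parallel copies of the counters, freezes one copy at a nondeterministically chosen moment, and at the end checks that the frozen copy and state match $z$. This shows $S(M)\in\LFam(\NCM)$; since $S(M)$ is letter-bounded, the result of \cite{IbarraSeki} (every bounded $\NCM$ language is in $\LFam(\DCM)$) finishes the argument in one stroke. By contrast, you decompose $S(M)$ state-by-state, argue semilinearity of $\mathrm{Reach}(q)$ and $\mathrm{CoReach}(q)$ separately, intersect, and then invoke the semilinear-letter-bounded-$\to\DCM$ transfer, followed by a deterministic dispatch on the leading state symbol. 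Both arguments ultimately rest on the same bounded/semilinear-to-$\DCM$ fact; yours makes the semilinear structure explicit and modular, while the paper's avoids the reach/co-reach split and the explicit reassembly by packaging everything into one $\NCM$ simulation. One small point: your ``reversed copy'' justification for $\mathrm{CoReach}(q)$ is a bit loose (reversing an $\NCM$ has awkward initial-configuration issues); it is cleaner to build an $\NCM$ that reads $c_1^{j_1}\cdots c_k^{j_k}$, loads the counters, and simulates $M$ forward from $q$ on guessed input---this directly shows $\mathrm{CoReach}(q)$ is the Parikh image of an $\NCM$ language, hence semilinear. Your strictness argument (store languages are bounded, $\{a,b\}^*\in\LFam(\DCM)$ is not) is fine and in fact more explicit than the paper, which does not spell it out.
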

\begin{proof}
Let $M$ have counters $C_1, \ldots, C_k$.  First, construct an intermediate $\NCM$ $M'$
with counters named $C_1, \ldots, C_k, D_1, \ldots, D_k$ to accept $S(M)$. 
Then  $M'$,  when given an input  $z$,
checks that $z$ is of the form  $q c_1^{i_1} \cdots c_k^{i_k}$ for 
a state $q$, $i_1, \ldots, i_k \geq 0$
(this can be done by a $\DFA$ in parallel).  To check that $z$ is in $S(M)$,  
on transitions that do not read any input, $M'$ guesses an  input $x$ to $M$ in a letter-by-letter fashion and simulates $M$ on $x$ using counters $C_1, \ldots, C_k$ and 
$D_1, \ldots, D_k$ (i.e., $D_1, \ldots, D_k$ are duplicate counters which  operate
like $C_1, \ldots, C_k$ similar to Proposition \ref{TMCounters} step 2).  At some point (nondeterministically chosen), $M'$
stops  updating counters $D_1, \ldots, D_k$ and remembers the current state $q'$ but  continues the simulation
with counters  $C_1, \ldots, C_k$.  When $M$ accepts, $M'$ checks that the
value in counter $D_j$ is $i_j$ (from the input), for all $j$, $1 \leq j \leq k$ and that $q = q'$. Hence, $S(M) = L(M')$. The $\NCM$ $M'$ can then be converted
to a $\DCM$ $M''$, since it is known that any $\NCM$ accepting a bounded language 
can be accepted by a $\DCM$ \cite{IbarraSeki}.
\qed \end{proof}

Although all store languages of the nondeterministic model $\NCM$ can be accepted by
the deterministic model $\DCM$, next, it will be shown that this is not the case for $\DPCM$.
\begin{proposition}
Let $M$ be an $\NCM$ which accepts with all counters
zero and in a unique accepting state $f$ which is
never re-entered.  Let $L = fZ_0 L(M)$.
Then there is a 0-reversal-bounded
$\DPCM$ $M'$ and a regular
language $R$ such that $L = S(M') \cap R$. Then $L(M) = (fZ_0)^{-1} S(M')$.
\label{cannotremovenondet}
\end{proposition}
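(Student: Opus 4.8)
The plan is to build $M'$ as a deterministic machine that \emph{replays} an accepting computation of $M$ that is spelled out on its own input, while recording on a write-only (hence $0$-reversal) pushdown exactly the symbols that $M$ consumes, and mirroring $M$'s reversal-bounded counters with its own reversal-bounded counters. Concretely, the input alphabet of $M'$ encodes the transitions of $M$; at each step $M'$ keeps the current simulated state of $M$ in its finite control and the current counter values in its counters, reads one symbol telling it which transition of $M$ to apply, checks that this transition is applicable (correct source state and correct zero/non-zero status of the counters, which $M'$ can test on its own counters), and then applies it: it updates the simulated state, performs the same $+1/0/-1$ on each counter, and, whenever the transition of $M$ consumes an input letter $a \in \Sigma$, it pushes $a$. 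Since the chosen transition is dictated by the current input symbol and everything else is forced, $M'$ is deterministic; since it only ever pushes, the pushdown makes $0$ reversals; and since it copies $M$'s counter operations verbatim, its counters inherit $M$'s reversal bound, so $M'$ is a legitimate $0$-reversal-bounded $\DPCM$.

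By construction, the distinguished state $f$ of $M$ is entered by $M'$ precisely when the replayed computation reaches acceptance. Using the normal form --- $f$ is the unique accepting state, is never re-entered, and is entered with all counters zero --- the only configurations of $M'$ whose store string begins with $f$ are those reached at such a moment: the counters are then zero (so no counter symbols occur) and the pushdown content is exactly $Z_0 x$, where $x$ is the string of input letters $M$ has consumed. I then let $M'$ read the end-marker and accept, so that each such configuration lies on a genuine accepting computation of $M'$ and hence belongs to $S(M')$. Therefore $fZ_0 x \in S(M')$ if and only if some transition sequence drives $M$ through a valid accepting computation consuming exactly $x$, i.e.\ if and only if $x \in L(M)$. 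Taking $R = fZ_0 \Sigma^*$ gives $S(M') \cap R = \{fZ_0 x \mid x \in L(M)\} = L$; and since every $fZ_0$-prefixed element of $S(M')$ is of the form $fZ_0 x$ with $x \in L(M)$, the left quotient yields $(fZ_0)^{-1} S(M') = L(M)$.

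The step I expect to require the most care is arranging the encoding so that the simulation is simultaneously \emph{sound} and \emph{complete} while remaining deterministic and keeping the distinguished store string clean. Completeness requires that every accepting computation of $M$ (including all of its nondeterministic choices) be the replay of some input of $M'$; soundness requires that an input describing an inapplicable or non-accepting move sequence never let $M'$ reach state $f$ with counters zero. I would handle this by rejecting (entering a dead auxiliary state, named disjointly from $f$) the instant a read transition is inapplicable or a counter would go negative, and by only entering $f$ on the simulated accepting transition; the never-re-entered property of $f$ then guarantees that no spurious $f$-prefixed configuration is produced and that $M'$ does not alter the pushdown after reaching $fZ_0 x$. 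The remaining points --- that pushing consumed letters in the order $M$ reads them yields exactly $Z_0 x$, and that the counters, being verbatim copies, respect the reversal bound --- are routine.
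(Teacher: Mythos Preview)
Your proposal is correct and follows essentially the same approach as the paper: both encode the transitions of $M$ as the input alphabet of $M'$, simulate $M$ deterministically by applying the dictated transition while mirroring the counters, and push the consumed input letters onto a write-only pushdown, then extract $L$ via the regular filter $fZ_0\Sigma^*$. Your write-up is in fact more careful than the paper's about handling inapplicable transitions and guaranteeing that no spurious $f$-prefixed store configurations arise.
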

\begin{proof}
Let $M$ be an $\NCM$ with $k$ counters and input alphabet $\Sigma$.
Represent each transition of $M$ by an abstract symbol:
$$[(q, a, d_1, \ldots, d_k) \rightarrow  (p, y_1, \ldots , y_k)],$$
where $p$ and $q$ are states, $a$ is either in $\Sigma$ or $\epsilon$,
$d_i$ represents the status (zero or non-zero)
of counter $i$, and  $y_i$ is the change in counter $i$.
Let $\Delta$ be the set of symbols representing the transitions.

The input alphabet of the $\DPCM$ $M'$ is $\Delta$.  For a string
$y \in \Delta^*$, let $x$ be the concatenation of the the input
components of the transitions in $y$.  On input $y$, $M'$
writes $x$ on the stack while simulating the computation of $M$ on
$x$ using the counters, and accepts in state $f$ if $M$ accepts $x$. $M'$ is indeed
deterministic since each symbol of $\Delta$ implies the transition
to apply.

Hence, $S(M')$ contains all strings of the form $fZ_0w$,
where $w$ is in $L(M)$.  Let $R$ be the regular language $f Z_0\Sigma^*$.
Then $L = S(M') \cap R$, and $L(M) = (fZ_0)^{-1}S(M')$.
\qed \end{proof}

From this, the following is true:
\begin{proposition}
\label{0revbounded}
There is  a 0-reversal-bounded $\DPCM$ $M$ such that $S(M)$
cannot be accepted by any $\DPCM$.
\end{proposition}
\begin{proof}
Suppose otherwise.
It is known that there are languages in $\LFam(\NCM)$ that are
not in $\LFam(\DPCM)$ \cite{OscarNCMA2014journal}. 
Let $L$ be such a language accepted by some $M \in \NCM$. By Proposition \ref{cannotremovenondet},
there exists $M' \in \DPCM$ that is $0$-reversal-bounded
such that $(fZ_0)^{-1} S(M') = L(M)$. But $S(M') \in \LFam(\DPCM)$
by the assumption. Also, it is clear that $\LFam(\DPCM)$ is closed
under left quotient with a fixed word as a $\DPCM$ can simulate first on the fixed word deterministically, then on the input deterministically. Hence, $L(M) \in \LFam(\DPCM)$,
a contradiction.
\qed \end{proof}

Lastly, two results will be stated that
are shown below in Section \ref{subsec:twoway} and Section \ref{subsec:twowayoneway} respectively, which
are results on one-way stack automata, but require results on two-way
automata for their proofs. The first is already known \cite{KutribCIAA2016}.
\begin{proposition} \cite{KutribCIAA2016}
If $M$ is a one-way nondeterministic stack automaton, then $S(M) \in \LFam(\REG)$.
\end{proposition}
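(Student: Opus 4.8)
The plan is to reduce the problem to two facts already available: that two-way $\NFA$s accept only regular languages (used in the proof of Proposition~\ref{TMReg}) and that the store language of a pushdown automaton is regular (Proposition~\ref{NPDAStorage}). The starting point is the decomposition $S(M) = \bigcup_{q \in Q} q\,\bigl(\acc(q) \cap \coacc(q)\bigr)$, where $\acc(q)$ is the set of stack configurations (including the read-head marker $\rw$) reachable in state $q$ from the initial configuration, and $\coacc(q)$ is the set of stack configurations in state $q$ from which $M$ can reach acceptance. Since regular languages are closed under intersection and finite union, it suffices to prove that $\acc(q)$ and $\coacc(q)$ are regular for each $q$. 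The essential structural observation is that a stack automaton modifies its store (by push/pop) only while the head is at the top, whereas every maximal \emph{interior excursion} --- a block of $-1,0,+1$ read-only moves --- leaves the stack content fixed and thus behaves exactly like a two-way finite automaton scanning that fixed content.

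First I would isolate the interior behaviour. For a fixed stack content $u$ (with no head marker) and states $p,p'$, say $p \leadsto_u p'$ if, starting at the top of $u$ in state $p$ and using only read-only interior moves, $M$ can return to the top in state $p'$; define analogously the relations ``descend from the top in state $p$ to a marked interior cell in state $q$'' and ``ascend from a marked interior cell in state $q$ to the top in state $p'$''. On a fixed content these are computations of a two-way finite automaton, so for each pair of states the set of contents $u$ realizing the relation is regular, and --- reading a candidate configuration $w_1 \rw a w_2$ as the input of a two-way $\NFA$ --- the interior descent/ascent predicates are checkable by finite automata. This is precisely where the ``two-way to one-way'' translation is used: the genuinely two-way device (the stack head) is confined to fixed contents, where it collapses to regularity.

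Next I would handle the push/pop dynamics with a pushdown automaton. Construct an ordinary $\NPDA$ $N$ that simulates the top-of-stack behaviour of $M$, replacing each interior excursion by a single nondeterministically guessed surface step $p \leadsto p'$. The obstacle --- which I expect to be the crux --- is that whether $p \leadsto p'$ holds depends on the \emph{entire} current stack content, which a one-way device building the stack cannot re-read. I would resolve this by the standard surface-configuration encoding: augment each pushdown symbol with the finite-state summary of the interior behaviour of the content beneath it, updating the summary on each push; then $N$ can decide the guessed surface steps locally from its top symbol. Since $N$ is a genuine $\NPDA$, Proposition~\ref{NPDAStorage} gives that $S(N)$, and hence the set of reachable (and, by a symmetric construction running $M$ backwards, co-reachable) top-of-stack contents of $M$, is regular.

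Finally I would combine the pieces. A candidate configuration $q\, w_1 \rw a w_2$ lies in $\acc(q)$ exactly when the content $w = w_1 a w_2$ is reachable at the top of the stack in some state $p$ (a regular condition obtained from $S(N)$) and the interior-descent predicate carries $p$ at the top to $q$ at the marked cell (a regular condition from the first step); co-reachability is handled symmetrically via the ascent predicate and the backward simulation. All the ingredients are regular and are combined by intersection and finite union, so $\acc(q)$ and $\coacc(q)$ are regular, whence $S(M) \in \LFam(\REG)$. The main difficulty throughout is the global dependence of the interior (two-way) behaviour on the whole stack; the surface-configuration encoding is what localizes it and lets the otherwise two-way phenomenon be absorbed into an ordinary one-way pushdown computation.
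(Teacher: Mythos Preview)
Your approach is correct and genuinely different from the paper's. You work directly with the one-way machine, separating interior (two-way read-only) and surface (push/pop) phases, localizing the global excursion behaviour via a surface-summary encoding on the pushdown symbols, and then appealing to Proposition~\ref{NPDAStorage}. The paper instead routes through two-way machines: it first replaces the one-way automaton by a two-way stack automaton on empty input via the general correspondence of Corollary~\ref{onewaytwoway}, and then cites as a black box the result of Ginsburg, Greibach and Harrison \cite{Stack2} that the stack configurations of a two-way stack automaton on any \emph{fixed} input word form a regular set (Lemmas~\ref{firsthalf} and~\ref{secondhalf}, the latter obtained by simulating $M$ in reverse). Your proof is thus more self-contained --- it essentially re-derives the combinatorial core of \cite{Stack2} --- while the paper's is shorter and showcases the transfer principle of Section~\ref{subsec:twowayoneway}, which applies uniformly to other store types. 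One small point to add in your treatment of $\coacc(q)$: you should also allow $M$ to reach a final state while the head is still in the interior, without ever ascending back to the top; this is just one further regular disjunct (again a two-way $\NFA$ condition on $w_1 \rw a w_2$) and does not affect the structure of the argument.
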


\begin{proposition}
There exists $M \in \NSCM$ with one $1$-reversal-bounded counter over a
unary alphabet such that $S(M) \notin \LFam(\NPCM)$.
\end{proposition}

The latter result is interesting in the following
sense: 
an $\NSCM$ combines a stack and reversal-bounded counters.
A stack alone yields only regular store languages;
but the store languages of $\NSCM$ are
more general, describing some languages that are neither in $\LFam(\NCM)$ nor 
$\LFam(\NPCM)$.
However, it is seen next that $\NSCM$s only
yield $\NSCM$ store languages.
\begin{proposition}
If $M \in \NSCM$, then $S(M) \in \LFam(\NSCM)$.
\end{proposition}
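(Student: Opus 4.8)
The plan is to construct, directly, a single $\NSCM$ machine $M'$ with $L(M') = S(M)$, following the duplicate-counter idea of Proposition~\ref{NCMtoDCM} but augmenting it to also check the stack contents. I would not try to reduce to the reversal-bounded case as in Propositions~\ref{TMReg}--\ref{TMQueue}: there the worktape was reversal-bounded and could be simulated on a two-way read-only track, whereas here the stack of $M$ is unrestricted, so the stack itself must be kept on the stack of $M'$. Let $M$ have stack alphabet $\Gamma$ (with read-head marker $\rw$) and $k$ reversal-bounded counters, so each word of $S(M)$ has the form $q\,x\,c_1^{i_1}\cdots c_k^{i_k}$, where $x$ encodes the stack contents together with the position of the read head. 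The machine $M'$ will have one stack (playing the role of $M$'s stack, with the same read-only interior access) together with counters $C_1,\dots,C_k$, $D_1,\dots,D_k$, and a constant number of additional bookkeeping counters.

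In a single forward simulation, reading its input $z$ strictly left to right, $M'$ proceeds as follows. First it reads the state $q$ from $z$ and records it. Then, on $\epsilon$-moves, it guesses an input $u$ for $M$ and simulates $M$ from its initial configuration, maintaining $M$'s stack on its own stack and updating $C_1,\dots,C_k$ and $D_1,\dots,D_k$ synchronously. At a nondeterministically chosen step it declares the current configuration to be the intermediate one: it checks that the current state equals $q$, stops updating the $D$ counters, and verifies that the current stack contents (with head position) equal the segment $x$ of $z$ (the crux, described below). It then resumes the $\epsilon$-simulation, guessing the remainder $v$ of the input and updating only the stack and $C_1,\dots,C_k$, until $M$ enters a final state. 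Finally $M'$ reads $c_1^{i_1}\cdots c_k^{i_k}$ from $z$ and verifies $D_j=i_j$ for all $j$ by decrementing, and accepts. Thus a single guessed accepting computation of $M$ on $uv$ is produced whose configuration after $u$ is exactly $(q,x,i_1,\dots,i_k)$; the prefix guarantees reachability and the suffix co-reachability, so that $L(M')=S(M)$.

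The main obstacle is the stack-comparison step, which replaces the trivial counter freeze of Proposition~\ref{NCMtoDCM}. At that moment the stack of $M'$ is exactly the stack of $M$, and it must be checked against the one-way input segment $x$ and then left unchanged with the read head back at its original cell $p$, so that the co-reachability simulation can resume faithfully. This is precisely where the read-only interior access of a stack automaton is essential: $M'$ sweeps its read head once from $p$ down to the bottom while incrementing a bookkeeping counter to record $p$, then sweeps upward to the top reading $x$ and comparing it symbol by symbol to the stack contents, using the recorded value to check that the $\rw$ marker of $x$ falls exactly on cell $p$; a second bookkeeping counter lets it return the head to $p$ afterwards. Since interior reads do not modify the stack, the stack is unchanged throughout and the head is restored, so the continuation is correct. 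For the reversal counts, $C_1,\dots,C_k$ are updated only during the simulation of $uv$ and hence inherit the reversal bound of $M$; $D_1,\dots,D_k$ are updated during $u$ and then decremented once at the end, adding at most one reversal; and the bookkeeping counters make only boundedly many monotone runs. Hence $M'$ is an $\NSCM$, and $S(M)\in\LFam(\NSCM)$.
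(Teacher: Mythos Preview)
Your proposal is correct and follows essentially the same approach as the paper: duplicate the $k$ counters, simulate $M$ on a guessed input, and at a nondeterministically chosen step use the stack's read-only interior access together with two auxiliary $1$-reversal counters to compare the stack contents against the input segment $x$ (recording and later restoring the head position $p$), then continue the simulation on the surviving copy of the counters. The only cosmetic difference is that the paper verifies the counter values against the input immediately after the stack comparison and then continues the simulation on the second copy, whereas you freeze the $D$-counters at the intermediate step and postpone their comparison with $c_1^{i_1}\cdots c_k^{i_k}$ until after $M$ accepts; both orderings are fine.
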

\begin{proof}
Let $M$ be a $k$-counter $\NSCM$. Construct a 
$2k+2$ counter $\NSCM$ machine $M'$ (give names $c_i, d_i, e,f$ to the counters, $1 \leq i \leq k$) to accept $S(M)$ that simulates $M$ with two
identical copies of each counter, $c_i$ and $d_i$, $1 \leq i \leq k$, on a guessed input. Then,
at some nondeterministically guessed spot, $M'$ verifies that the stack contents are
the same as the input by moving the stack head to the left-end-marker while adding one to counter $e$ and $f$, then $M'$ verifies that the stack contents are the same as the input by comparing the stack to the input symbol-by-symbol while
decreasing $e$ to verify that the read/write head on the input is in the correct location. Then $M'$ returns its stack read head to the proper location using counter $f$. Then $M'$ verifies that the counter values match the input values by
decreasing each $c_i$. Finally, $M'$ continues the simulation on the
second set of counters, $d_i$, accepting if $M$ accepts.
\qed \end{proof}

\subsection{Connections Between Deterministic and Nondeterministic Machines}

Thus far, the primary concern has been store languages
of nondeterministic machine models.
In this section, a connection between deterministic and
nondeterministic one-way machines is demonstrated. 
A store type is said to have stay instructions if there are instructions to keep the store the same (that do not violate the instruction language). All store types considered in this paper are of this form.
\begin{proposition}
\label{detandnondet}
Let $\Omega_1, \ldots, \Omega_k$ be store types with stay instructions, and let $M$ be a one-way nondeterministic $(\Omega_1, \ldots, \Omega_k)$-machine.
One can construct a one-way deterministic $(\Omega_1, \ldots, \Omega_k)$-machine
acceptor $M'$  of the same type as $M$
such that $S(M') = S(M)$. 
\end{proposition}
\begin{proof}
Let $t_1, \ldots, t_m$ be new symbols in bijective correspondence
with the transitions of $M$. 
Then let $M'$ operate as follows over the input alphabet
$T = \{t_1, \ldots, t_m\}$, with the same state set, initial state, and final state set,
and the transition function $\delta'$ is as follows: for each transition of $M$,
$t_i: (p,\iota_1, \ldots, \iota_k) \in \delta(q,a,d_1,\ldots, d_k),a \in \Sigma \cup \{\epsilon\}$, create a transition
of $M'$ $(p,\iota_1,\ldots, \iota_k) \in \delta'(q,t_i,d_1,\ldots, d_k)$. Also, create transitions that stay from 
any final state while reading the end-marker. Thus, consider
any accepting computation of $M$ using a sequence of transitions. Then reading the
corresponding sequence labels with $M'$ (followed by reading the end-marker) is an accepting computation with the
store changing identically (thus, it is in the instruction language). Similarly, given any accepting computation of $M'$, applying this
sequence accepted by $M'$ as a sequence of transitions of $M$ is accepting with the store changing identically. Hence,
$S(M') = S(M)$. Also, $M'$ is deterministic since the input symbol dictates the transition
to apply.
\qed \end{proof}
Also, note in Proposition \ref{detandnondet} that $M'$ operates in realtime.

\begin{corollary}
The following are true: $\SFam(\NPDA) = \SFam(\DPDA)$ and $\SFam(\NCM) = \SFam(\DCM)$
\end{corollary}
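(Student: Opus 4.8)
The plan is to prove each of the two claimed equalities by establishing its two inclusions separately, observing that one direction is trivial and the other is an immediate instance of Proposition \ref{detandnondet}. For the equality $\SFam(\NPDA) = \SFam(\DPDA)$, the inclusion $\SFam(\DPDA) \subseteq \SFam(\NPDA)$ holds with no work, since determinism only restricts the transition relation $\delta$: every deterministic pushdown machine is in particular a nondeterministic one, so every store language realizable by a $\DPDA$ is already realized by some $\NPDA$. The same trivial inclusion $\SFam(\DCM) \subseteq \SFam(\NCM)$ holds for the counter case.

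For the reverse inclusions I would invoke Proposition \ref{detandnondet} directly. Taking an arbitrary $M \in \NPDA$, I regard it as a one-way nondeterministic $\Omega$-machine where $\Omega$ is the pushdown store type of Example \ref{pushdownstore}; the proposition then supplies a one-way deterministic $\Omega$-machine $M'$ of the same store type with $S(M') = S(M)$. Since $M'$ is deterministic and uses exactly the pushdown store type, $M'$ is a $\DPDA$, so $S(M) = S(M') \in \SFam(\DPDA)$, giving $\SFam(\NPDA) \subseteq \SFam(\DPDA)$. The counter case is identical after replacing $\Omega$ by the $k$-tuple of $l$-reversal-bounded counter store types: given $M \in \NCM$ with $k$ $l$-reversal-bounded counters, Proposition \ref{detandnondet} yields a deterministic machine $M'$ on the same $k$-tuple of store types with $S(M') = S(M)$, so $M'$ again has $k$ $l$-reversal-bounded counters, is deterministic, and hence lies in $\DCM$.

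The only point requiring care --- and the closest thing to an obstacle --- is verifying that the machine $M'$ produced by Proposition \ref{detandnondet} genuinely lands in the intended class $\DPDA$ (resp.\ $\DCM$), rather than in some broader deterministic $\Omega$-family. This is exactly what the phrase ``of the same type as $M$'' in the statement of Proposition \ref{detandnondet} guarantees: in its construction $M'$ drives each of its stores with precisely the instruction sequences $M$ would apply, so every store of $M'$ obeys the same instruction language $L_{I,i}$ --- in particular the reversal bound in the counter case --- and the store types used are literally the pushdown (resp.\ reversal-bounded counter) store types. Consequently no new store behaviour is introduced, $M'$ is a bona fide $\DPDA$ (resp.\ $\DCM$), and combining the two inclusions in each case yields $\SFam(\NPDA) = \SFam(\DPDA)$ and $\SFam(\NCM) = \SFam(\DCM)$.
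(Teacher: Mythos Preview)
Your proposal is correct and matches the paper's approach: the corollary is stated immediately after Proposition~\ref{detandnondet} with no separate proof, as it is meant to follow directly from that proposition together with the trivial inclusion of deterministic machines into nondeterministic ones. Your write-up simply spells out these two directions explicitly, including the observation that the construction in Proposition~\ref{detandnondet} preserves the store types (and hence the reversal bounds in the $\NCM$/$\DCM$ case), which is exactly the intended reasoning.
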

This is also true for all one-way nondeterministic and deterministic machine
models considered in this paper.

Finally, it is interesting to consider whether store languages of machine models
can always be accepted by only deterministic machines of the same type. Indeed, the 
following are true:
\begin{enumerate}
\item
If $M$ is an $\NFA$ or an $\NPDA$, then $S(M)$ is regular and
hence the deterministic version of the model can accept
its own store language.
\item
If $M$ is an $\NCM$, then $S(M)$ can be accepted by a $\DCM$,
hence the deterministic version of the model can accept
its own store language.
\end{enumerate}
However, it follows from Proposition \ref{0revbounded} that the store languages of $\DPCM$s
cannot be accepted by $\DPCM$s.

\section{Machines with Two-Way Read-Only Inputs}
\label{sec:twoway}

Using exactly the same store types as defined in the previous section, two-way input machines can also be
defined. 
Given store types $\Omega_1, \ldots, \Omega_k$ with
$\Omega_i = (\Gamma_i, I_i, f_i, g_i, c_{0,i}, L_{I,i}), 1 \leq i \leq k$,
two-way inputs have an end-marker on both sides,
$\rhd w \lhd$, and the finite transition relation is from
$Q \times (\Sigma \cup \{\rhd,\lhd\}) \times \Gamma_1 \times \cdots \times \Gamma_k$
to $Q \times  I_1 \times \cdots \times I_k \times \{-1,0,+1\}$ (with the last component describing the direction of the input
head movement), a configuration of $M$ is a tuple
$(q,\rhd w \lhd, \gamma_1, \ldots, \gamma_k, j)$, where $q \in Q, w\in \Sigma^*,\gamma_i \in 
\Gamma_i^*, 1 \leq j \leq |w|+3$ giving the current position
on the input (position $1$ is $\rhd$, $|w|+2$ is $\lhd$, and $|w|+3$ is off the input tape). The
derivation relation $\vdash_M$ is defined by:
$(q,\rhd w\lhd,\gamma_1, \ldots, \gamma_k, j) \vdash_M (q', \rhd w \lhd, \gamma_1', \ldots, \gamma_k', j')$
if there exists $(q', \iota_1, \ldots, \iota_k, n) \in \delta(q,a,d_1, \ldots, d_k)$, $a$ is the $j$'th
character of $\rhd w \lhd$, $j' = j+n, g(\gamma_i)=d_i, f(\gamma_i,\iota_i) = \gamma_i'$, for each $i$, $1 \leq i \leq k$. Validity is defined just like with one-way
machines.
The language accepted by $M$, $L(M) = \{ w \mid (q_0, \rhd w \lhd, c_{0,1}, \ldots, c_{0,k},1)
\vdash_M^x (q_f, \rhd w \lhd, \gamma_1, \ldots, \gamma_k, j), q_f \in F, w \in \Sigma^*,
\gamma_i \in \Gamma_i^*, 1 \leq j \leq |w|+3, x \mbox{~is valid}\}$.
The store language of $M$, $S(M)$ is equal to
$\{q \gamma_1 \cdots \gamma_k \mid (q_0, \rhd w \lhd, c_{0,1}, \ldots, c_{0,k}, 1) \vdash_M^x 
(q,\rhd w \lhd, \gamma_1, \ldots, \gamma_k, j) \vdash_M^y (q_f, \rhd w \lhd, \gamma_1', \ldots, \gamma_k', j'),
q \in Q, q_f \in F, \gamma_i, \gamma_i' \in \Gamma_i^*, j,j' \in \{1, \ldots,
|w|+3\}, xy \mbox{~is valid} \}$.

In the previous section, store languages of
different types of machines with a one-way read-only input were studied.
The rest of this section will investigate store languages of machine models with two-way inputs.

\subsection{Two-Way $\NCM$s and $\DCM$s}
\label{subsec:twoway}

This subsection considers store languages of two-way $\NCM$s ($2\NCM$s)
and two-way $\DCM$s (2$\DCM$s).
A machine is {\em finite-crossing} if
there is a $d\in \mathbb{N}$ such that in any computation, 
the input head crosses the boundary between any two 
adjacent cells of the input no more than $d$ times.
The first result demonstrates the surprising fact that
the store languages of finite-crossing $2\NCM$s can always be
accepted by machines that are only one-way and deterministic.
\begin{proposition} If $M$ is a finite-crossing $2\NCM$, then $S(M)
\in \LFam(\DCM)$.
\end{proposition}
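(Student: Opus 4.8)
The plan is to first establish that $S(M) \in \LFam(\NCM)$ and then to upgrade this to $\LFam(\DCM)$ by exploiting the fact that $S(M)$ is a bounded language. The main obstacle is the two-way (hence repeatedly re-read) input of $M$, which a one-way counter machine cannot simply guess and replay; I resolve this by keeping the guessed input of $M$ on the two-way input tape of an intermediate finite-crossing $2\NCM$, and only afterwards appealing to the known conversion of finite-crossing $2\NCM$ to one-way $\NCM$ \cite{Gurari1981220}.

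Concretely, let $M$ have counters $C_1, \ldots, C_k$ and state set $Q = \{p_1, \ldots, p_n\}$, and recall that each word of $S(M)$ has the form $q c_1^{i_1} \cdots c_k^{i_k}$. First I would build a finite-crossing $2\NCM$ $N$ whose input is $w \# z$, where $\#$ is a fresh separator not in $\Sigma$, $w \in \Sigma^*$ is a candidate input of $M$, and $z = q c_1^{i_1} \cdots c_k^{i_k}$ is a candidate store configuration. The machine $N$ keeps $2k$ counters, two synchronized copies of each of $M$'s counters, and simulates $M$ on the $w$-portion of its tape exactly, reading $w$ two-way and inheriting the finite-crossing bound of $M$. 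At a nondeterministically chosen step, $N$ freezes one copy of the counters (with values $(j_1, \ldots, j_k)$) and records the current simulated state $p$, while continuing the simulation of $M$ to acceptance using the live copy. When $M$ accepts, $N$ moves its head across $\#$ into the $z$-portion, checks that the leading state symbol equals $p$, and for each $i$ verifies that the number of $c_i$ symbols equals $j_i$ by decrementing the $i$-th frozen counter as it scans. Then $N$ accepts if and only if every check succeeds.

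Each frozen counter makes only one extra reversal during this comparison, so all $2k$ counters remain reversal-bounded; moreover the only motion of $N$ outside the $w$-portion is a single left-to-right scan of $z$, so $N$ crosses the boundary adjacent to $\#$ only boundedly often beyond what $M$ already does, and $N$ is therefore finite-crossing. By construction, $L(N) = \{ w \# q c_1^{i_1} \cdots c_k^{i_k} \mid M \text{ has an accepting computation on } w \text{ passing through a configuration in state } q \text{ with counter contents } (i_1, \ldots, i_k)\}$.

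Finally, I would convert $N$ to an equivalent one-way $\NCM$ $N'$ using the finite-crossing to one-way translation \cite{Gurari1981220}, and then apply a deterministic gsm $g$ that outputs $\epsilon$ until it reads $\#$ and thereafter copies its input verbatim; since $\# \notin \Sigma$ this yields $g(L(N')) = S(M)$, and as $\LFam(\NCM)$ is a full trio \cite{Ibarra1978} it is closed under gsm mappings, giving $S(M) \in \LFam(\NCM)$. Because $S(M) \subseteq p_1^* \cdots p_n^* c_1^* \cdots c_k^*$ is a bounded language, the same fact used in Proposition \ref{NCMtoDCM} that an $\NCM$ accepting a bounded language can be converted to a $\DCM$ \cite{IbarraSeki} then allows me to conclude $S(M) \in \LFam(\DCM)$.
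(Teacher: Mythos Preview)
Your proposal is correct and follows essentially the same approach as the paper: build an intermediate finite-crossing $2\NCM$ whose input carries both a candidate input of $M$ and a candidate store word, simulate $M$ with two synchronized copies of the counters and freeze one at a guessed step, convert to one-way $\NCM$ via \cite{Gurari1981220}, project away the $M$-input portion, and finish using the bounded-language $\NCM \to \DCM$ result \cite{IbarraSeki}. The only difference is cosmetic: the paper performs the comparison against the store-word segment immediately at the guessed step and therefore uses a $(2k{+}1)$st counter to remember the input-head position for resuming the simulation, whereas you defer the comparison until after $M$ accepts and thus get by with $2k$ counters.
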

\begin{proof}
Given a $k$-counter finite-crossing $2\NCM$ $M$ over $\Sigma$, first, construct an intermediate finite-crossing $2\NCM$ $M_1$ with $2k+1$ counters and input of the form
$x q c_1^{i_1}  \cdots c_k^{i_k}$, where $x \in \Sigma^*$.

$M_1$ simulates the computation of $M$ on $x$ with two sets of counters
named $C_j$ and $D_j$ for $1 \leq j \leq k$ so that $C_j$ and $D_j$ contain identical values.  At some nondeterministically
chosen point, $M_1$ stores the input head position in the remaining counter, checks that the input segment $q c_1^{i_1} \cdots c_k^{i_k}$
corresponds to the simulated state and the value $i_j$ is equal to the value stored in counter $D_j$, for each $1 \leq j \leq k$.  If so,
$M_1$ continues the simulation of $M$ on the correct position of $x$ (which can be recovered)
using the $C_1, \ldots, C_k$ counters, and accepts if and only
if $M$ accepts.
It is known that all finite-crossing $2\NCM$s can be converted to
one-way $\NCM$s \cite{Gurari1981220}, and so convert $M_1$ to an $\NCM$ $M_2$ and then construct an $\NCM$
$M_3$ that erases the $x$ \cite{Ibarra1978} ($\LFam(\NCM)$ is closed under homomorphisms and therefore $x$ can be erased).  Then convert $M_3$ to a $\DCM$ $M_4$ as all bounded $\NCM$ languages are in $\LFam(\DCM)$ \cite{IbarraSeki}.
\qed \end{proof}

Without the finite-crossing condition, this no longer holds.
\begin{proposition} 
\label{finiteCrossing2DFA}
There is a non-finite-crossing $2\DCM$ $M$ with
one $1$-reversal counter over the bounded language $a^* b^*$ such that
$S(M) \notin \LFam(\NPCM)$.
\end{proposition}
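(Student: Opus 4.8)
The plan is to exhibit a single machine $M$ whose accepted language is the subset of $a^*b^*$ consisting of those $a^nb^m$ for which $n$ is a \emph{nontrivial} divisor of $m$, and to pinpoint one state of $M$ at which the counter holds exactly $m$ on accepting computations; the values occurring at that state will then be precisely the composite numbers, which are not semilinear. The conceptual point to keep in mind is that a single counter reversal forces the counter to behave as one monotone ``load-then-unload'' loop whose peak is only linear in the input length, so $M$ cannot compute a genuinely non-semilinear \emph{function} (such as $n^2$) onto its counter. The trick is therefore to leave the \emph{linear} value $m$ on the counter and let a non-semilinear \emph{acceptance condition} filter $m$ into a non-semilinear set.

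Concretely, I would let $M$ run on $\rhd a^n b^m \lhd$ as follows. In a first left-to-right sweep $M$ checks in its finite control that $n \ge 2$, and while passing over the $b$'s it increments its single counter once per $b$; thus when the head first reaches $\lhd$ the counter holds $m$. Let $q$ be the state entered at that moment: it is reached exactly once, always with counter value exactly $m$. From $q$, $M$ carries out the divisibility test by repeatedly sweeping the $a$-block left to right, decrementing the counter once per $a$ (so each full sweep subtracts $n$), returning to the first $a$ between sweeps, and continuing until the counter reaches $0$. $M$ accepts iff the counter reaches $0$ exactly at the right boundary of the $a$-block (so $n \mid m$) and at least two decrementing sweeps occurred (so the quotient $m/n \ge 2$). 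Then the counter rises only during the first sweep and falls only thereafter, so $M$ uses a single reversal; $M$ is deterministic; $L(M) \subseteq a^*b^*$; and since the number of decrementing sweeps is $m/n$, which is unbounded, $M$ is not finite-crossing.

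Next I would identify the store language restricted to $q$. By definition $q\,c^{j} \in S(M)$ iff some accepting computation passes through state $q$ with counter value $j$; since $q$ is reached only with counter $m$, and an input $a^nb^m$ is accepted only when $n \ge 2$, $n \mid m$, and $m/n \ge 2$, the set $\{\,j : q\,c^{j} \in S(M)\,\}$ equals $\{\, m : m = n\cdot(m/n) \text{ with } n \ge 2 \text{ and } m/n \ge 2 \,\}$, which is exactly the set of composite numbers. The composites are not an eventually periodic subset of $\mathbb{N}$ (the primes are infinite and not eventually periodic), hence this set is not semilinear. Note that excluding the trivial divisors $n=1$ and $n=m$ is essential: without them every $m$ would be accepted via $n=1$ or $n=m$, giving the semilinear set $\mathbb{N}$.

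Finally I would derive non-membership. Suppose $S(M) \in \LFam(\NPCM)$. Since $\LFam(\NPCM)$ is closed under intersection with regular languages, $S(M) \cap q\,c^* \in \LFam(\NPCM)$, and every $\NPCM$ language has a semilinear Parikh image. The Parikh images of the words in $S(M) \cap q\,c^*$ are the vectors $(1,m)$ over $\{q,c\}$, so semilinearity of this image would make $\{\,m : m \text{ composite}\,\}$ semilinear, a contradiction. Hence $S(M) \notin \LFam(\NPCM)$. The step I expect to be the main obstacle is the initial design decision rather than any calculation: because one reversal permits only a single monotone counter loop, the construction hinges entirely on choosing an acceptance condition that is simultaneously (i) checkable with one reversal in a two-way deterministic pass — for which divisibility by the $a$-count is exactly right — and (ii) non-semilinear in its $b$-length projection, which is what promotes the merely linear recorded value $m$ to the non-semilinear set of composites. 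Verifying that the divisibility-with-quotient test is correctly realizable with a single reversal, and that detecting ``counter hits $0$ at the boundary'' versus ``inside the $a$-block'' is deterministic, is the delicate bookkeeping.
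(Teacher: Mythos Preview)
Your proof is correct and follows essentially the same approach as the paper's: load one of the two block lengths into the counter at a distinguished state, then test divisibility by repeatedly sweeping the other block, so that the counter values witnessed at that state along accepting computations are exactly the composites, whose non-semilinearity contradicts membership in $\LFam(\NPCM)$. The only differences are cosmetic --- you swap the roles of $a$ and $b$ (loading the $b$-count and dividing by the $a$-count, whereas the paper loads the $a$-count and divides by the $b$-count) --- and you are in fact more explicit than the paper about excluding the trivial divisor $n=m$ via the ``at least two sweeps'' requirement.
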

\begin{proof}
Construct $2\DCM$ $M$ with one $1$-reversal counter
accepting $\{a^ib^j \mid  i, j > 1, i \neq j,  i \mbox{~is a multiple of~} j\}$
as follows: $M$ stores $i$ in counter $C$ and enters a distinguished state $f$.
(Thus, the configuration at this time is $fc_1^i$.) Then $M$ changes
state and checks (by decrementing $C$ while going back-and-forth
on $b^j$ at least twice)  if $i$ is divisible by $j$, and if so, $M$ accepts.  Clearly,  $M$'s
counter makes only one reversal.

It follows from the construction that if $i$ is a multiple of $j$ and
$i \neq j$, then
$fc_1^i$ would be a  reachable configuration in some accepting
computation.

Hence, $S(M) \cap fc_1^* = \{ fc_1^n \mid n$ is composite$\}$.

If $S(M)$ is in $\LFam(\NPCM)$, then $L' = S(M) \cap fc_1^* = \{fc_1^n \mid n \mbox{~is composite}\}$
is in $\LFam(\NPCM)$.  This is a contradiction, since the
Parikh map of $L'$ is not semilinear, but it is known that
the Parikh map of any $\NPCM$ language is semilinear \cite{Ibarra1978}.
\qed \end{proof}

For nondeterministic machines, only a unary alphabet is needed to obtain a similar result.
\begin{proposition} 
\label{finiteCrossing2NFA}
There is a non-finite crossing $2\NCM$ $M$ 
with one $1$-reversal
counter over a unary alphabet such that $S(M) \notin \LFam(\NPCM)$.
\end{proposition}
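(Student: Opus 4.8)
The plan is to mirror the construction in Proposition~\ref{finiteCrossing2DFA}, replacing the divisibility test (which requires determinism) by a nondeterministic test that is even easier to implement with a single reversal-bounded counter, so that the reachable store configurations again encode a non-semilinear set. The key observation is that to derive a contradiction we only need the set of store configurations of the form $fa^n$ to be non-semilinear; nondeterminism gives us more freedom in choosing which arithmetic property of $n$ to certify.

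First I would construct a non-finite-crossing $2\NCM$ $M$ with one $1$-reversal counter over the one-letter alphabet $\{a\}$ that accepts $\{a^n \mid n \text{ is composite}\}$ directly. On input $a^n$, the machine $M$ first loads $n$ into its counter $C$ by scanning left to right while incrementing (and enters the distinguished state $f$ at this moment, so that $fa^n$ becomes a reachable store configuration). Then $M$ nondeterministically guesses a factor $j$ with $2 \le j \le n-1$ by sweeping back and forth across the input, decrementing $C$ once per block of $j$ cells, and accepts iff $C$ reaches exactly zero, thereby certifying that $j$ divides $n$ with $1 < j < n$, i.e.\ that $n$ is composite. The counter reverses only once (it increases during the loading phase and decreases during the checking phase), and the repeated back-and-forth sweeps make $M$ non-finite-crossing, exactly as in the deterministic version.

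The decisive point is that $f$ is entered immediately after loading, independent of the guessed factor, so $fa^n$ is a reachable store configuration \emph{in some accepting computation} precisely when $n$ admits such a factor, that is, when $n$ is composite. Hence $S(M) \cap fa^* = \{fa^n \mid n \text{ is composite}\}$. If $S(M)$ were in $\LFam(\NPCM)$, then since $\LFam(\NPCM)$ is closed under intersection with the regular language $fa^*$, the language $\{fa^n \mid n \text{ is composite}\}$ would be an $\NPCM$ language, and therefore would have a semilinear Parikh image by the result of \cite{Ibarra1978}. But the set of composite numbers is not a semilinear (eventually periodic) subset of $\mathbb{N}$, a contradiction.

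The main obstacle, as in the deterministic case, is ensuring that $f$ is reached \emph{before} any nondeterministic divisibility guess is resolved, so that the store configuration $fa^n$ is present in the computation regardless of the particular witness chosen, and that the state $f$ is reachable in an \emph{accepting} computation exactly for composite $n$. One must check carefully that the single counter reversal suffices for both loading $n$ and performing the multi-sweep divisibility check, and that no additional counter is secretly needed; this is handled by reusing the same counter across the two phases and exploiting the two-way input head to re-read $a^n$ as many times as required. Everything else follows from the semilinearity of $\NPCM$ and closure of $\LFam(\NPCM)$ under intersection with regular languages.
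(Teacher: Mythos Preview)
Your overall strategy coincides with the paper's: arrange that $S(M)\cap f\Gamma^*$ encodes the composite numbers and then contradict the semilinearity of $\NPCM$ languages. Your care that $f$ is entered \emph{before} any nondeterministic guess is resolved is also the right point. The gap is in the divisibility check itself. After loading $n$ from the input $a^n$ into the single counter, you propose to ``guess a factor $j$ with $2\le j\le n-1$ by sweeping back and forth across the input, decrementing $C$ once per block of $j$ cells''. But on a unary tape the only landmarks are the two end-markers, so nothing forces successive ``blocks'' to have the \emph{same} length $j$: the first block length is a nondeterministic choice, and every later block length is an independent nondeterministic choice. What your machine can then certify is only that the counter, starting at $n$, reaches $0$ after some sequence of block-decrements whose lengths sum to $n$ --- a condition satisfiable for every $n$ and carrying no divisibility information. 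Pinning down a single $j$ and reusing it would require a second counter or a mark on the tape, neither of which you have; so as written the machine does not accept $\{a^n \mid n \text{ composite}\}$, and the intersection $S(M)\cap fa^*$ is not the set you claim.

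The paper avoids this by reversing the roles of dividend and divisor. Its machine first loads an \emph{arbitrary} $n$ into the counter nondeterministically (on $\epsilon$-moves, without consuming input), enters $f$, and only then uses the unary input as the ruler: because the input is bracketed by $\rhd$ and $\lhd$, every full sweep has the \emph{same} length, namely the input length $j$, so repeated sweeps while decrementing the counter honestly test whether $j\mid n$. With the obvious side conditions ($j\ge 2$, and at least two sweeps so that $j<n$) one gets $fa^n\in S(M)$ exactly when some input $a^j$ witnesses a nontrivial divisor of $n$, i.e.\ when $n$ is composite. After this role swap your non-semilinearity contradiction goes through unchanged.
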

\begin{proof}
Construct an $M$ which first stores in the counter, a nondeterministically
chosen number $n$ and enters state $f$.  Then it changes state and checks
that $n$ is larger and a multiple of the length of the unary input.  As in Proposition \ref{finiteCrossing2DFA}, $S(M)$ is not in $\LFam(\NPCM)$.
\qed \end{proof}

The next result was already mentioned in Section \ref{sec:oneway},
but the proof appears here since it involves a proof using two-way
machines.
\begin{proposition}
There exists $M \in \NSCM$ with one $1$-reversal-bounded counter over a
unary alphabet such that $S(M) \notin \LFam(\NPCM)$.
\end{proposition}
\begin{proof}
Let $M$ be a $2\NCM$ with one $1$-reversal-bounded counter over a unary
language such that $S(M) \notin \LFam(\NPCM)$, which exists
by Proposition \ref{finiteCrossing2NFA}. Create an $\NSCM$ $M'$
with one counter, where $M'$ copies the input to the stack (using new states), and then simulates
$M$ on the stack contents and the counter.
Assume $S(M') \in \LFam(\NPCM)$. Let
$g$ be a gsm that erases the stack contents (keeping only the state and the counter), and $g$
does not map any words starting with any new state before the input is copied. Then $g(S(M')) = S(M) \in 
\LFam(\NPCM)$ since this family is closed under homomorphism, inverse homomorphism, and intersection with regular languages \cite{Ibarra1978}, and is therefore closed under
gsm mappings, a contradiction.
\qed \end{proof}

This subsection is concluded with a result that shows that 
for a particular two-way model of computation, the
store languages can be more complex than the languages
accepted.

\begin{proposition}
$~~~$
\begin{enumerate}
\item
If $M$ is a $2\NCM$ 
over a unary input alphabet,
then $L(M)$ is regular.
\item
There is a $2\NCM$ $M$ with one 1-reversal-bounded
counter over a unary input alphabet such that 
$S(M)$ is not semilinear (hence, $S(M)$ is not regular).
\end{enumerate}
\end{proposition}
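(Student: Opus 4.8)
The plan is to treat the two parts separately, since they pull in opposite directions: part (1) says the \emph{input} language is as simple as possible (regular), while part (2) exhibits a machine of the same kind whose \emph{store} language is as complex as a reversal-bounded counter can make it (non-semilinear).

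For part (1), I would reduce ``regular'' to ``ultimately periodic'', since for a one-letter alphabet a language $L \subseteq \{a\}^*$ is regular iff $\{n : a^n \in L\}$ is ultimately periodic iff it is semilinear. Note first that $M$ need not be finite-crossing (part (2) builds a non-finite-crossing one over $\{a\}$), so I cannot simply invoke the conversion of finite-crossing $2\NCM$ to one-way $\NCM$ from \cite{Gurari1981220}; I must exploit instead that every interior cell of $\rhd a^n \lhd$ carries the same symbol. The key device is a transition-profile (crossing-sequence) analysis: at each of the $n+1$ interior boundaries I record the ordered list of pairs (finite-control state, counter-phase vector) in which $M$ crosses it, where the counter-phase vector records, for each of the $k$ reversal-bounded counters, which of its $\le r+1$ monotone phases it occupies and whether it currently tests zero or nonzero. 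Because all interior cells are identical, the profile contributed by a single cell is one fixed element $\tau$ of a \emph{finite} monoid $N$ of such profiles, and the behaviour of the whole block of $n$ cells is the power $\tau^{n}$; powers in a finite monoid are eventually periodic in $n$, which is exactly the unary homogeneity I want to cash in.

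The counters are what make this nontrivial, and handling them is where I expect the real work. A counter value is unbounded, so within one monotone phase the crossing sequences themselves can be arbitrarily long, and $\tau$ as described is not literally finite unless the counter magnitudes are quotiented out. The plan is to keep only the finite control and phase information in $\tau$ and to carry the actual counter arithmetic as additive data: since each counter is monotone inside a phase, the net change it accrues while $M$ is confined to a block of identical cells grows at most linearly in the number of traversals of the block, and the nonnegativity and zero-test constraints along the run are then linear constraints on these accumulated changes. Combining the eventually-periodic finite-control behaviour $\tau^{n}$ with this linear counter bookkeeping, I would express ``$M$ has an accepting run on $a^{n}$'' as a finite disjunction of conditions, each a conjunction of an eventually-periodic constraint on $n$ (from $\tau^{n}$) with a Presburger constraint coupling $n$ to the accumulated counter changes; this leans on the semilinearity of the reachability relation of reversal-bounded counters \cite{Ibarra1978}. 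Projecting out the counter variables preserves semilinearity, so $\{n : a^{n}\in L(M)\}$ is semilinear, hence ultimately periodic, hence $L(M)$ is regular. The main obstacle is precisely this normalization, letting unbounded counter growth (and the correspondingly long crossing sequences) be summarized by a finite-control profile together with additive, Presburger-expressible counter data; everything else is the routine finite-monoid periodicity argument enabled by unarity. This genuinely needs the one-letter alphabet: over $a^{*}b^{*}$ the analogous machine of Proposition~\ref{finiteCrossing2DFA} accepts the non-semilinear language $\{a^{i}b^{j} : i,j>1,\ j\mid i\}$, so two length-parameters already break regularity.

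For part (2), I would simply reuse the machine $M$ of Proposition~\ref{finiteCrossing2NFA}: a non-finite-crossing $2\NCM$ with a single $1$-reversal-bounded counter over $\{a\}$ that stores a guessed value in its counter, enters a distinguished state $f$, and then verifies (by sweeping the re-readable input while monotonically decrementing) a divisibility relation between the stored value and the input length. As established there, the store configurations of the form $fa^{n}$ that survive in $S(M)$ are exactly those for which the stored $n$ admits a nontrivial factorization, i.e.\ $\{n : fa^{n}\in S(M)\}$ is the set of composite numbers. This set is not ultimately periodic: composite numbers contain arbitrarily long runs (for instance $m!+2,\dots,m!+m$), so any eventual period would force all large $n$ to be composite, contradicting the infinitude of primes. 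Hence $\{n : fa^{n}\in S(M)\}$ is not semilinear, so the Parikh map of $S(M)$ is not semilinear, and since the Parikh image of every regular language is semilinear, $S(M)$ is not regular. The contrast with part (1) is the point: for this two-way model the store language can be strictly more complex than the regular language accepted.
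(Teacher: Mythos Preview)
For Part~2 your plan coincides with the paper's proof: both construct the machine that nondeterministically loads a value $j$ into its single counter, enters a distinguished state $f$, and then sweeps the unary input $a^i$ to test $i \mid j$; both then observe that $S(M)\cap f a^+$ picks out the composites, which is not semilinear. The paper writes this out directly rather than cross-referencing Proposition~\ref{finiteCrossing2NFA}, but it is the same machine and the same argument, and your explicit reason why the composites are not ultimately periodic (arbitrarily long prime-free intervals versus infinitely many primes) is a detail the paper leaves implicit.

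For Part~1 you diverge from the paper, which does not prove the statement at all but simply invokes \cite{IbarraJiang}. Your crossing-sequence/finite-monoid outline is a reasonable strategy and you correctly locate the difficulty: since $M$ need not be finite-crossing, the crossing sequence at a boundary can have unbounded length, so the single-cell profile $\tau$ is not an element of a finite monoid as first described. Your proposed fix---retain only (state, phase) data in $\tau$ and carry the counter arithmetic as Presburger side-constraints---is the right instinct, but turning it into a proof is exactly the substance of the cited result, and your sketch does not supply it; in particular you would still need to show that the interaction of unboundedly many zero-tests with the accumulated increments projects to a semilinear set of input lengths $n$. For this paper the intended proof of Part~1 is the citation; if you want a self-contained argument, consult \cite{IbarraJiang} for the actual technique rather than attempting to reconstruct it from scratch.
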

\begin{proof}
Part 1 was shown in \cite{IbarraJiang}.
Part 2 follows from the language used in the proof of Proposition \ref{finiteCrossing2NFA}.
\qed \end{proof}
Hence, the languages accepted by the machines are all regular,
but the store languages are not even semilinear.

\subsection{Connections Between One-Way and Two-Way Machines}
\label{subsec:twowayoneway}

This subsection establishes some general connections between one-way and two-way machines.
First, a straightforward lemma is demonstrated to show
that store languages of one-way nondeterministic machines
are equivalent to those only accepting the empty word.

\begin{lemma} \label{empty1}
Let $\Omega_1, \ldots, \Omega_k$ be store types, and let $\MFam$ be the set of one-way nondeterministic $(\Omega_1, \ldots, \Omega_k)$-machines.
If $M \in \MFam$, then there exists $M' \in \MFam$ such that
$L(M') = \{\epsilon\}$ and $S(M') = S(M)$. Hence, the family
$\{S(M) \mid M \in \MFam\} = \{S(M) \mid M \in \MFam, L(M) = \{\epsilon\}\}$.
\end{lemma}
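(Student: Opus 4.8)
The plan is to make the input tape of $M$ irrelevant: I would build $M'$ over the same store types $\Omega_1, \ldots, \Omega_k$ so that it carries out exactly the same store operations as $M$, but guesses the input letters internally and therefore never actually advances its input head. This is the dual of the idea behind Proposition~\ref{detandnondet}, where instead the input was forced to spell out the transitions taken. Concretely, take $M' = (Q, \Sigma, \Gamma, \delta', q_0, F)$ with the same states, stores, initial state and final states as $M$, and define $\delta'$ by replacing the read symbol with $\epsilon$ in every rule: for each transition $(q', \iota_1, \ldots, \iota_k) \in \delta(q, a, d_1, \ldots, d_k)$ with $a \in \Sigma \cup \{\epsilon\}$, put $(q', \iota_1, \ldots, \iota_k) \in \delta'(q, \epsilon, d_1, \ldots, d_k)$.

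The key step is then the computation-correspondence, and in particular checking that $M'$ is genuinely of the same type as $M$. Keeping the bijective transition labels of $M$, every move of $M'$ corresponds to a unique move of $M$ issuing the same store instructions $\iota_1, \ldots, \iota_k$. Since the projections $\pi_i$ ignore the input letter entirely and read off only these instructions, corresponding computations of $M$ and $M'$ have identical images under each $\pi_i$; hence a computation of $M'$ is valid (with respect to the same instruction languages $L_{I,i}$) if and only if the matching computation of $M$ is valid. Because $M'$ reads no input, its computations on input $\epsilon$ are in one-to-one correspondence with the computations of $M$ on the forgotten guessed input, visiting exactly the same sequences of store configurations $(q, \gamma_1, \ldots, \gamma_k)$.

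From this, $S(M') = S(M)$ follows by unwinding the definition of the store language with $u = v = \epsilon$, and $L(M') \subseteq \{\epsilon\}$ follows because the input head never advances; moreover $\epsilon \in L(M')$ exactly when $M$ admits some valid accepting computation, i.e.\ when $L(M) \neq \emptyset$. Note also that $S(M) = \emptyset$ iff $L(M) = \emptyset$, since any accepting computation already places $q_0\, c_{0,1} \cdots c_{0,k}$ into the store language. Hence whenever $L(M) \neq \emptyset$ we get $L(M') = \{\epsilon\}$ together with $S(M') = S(M)$, which is the first assertion.

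For the displayed family equality the inclusion $\supseteq$ is immediate, and $\subseteq$ follows by applying the construction to an arbitrary $M \in \MFam$. The one place requiring care is the degenerate case $L(M) = \emptyset$ (equivalently $S(M) = \emptyset$), where the construction yields $L(M') = \emptyset$ rather than $\{\epsilon\}$; I would handle the statement over the nonempty store languages and flag this empty case explicitly. I expect the only real obstacle to be exactly the bookkeeping that guarantees the filters $L_{I,i}$ are respected by $M'$ — which is precisely why the construction routes every move through $\epsilon$ on the input while leaving the store instructions, and hence the $\pi_i$-projections, completely untouched.
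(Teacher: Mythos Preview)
Your approach is essentially identical to the paper's: replace every input letter in every transition by $\epsilon$ so that $M'$ nondeterministically guesses the input symbol-by-symbol while issuing exactly the same store instructions, and observe that validity with respect to each $L_{I,i}$ is preserved because the projections $\pi_i$ depend only on the store instructions. Your write-up is in fact more careful than the paper's, since you explicitly flag the degenerate case $L(M)=\emptyset$ (where the construction yields $L(M')=\emptyset$ rather than $\{\epsilon\}$, and $S(M)=\emptyset$ cannot arise from any machine with $L(M')=\{\epsilon\}$); the paper's proof simply ignores this edge case.
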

\begin{proof}
Construct $M'$ which, on $\epsilon$ input, guesses and simulates
the computation of $M$ on some input $x$  symbol-by-symbol. Since the sequence of
ways the store can change is the same as in $M$, then $M'$ must be an $(\Omega_1, \ldots, \Omega_k)$-machine
(i.e.\ in the definition of store types, all sequences of store instructions used in $M$ in accepting computations
are the same for $M'$, thereby being in the instruction language), and so $S(M) = S(M')$.
\qed \end{proof}
The above lemma is not true for deterministic machines $M$, since $S(M)$
may be infinite, but if $L(M')$ only accepts $\epsilon$, then $S(M')$ for any deterministic machine $M'$ is
always finite.

Next, a connection will be demonstrated between sets of one-way and two-way machines
of the same store type. The proposition involves two sets of machines with the same stores, where the first has a one-way input, and the second has a two-way input.
For example,  if ${\cal M}_1$ is the class of $\NPDA$s with $k$ reversal-bounded
counters, then ${\cal M}_2$ is the class of $2\NPDA$s with $k$ reversal-bounded
counters. It shows that the store languages for one-way machines are ``almost'' the same as two-way machines of the same type. The only difference is in the state.
\begin{proposition} \label{empty2}
Let $\Omega_1, \ldots, \Omega_k$ be store types,
let ${\cal M}_1$ be the set of one-way nondeterministic $(\Omega_1, \ldots, \Omega_k)$-machines and 
let ${\cal M}_2$ be the set of two-way nondeterministic $(\Omega_1, \ldots, \Omega_k)$-machines.
Then the following are true:
\begin{enumerate}
\item $\{S(M) \mid M \in \MFam_1\} = \{S(M) \mid M \in \MFam_1, L(M) = \{\epsilon\}\}
\subseteq \{S(M) \mid M \in \MFam_2, L(M) = \{\epsilon\}\}$.
\item For all $M_2 \in \MFam_2$ with $L(M_2)$ finite, there exists $M_1 \in \MFam_1$
with $L(M_1) = \{\epsilon\}$ and a homomorphism $h$ (that only can change the states)
such that $h(S(M_1)) = S(M_2)$.
\end{enumerate}
\end{proposition}
\begin{proof}
For item 1, from Lemma \ref{empty1}, 
$\{S(M) \mid M \in \MFam_1\} = \{S(M) \mid M \in \MFam_1, L(M) = \{\epsilon\}\}$.
Then, for every machine in the second set, a two-way machine can be
constructed (on epsilon input and thus the two-way head never
moves off end-markers) with the same store language.

For item 2, let $M_2 \in \MFam_2$.
For each $w \in L(M_2)$, there exists $S_w(M_2)$ consisting of all
words $x \in S(M_2)$ that can appear on the store in an accepting computation on input $w$.
Then $\bigcup_{w \in L(M_2)}S_w(M_2) = S(M_2)$.

Construct a machine $M_1$ in ${\cal M}_1$ as follows:
$M_1$ stores in its state a simulated state of $M_2$, a word $w \in L(M)$, and a position of $|w|$.
In the first move applied, $M_1$ guesses $w$, and simulates $M_2$ on $w$ by updating the state,
the stored input position, and the stores faithfully. As the sequences of store instructions
are identical, and sequences of valid instructions of one machine will have the corresponding
sequence in the other machine be valid. Finally, although the stores change identically
in accepting computations, the states of $M_1$ are different, as they contain also a word and a position.
But those can be transformed via a homomorphism $h$ that projects onto the simulated state.
Thus, $h(S(M_1)) = S(M_2)$.
\qed \end{proof} 

\begin{corollary}\label{onewaytwoway}
Let $\Omega_1, \ldots, \Omega_k$ be store types,
let ${\cal M}_1$ be the set of all one-way nondeterministic $(\Omega_1, \ldots, \Omega_k)$-machines and 
let ${\cal M}_2$ be the set of all two-way nondeterministic $(\Omega_1, \ldots, \Omega_k)$-machines,
and let $\LFam$ be a family closed under homomorphism. Then the following are equivalent:
\begin{enumerate}
\item
$\{S(M) \mid M \in \MFam_1\} \subseteq \LFam$,
\item
$\{S(M) \mid M \in \MFam_1, L(M) = \{\epsilon\}\} \subseteq \LFam$,
\item
$\{S(M) \mid M \in \MFam_2, L(M) \mbox{~finite}\} \subseteq \LFam$.
\end{enumerate}
\end{corollary}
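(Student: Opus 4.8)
The plan is to prove the four families coincide by closing a cycle of inclusions, leaning entirely on the two results just established plus one immediate containment; no genuinely new machinery is required. For brevity write $\mathcal{F}_i$ for the $i$th family in the list ($i=1,2,3,4$), so that $\mathcal{F}_1 = \{S(M) \mid M \in \MFam_1\}$, $\mathcal{F}_2 = \{S(M) \mid M \in \MFam_1, L(M) = \{\epsilon\}\}$, $\mathcal{F}_3 = \{S(M) \mid M \in \MFam_2, L(M) = \{\epsilon\}\}$, and $\mathcal{F}_4 = \{S(M) \mid M \in \MFam_2, L(M) \mbox{~finite}\}$. The goal is to verify $\mathcal{F}_1 = \mathcal{F}_2 \subseteq \mathcal{F}_3 \subseteq \mathcal{F}_4 \subseteq \mathcal{F}_1$.

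First I would apply Lemma \ref{empty1} directly to the one-way family $\MFam_1$, which gives $\mathcal{F}_1 = \mathcal{F}_2$ at once. Next I would observe that the ``$\subseteq$'' construction inside the proof of Proposition \ref{empty2} actually yields slightly more than that proposition records: starting from a one-way machine $M \in \MFam_1$ with $L(M) = \{\epsilon\}$, it builds a two-way machine $M'$ of the same store type that simulates $M$ on $\epsilon$ input, so $M'$'s input head never leaves the end-markers; designing $M'$ to reject every non-empty input (exactly as in the ``$\supseteq$'' direction, where a non-$\epsilon$ input is rejected) forces $L(M') = \{\epsilon\}$ rather than merely finite. Since $S(M') = S(M)$, this gives $\mathcal{F}_2 \subseteq \mathcal{F}_3$. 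Then $\mathcal{F}_3 \subseteq \mathcal{F}_4$ is immediate, because the defining constraint $L(M) = \{\epsilon\}$ is a special case of $L(M)$ finite, so every machine witnessing membership in $\mathcal{F}_3$ witnesses membership in $\mathcal{F}_4$. Finally, the ``$\supseteq$'' direction of Proposition \ref{empty2} supplies $\mathcal{F}_4 \subseteq \mathcal{F}_1$, via the decomposition $S(M_2) = \bigcup_{w \in L(M_2)} S_w(M_2)$ for a two-way $M_2$ with $L(M_2)$ finite, the closure of the store-language family under union, and the simulation of the head movements of $M_2$ on each fixed $w$ in the finite control of a one-way machine run on $\epsilon$ input. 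Chaining these four containments closes the cycle and forces all four families to be equal.

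I do not expect a real obstacle here, since the substantive work — the union-closure argument and the finite-control simulation of a fixed bounded-length input — is already discharged in Proposition \ref{empty2}, and the collapse from ``two-way'' to ``one-way'' is exactly Lemma \ref{empty1}. The one point that must be handled with care is the step $\mathcal{F}_2 \subseteq \mathcal{F}_3$: one has to notice that the two-way machine produced in the ``$\subseteq$'' direction of Proposition \ref{empty2} can be arranged to accept \emph{exactly} $\{\epsilon\}$, not merely some finite language, so that it lands in $\mathcal{F}_3$ rather than only in $\mathcal{F}_4$. It is precisely this ``upgrade'' from finite to $\{\epsilon\}$ that routes the cycle through family $3$ and thereby shows that the seemingly more restrictive condition $L(M) = \{\epsilon\}$ for two-way machines is no weaker than allowing $L(M)$ to be an arbitrary finite set.
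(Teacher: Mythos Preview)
Your proposal is correct and follows essentially the same approach as the paper, which simply cites Lemma \ref{empty1} and Proposition \ref{empty2} without spelling out how family 3 slots into the cycle. Your observation that the ``$\subseteq$'' construction of Proposition \ref{empty2} already produces a two-way machine accepting exactly $\{\epsilon\}$ (so that $\mathcal{F}_2 \subseteq \mathcal{F}_3$) is exactly the small unpacking the paper leaves to the reader.
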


As applications of the above, the following corollaries
to the results already shown in Section \ref{sec:oneway} are obtained:
\begin{enumerate}
\item
If $M$ is a $2\NPDA$ and $L(M)$ is finite, then $S(M)$ is regular.
\item
If $M$ is a $2\NTM$ with reversal-bounded read/write tape and $L(M)$ is
finite, then $S(M)$ is regular.
\item
If $M$ is a $2\NTM$ with reversal-bounded read/write tape and
reversal-bounded counters and $L(M)$ is finite, then $S(M)$ is in $\LFam(\NCM)$.
\end{enumerate}
Similar corollaries hold for the other machine models studied in
Section \ref{sec:oneway}.

The assumption that  $L(M_2)$ is finite in the above Proposition \ref{empty2}
is necessary.  Consider $2\DCA$, the set of two-way deterministic machines with 
an unrestricted counter (no reversal-bound).
\begin{proposition}
There is a $2\DCA$ $M$ 
which makes two sweeps on the input (left-to-right
and then right-to-left, where acceptance is on the left end-marker) and
makes only $O(\log n)$ reversals on the counter on input of size $n$ 
such that 
$S(M)$ is non-regular.  
\end{proposition}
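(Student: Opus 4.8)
The plan is to exhibit a single concrete $2\DCA$ $M$ and to reduce the non-regularity of $S(M)$ to the failure of \emph{eventual periodicity} of the counter values attained in one distinguished state. For a one-counter store, $S(M)$ is a subset of $\bigcup_{q\in Q} q\,c^{*}$ (a state followed by the counter contents written in unary). Since regular languages are closed under intersection with the regular set $f c^{*}$, and a regular subset of $f c^{*}$ corresponds to an eventually periodic set of exponents, it suffices to produce one state $f$ for which $E_f=\{m : f c^{m}\in S(M)\}$ is \emph{not} eventually periodic. This is exactly the content of the claim, and it explains the parenthetical remark: one sweep keeps $M$ essentially one-way, so by Proposition~\ref{NPDAStorage} its store is regular and every such $E_f$ is eventually periodic; the second sweep is what must inject the non-periodicity. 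It also explains why $L(M)$ must be infinite, since for finite (indeed for the relevant cases of) $L(M)$ the two-way store collapses to the one-way store by Proposition~\ref{empty2}.

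The first concrete step is to arrange the two sweeps so that $E_f$ equals the accepted set. In the left-to-right sweep I would have $M$ copy the input length into the counter (a monotone load contributing no reversals) and enter the distinguished state $f$ on the right end-marker, so that $f c^{n}$ is a store configuration of the (unique, deterministic) computation on $a^{n}$ precisely when that computation eventually accepts. Because only input $a^{n}$ produces counter value $n$ at $f$, this yields $E_f=\{\,n : M\text{ accepts }a^{n}\,\}=L(M)$ (as a set of lengths). Thus $S(M)$ is non-regular as soon as the right-to-left sweep is made to accept on the left end-marker exactly on a non-eventually-periodic set of lengths, the natural target being a standard non-periodic set such as the powers of two $\{2^{k}\}$. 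The number of counter reversals is then exactly the depth of the second-sweep computation, which I would arrange to be $O(\log n)$.

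The main obstacle, and the step that actually produces the $O(\log n)$ reversals, is implementing this second-sweep recognizer deterministically with a \emph{single} unbounded counter while the input head moves in only one direction. The natural idea is a repeated-halving (divisibility-by-two) descent down to $1$, but one counter cannot copy or halve itself: the only auxiliary register is the head position, which during a single leftward sweep behaves as a decrement-only, read-once register. The delicate point is therefore to interleave the counter and the head position so that each of the $O(\log n)$ rounds (i) performs one halving together with a parity/midpoint test, (ii) costs head-travel proportional to the \emph{current} value, so that the geometric sum $n+\tfrac n2+\tfrac n4+\cdots$ stays within a single sweep, and (iii) contributes exactly one counter reversal. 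Carrying the value being halved in the head position, so that successive rounds consume geometrically less of the tape, is the key device; checking that the midpoint/parity detection at each round can be done without a second unbounded register and without reversing the head is where the real work and the chief risk lie.

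If a clean two-sweep logarithmic recognizer proves awkward, I would fall back on the template of Propositions~\ref{finiteCrossing2DFA} and~\ref{finiteCrossing2NFA}, where the non-regular set of counter values (the composites) is generated by storing a value, entering $f$, and then verifying a divisibility relation. There the machine fails to be finite-crossing only because the verification is carried out by moving the \emph{input} head back and forth; here I would push that back-and-forth off the input and onto the counter, replacing the $\Theta(n)$ input crossings by a binary (doubling-based) divisibility test using $O(\log n)$ counter reversals while the input head still makes only two sweeps. In either route the conclusion is the same: once the recognizer is in hand, non-regularity of $S(M)$ follows from the eventual-periodicity criterion of the first paragraph, and the sharp contrast with the one-sweep case is precisely Proposition~\ref{NPDAStorage}.
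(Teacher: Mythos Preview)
Your reduction is exactly what the paper does: isolate a distinguished state $f$, intersect $S(M)$ with $f\,1^{+}$, and show that the resulting set of exponents is the powers of~$2$, hence not eventually periodic. The gap is in the construction. Over a unary input $a^{n}$, the right-to-left sweep gives you one genuine counter together with a head position that behaves as a decrement-only register whose sole observable event is reaching the left end-marker. Your halving scheme must, at each of $\log n$ rounds, transfer the current value between these two registers at half rate; but the head cannot hand a value back once it has absorbed one, and it offers only a single zero-test for the whole sweep. Concretely: round one (decrement the counter by~$2$ per head step) ends with counter~$0$ and head at position $n/2$; the only way to extract anything in round two is to run the head all the way to the end-marker, leaving, say, counter $=n/4$ and the sweep exhausted, with $\log n-2$ halvings still outstanding. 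The geometric-budget intuition is correct for total head travel but wrong for the number of usable zero-tests: the head supplies one, not one per round, so the recursion cannot be sustained. Your fallback to composites via a ``binary divisibility test'' faces the identical obstruction---in the decisive sweep you again have one counter and one monotone head.

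The paper sidesteps this by letting the input carry the intermediate halved values. It takes inputs of the form $a^{i_1}b^{j_1}\cdots a^{i_k}b^{j_k}$; the left-to-right sweep verifies $i_m=j_m$ for each block (one counter reversal per block, ending at~$0$); the machine then reloads $i_k$ into the counter and enters the distinguished state~$f$; the right-to-left sweep checks $j_{m-1}=i_m/2$ block by block, each check being ``decrement the counter by~$2$ per~$b$, verify zero at the block boundary, reload $i_{m-1}$ from the $a$'s.'' Accepted inputs thus have $(i_1,\ldots,i_k)=(1,2,\ldots,2^{k-1})$, so $S(M)\cap f1^{+}=\{f1^{2^{m}}:m\ge 1\}$; with $k=\Theta(\log n)$ blocks the counter makes $O(\log n)$ reversals in two sweeps. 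The moral: move the $\log n$ halvings out of the machine's internal recursion (where a single counter over unary input cannot sustain them) and into the block structure of the input (where each halving is a single compare-and-reload against an explicit witness on the tape).
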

\begin{proof}
Construct $M$ which, when given input $w$, operates as follows:
\begin{enumerate}
\item
$M$ makes a left-to-right sweep of the input $w$ and checks that
it is of the form   $$\rhd a^{i_1}b^{j_1}a^{i_2}b^{j_2} \cdots a^{i_k}b^{j_k}\lhd$$
for some $k \ge 1, i_1, \ldots, i_k, j_1, \ldots, j_k \ge 1$.
It uses the counter to check that $i_i = j_1, \ldots, i_k = j_k$.
At the end of this  process, the counter is zero.
\item
Then $M$ moves its input head left and increments the counter
to value $i_k  (= j_k)$ and enter a {\em unique} state $f$.  Thus the
configuration of the counter and state at this time is  $f c_1^{i_k}$.
The state $f$ is {\em only} entered at this time.
\item
Next, $M$ continues moving left checking that $j_{k-1} = i_k/2,
j_{k-2} = i_{k-1}/2, \ldots, j_1 = i_2/2 = 1$ and accepts.   (This is
possible because there are two copies of $i_k$ in each block.)
\end{enumerate}

\noindent
$S(M)$ is non-regular; otherwise
$S(M) \cap f c_1^+  = f \{c_1^{2^n} ~|~ n \ge 1 \}$
would be regular.
Clearly $M$ makes $O(\log n)$ reversals on the counter. 
\qed \end{proof}
Hence, the store languages of one-way $\DCA$s are regular by Proposition
\ref{NPDAStorage}, but two-way $\DCA$s are not.

Next, the store language of two-way and one-way
nondeterministic stack automata will be addressed. 
In \cite{KutribCIAA2016}, it was shown that the
store language of a one-way stack automaton is regular. Here,
an alternative simple proof of 
this result is provided by using the
general connections established between one-way automata and
two-way automata in Corollary \ref{onewaytwoway}, and an existing older result on two-way stack automata. 
In \cite{Stack2}, it was shown
that the set of all words that can appear in the store of a two-way
stack automaton $M$ on an input $w \in \Sigma^*$ (not in general over all words, but over only a single word), when $M$ ``falls
off'' the right end-marker of $w$, is a regular language
(this was used as a key step to showing all two-way stack languages
are recursive). This fact will be combined
with the results of this section to show that all store languages
of one-way nondeterministic stack automata are regular. Two technical lemmas
are required before a proof of the main result (essentially
used to convert the notation used in \cite{Stack2} to our
notation).
\begin{lemma}
\label{firsthalf}
Let $M$ be a two-way nondeterministic stack automaton. Then
$\{ q x \rw y \mid (q_0, \rhd \lhd,  Z_0\rw , 1) \vdash^* 
(q, \rhd \lhd, x \rw y,1)\} \in \LFam(\REG)$.
\end{lemma}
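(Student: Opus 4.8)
The plan is to reduce the statement to the result of \cite{Stack2} already recalled in the surrounding text: for a two-way stack automaton and a \emph{fixed} input word $w$, the set of store contents with which the machine falls off the right end-marker of $w$ is regular. The key observation that makes this applicable is that in the statement the input is the single fixed word $\epsilon$ (scanned as $\rhd\lhd$), so there is no quantification over inputs left to handle; the only work is to re-express the configurations of interest (state $q$ with the input head back on the left end-marker, position $1$) as fall-off configurations of a suitably modified machine.

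First I would, for each state $q\in Q$ of $M$, construct a two-way stack automaton $M_q$ that simulates $M$ step-for-step, with two modifications. (i) I delete from $M_q$ all of $M$'s transitions that move the input head rightward off $\lhd$, so that under the pure simulation $M_q$ can never fall off the right end-marker. (ii) I add a nondeterministic ``terminating phase'': whenever $M_q$ is in state $q$ with its input head on $\rhd$ (position $1$), it may enter fresh states, leave the stack entirely untouched (using stack no-op instructions, so that both the contents and the head marker $\rw$ are preserved), and advance the input head rightward past $\lhd$, thereby falling off the right end-marker. Since a computation that reaches a position-$1$ configuration cannot previously have fallen off the right end (the head would then be stuck off the tape), modification (i) removes no path to any configuration $(q,\rhd\lhd,x\rw y,1)$, while modification (ii) is the only way $M_q$ can fall off the right. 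Hence the set of store words $x\rw y$ with which $M_q$ falls off the right end-marker on input $\epsilon$ is exactly
$$R_q \;=\; \{\, x\rw y \mid (q_0,\rhd\lhd,\rw Z_0,1) \vdash_M^{*} (q,\rhd\lhd,x\rw y,1)\,\}.$$

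By the result of \cite{Stack2} applied to $M_q$ on the fixed input $\epsilon$, each $R_q$ is regular. Prepending the one-letter string $q$ keeps $q R_q$ regular, and the language in the statement is the finite union $\bigcup_{q\in Q} q R_q$, hence regular, as $\LFam(\REG)$ is closed under finite union and left concatenation by a fixed symbol. The main point requiring care—and the reason this bridging lemma is needed at all—is the notational translation: one must verify that the notion of ``store'' in the formulation of \cite{Stack2} records the stack-head position, so that the regular set it supplies really consists of words of the shape $x\rw y$, and that the terminating phase of $M_q$ is a legitimate sequence of two-way stack moves that slides the input head off the right while preserving the stack and its head marker. Both checks are routine, but they are precisely the gap between the formulation in \cite{Stack2} and the statement proved here.
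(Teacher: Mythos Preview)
Your proposal is correct and follows essentially the same approach as the paper's proof: both apply the \cite{Stack2} result on the fixed empty input and modify the automaton so that the configurations of interest (state $q$, head on $\rhd$) become exactly the fall-off-the-right configurations, with the stack left untouched during the added transitions. The only organizational difference is that you build a separate machine $M_q$ for each target state and take the finite union afterwards, whereas the paper builds a single machine $M'$ that may nondeterministically jump off the right from any state; your version is in fact slightly more careful in explicitly restricting the terminating phase to position~$1$ and pruning the original off-the-right transitions.
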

\begin{proof}
In \cite{Stack2}, it is shown that, for each word $\rhd w \lhd$,
and each $q\in Q$
then $\{xqy \mid (q_0, \rhd w \lhd,  Z_0\rw,1) \vdash^* (q,\rhd w \lhd, x \rw y, |w|+3)\}$
is a regular language. Then it is clear that, using the empty word, 
$\{q x \rw y \mid (q_0, \rhd  \lhd,  Z_0\rw,1) \vdash^* (q, \rhd \lhd , x \rw y,3)\}$ is regular.

Let $M'$ be a new two-way nondeterministic stack machine with state set $Q \cup Q'$, 
$Q' = \{q' \mid q \in Q\}$ (primed versions). Then $M'$ simulates $M$, but at any nondeterministically
chosen step, if 
the simulated $M$ is
in state $q$, $M'$ can nondeterministically switch to $q'$ and move the input head past
the right end-marker using a new state $q'$. Then 
$X = \{q' x \rw y \mid (q_0, \rhd  \lhd,  Z_0\rw,1) \vdash_{M'}^* (q', \rhd \lhd , x \rw y,3), q' \in Q'\}$ which
is regular. Let $h$ be a homomorphism that maps each $q' \in Q'$ to $q$ and fixes all other letters.
Indeed, $h(X) =  \{ q x \rw y \mid (q_0, \rhd \lhd,  Z_0\rw,1) \vdash_M^* 
(q, \rhd \lhd, x \rw y,1)\} \in \LFam(\REG)$.
\qed \end{proof}

\begin{lemma}
\label{secondhalf}
Let $M = (Q,\Sigma,\Gamma,\delta,q_0,F)$ be a two-way nondeterministic stack automaton. Then
$\{ q x \rw y \mid (q, \rhd \lhd,  x \rw y ,1) \vdash_M^* 
(q_f, \rhd \lhd, z,1), q_f \in F, z\in \Gamma^*\} \in \LFam(\REG)$.
\end{lemma}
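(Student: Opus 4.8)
The plan is to reduce the co-accessibility statement to the accessibility statement already proved in Lemma \ref{firsthalf} by running $M$ in reverse. Concretely, I would construct a two-way nondeterministic stack automaton $N$ whose forward computations simulate those of $M$ backwards, so that the store configurations reachable by $N$ are exactly the words $qx\rw y$ from which $M$ can still reach acceptance on empty input. Since the target set is taken over the empty input $\rhd\lhd$, the input head of $N$ only ever scans $\rhd$ or $\lhd$, so reversing the direction of the input-head moves of $M$ causes no difficulty. The key point is that co-accessibility for $M$ becomes ordinary accessibility for $N$, which is precisely what Lemma \ref{firsthalf} handles.

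The machine $N$ would work in two phases. In a setup phase, using a fresh set of states disjoint from $Q$, $N$ pushes an arbitrary word onto its stack and then uses read-only interior moves to place its stack head at an arbitrary cell; this realizes, as a reachable configuration, every possible accepting configuration $(q_f,\rhd\lhd,z,1)$ of $M$ with $q_f \in F$ and arbitrary stack content $z$ (the head marker $\rw$ being part of $z$). It then enters such a state $q_f$ and begins the reverse simulation. In the reverse phase, each transition of $M$ is undone: a top-of-stack rewrite that replaced a symbol $a$ by a word $y$ is reversed by verifying, using the read access of the stack, that the current top spells $y$, popping it symbol by symbol through finitely many intermediate states, and restoring $a$; a pop is reversed by a push and vice versa; and a read-only interior move is reversed by moving the head in the opposite direction and then verifying, in an intermediate state, that the symbol now scanned is the one the forward transition had read. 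Because every verification step may fail, only those reverse computations that correspond to genuine forward computations of $M$ survive, so $N$ faithfully reverses $M$.

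With $N$ in hand, the configurations reachable by $N$ (with the input head back at position $1$) whose state lies in $Q$ are precisely $\{qx\rw y \mid (q,\rhd\lhd,x\rw y,1)\vdash_M^* (q_f,\rhd\lhd,z,1),\ q_f\in F,\ z\in\Gamma^*\}$, the set to be shown regular. By Lemma \ref{firsthalf} applied to $N$, the set $\{p\,u\rw v \mid (q_0^N,\rhd\lhd,\rw Z_0,1)\vdash_N^*(p,\rhd\lhd,u\rw v,1)\}$ is in $\LFam(\REG)$; intersecting it with the regular language $Q\Gamma^*$, which discards the setup-phase and intermediate verification states, and using closure of $\LFam(\REG)$ under intersection yields the claim. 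I expect the main obstacle to be carrying out the reverse simulation faithfully: precisely reversing multi-symbol top-of-stack rewrites and, especially, the read-only interior moves, for which the forward transition's read symbol is not determined by the reverse move and must instead be checked after the head relocates, so that no spurious configurations are admitted into the accessible set of $N$.
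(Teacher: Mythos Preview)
Your proposal is correct and follows essentially the same approach as the paper: construct a reverse machine that first nondeterministically loads an arbitrary accepting configuration onto the stack (in fresh states), then simulates $M$ backwards transition-by-transition, apply Lemma~\ref{firsthalf} to the reverse machine, and finally intersect with $Q\Gamma^*$ to discard the auxiliary states. Your treatment is slightly more explicit than the paper's about how to reverse multi-symbol rewrites and read-only interior moves, but the underlying idea is identical.
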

\begin{proof}
In a standard proof that shows a one-way nondeterministic stack automaton is closed under
reversal, from an automaton $M$, another $M'$ is constructed that guesses the final stack
contents and pushes it while also guessing the position of the read head inside (using new states), guesses a final state of $M$, then simulates $M$ ``in reverse''; if $M$ pushes, $M'$ pops; if $M$ pops, $M'$  pushes, if $M$ moves left in the stack, $M'$ moves right, etc. The same construction works
for two-way nondeterministic stack automata on $\epsilon$ input.

Hence, from $M$,
let $M' = (Q',\Sigma,\Gamma,\delta',q_0',F')$ be a new two-way nondeterministic stack automaton constructed in this way. It does not ever move its input head, and
on a new initial state $q_0'$, nondeterministically
guesses a word $z$ and puts it on the stack, then on another new state $q_1'$, moves the read head
of the stack to an arbitrary position inside (thus guessing  $z_1\rw z_2$), then $M'$
nondeterministically switches to any final state of $M$. From
there, $M'$ simulates
$M$ in reverse. So, if $M$ moves right in the stack, then $M'$ moves
left, if $M$ moves left, then $M'$ moves right. If $M$
replaces the top of the stack symbol $x$ with $b_1 \cdots b_m, m \geq 1$, then $M'$ pops $b_m$ down
to $b_2$ (using states not in $Q$), then replaces $b_1$ with $x$. If
$M$ pops $x$, then $M'$ pushes $x$, etc. 

Then, 
$X= \{ q x \rw y \mid (q_0', \rhd \lhd,  Z_0\rw,1) \vdash_{M'}^* 
(q, \rhd \lhd, x \rw y,1),q \in Q'\}$  is regular by Lemma \ref{firsthalf}.
Furthermore, $X \cap Q \Gamma^*$ is regular (thus omitting configurations reached on any new states
is also regular since regular languages are closed under intersection).
This set is equal to $\{qx \rw y \mid (q_0', \rhd \lhd,  Z_0\rw, 1)
\vdash_{M'}^* (q_1', \rhd \lhd, z_1 \rw z_2, 1) \vdash_{M'}
(q_f, \rhd \lhd, z_1 \rw z_2, 1) \vdash_{M'}^* (q, \rhd \lhd, x \rw y, 1), q_f \in F, q \in Q\}$. Further,
this set is equal to
$\{ q x \rw y \mid (q, \rhd \lhd, x \rw y,1) \vdash_{M}^* 
(q_f, \rhd \lhd , z_1 \rw z_2,1), q_f \in F, z\in \Gamma^*\}$, which must
therefore be regular.
\qed \end{proof}

By intersecting the two regular languages in the previous two lemmas, the following is obtained:
\begin{proposition}
Let $M$ be a two-way nondeterministic stack automaton such that $L(M) = \{\epsilon\}$. Then $S(M)$ is regular.
\end{proposition}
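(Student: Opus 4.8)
The plan is to characterize $S(M)$ as the intersection of a ``reachable'' store-configuration language and a ``co-reachable'' one, and then invoke closure of $\LFam(\REG)$ under intersection, with Lemmas \ref{firsthalf} and \ref{secondhalf} supplying the two regular languages. Because $L(M)=\{\epsilon\}$, every accepting computation runs on the fixed input $\rhd\lhd$, so $q x \rw y$ lies in $S(M)$ exactly when there is a computation $(q_0,\rhd\lhd,\rw Z_0,1)\vdash_M^{*}(q,\rhd\lhd,x\rw y,p)\vdash_M^{*}(q_f,\rhd\lhd,z,p')$ with $q_f\in F$. First I would record that on $\rhd\lhd$ the input head can only sit at $p\in\{1,2,3\}$ (on $\rhd$, on $\lhd$, and off the right end), so the head position at the intermediate configuration ranges over a finite set.

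Next I would split $S(M)$ according to this intermediate position, writing $S(M)=\bigcup_{p\in\{1,2,3\}}(A_p\cap B_p)$, where $A_p=\{qx\rw y\mid (q_0,\rhd\lhd,\rw Z_0,1)\vdash_M^{*}(q,\rhd\lhd,x\rw y,p)\}$ is the set of store configurations reachable with the head at $p$, and $B_p=\{qx\rw y\mid (q,\rhd\lhd,x\rw y,p)\vdash_M^{*}(q_f,\rhd\lhd,z,p'),\ q_f\in F\}$ is the set from which acceptance is co-reachable with the head at $p$. The inclusion ``$\subseteq$'' is immediate (take $p$ to be the head position at the witnessing intermediate configuration), and ``$\supseteq$'' holds because any $qx\rw y\in A_p\cap B_p$ can be threaded into one accepting computation by concatenating a run from the start to $(q,\rhd\lhd,x\rw y,p)$ with a run from $(q,\rhd\lhd,x\rw y,p)$ to acceptance; the input stays $\rhd\lhd$ throughout, which is legitimate since $\epsilon\in L(M)$. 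Each $A_p$ and each $B_p$ is regular: $A_p$ is obtained from the construction of Lemma \ref{firsthalf} by letting $M$ run off the right end-marker precisely from configurations whose head sits at position $p$ and then applying the two-way stack result of \cite{Stack2}, while $B_p$ is obtained by the reverse-simulation idea of Lemma \ref{secondhalf}. Since $\LFam(\REG)$ is closed under finite union and intersection, $S(M)$ is regular.

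The step I expect to be the main obstacle is exactly the synchronization of the input-head position at the intermediate configuration. Lemmas \ref{firsthalf} and \ref{secondhalf} produce store configurations as plain strings $qx\rw y$ in which the head position has been discarded, so a careless intersection of ``reachable'' with ``co-reachable'' could pair a configuration that is reachable only with the head on $\rhd$ with a continuation that accepts only with the head on $\lhd$, yielding a string lying on no single accepting run. Keeping the position explicit and taking the finite union over $p\in\{1,2,3\}$ is what makes the intersection both sound and complete; the one place needing care is verifying that the forward and backward constructions can be anchored at a prescribed position $p$ rather than merely at ``some'' position.
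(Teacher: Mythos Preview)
Your approach is correct and is essentially the paper's: intersect the reachable store configurations (Lemma~\ref{firsthalf}) with the co-reachable ones (Lemma~\ref{secondhalf}) and use closure of $\LFam(\REG)$ under intersection. The paper writes the intersection only for the intermediate head position $p=1$ and does not explicitly address the synchronization issue you flag; your decomposition $S(M)=\bigcup_{p\in\{1,2,3\}}(A_p\cap B_p)$ is a genuine refinement that makes the argument rigorous, though since on $\rhd\lhd$ the head position ranges over a fixed finite set it can equally well be absorbed into the state, which is presumably what the paper leaves implicit.
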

\begin{proof}
From Lemmas \ref{firsthalf} and \ref{secondhalf}, and since regular
languages are closed under intersection,
$\{ q x \rw y \mid (q_0, \rhd \lhd,  Z_0\rw,1) \vdash^* (q, \rhd \lhd, x \rw y,1) \vdash^* (q_f,  \rhd \lhd, z,1), q_f \in F, z\in \Gamma^*\} \in \LFam(\REG)$.
\qed \end{proof}

From this,  from Corollary \ref{onewaytwoway}, and since the regular languages are closed under homomorphism, the following is obtained:
\begin{corollary}
If $M$ is a two-way nondeterministic stack automaton such that $L(M)$
is finite, then $S(M) \in \LFam(\REG)$.
\end{corollary}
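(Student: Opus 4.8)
The plan is to reduce this corollary to the immediately preceding proposition (two-way nondeterministic stack automata accepting only $\{\epsilon\}$ have regular store languages) by invoking the family equivalences of Corollary \ref{onewaytwoway}. The essential point is that a two-way nondeterministic stack automaton is exactly a two-way nondeterministic $\Omega$-machine where $\Omega$ is the stack store type, so the general machinery applies verbatim.

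First I would fix $\Omega$ to be the stack store type and take $\MFam_2$ to be the set of all two-way nondeterministic $\Omega$-machines, with $\MFam_1$ the corresponding one-way set. Then $M \in \MFam_2$, and since $L(M)$ is assumed finite, $S(M)$ lies in family~(4) of Corollary \ref{onewaytwoway}, namely $\{S(N) \mid N \in \MFam_2,\ L(N) \mbox{~finite}\}$.

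Next I would use that family~(4) equals family~(3) in Corollary \ref{onewaytwoway}. This gives a two-way nondeterministic stack automaton $M'$ with $L(M') = \{\epsilon\}$ and $S(M') = S(M)$. Applying the preceding proposition to $M'$, its store language $S(M')$ is regular; since $S(M') = S(M)$, the store language $S(M)$ is regular, as desired.

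I do not anticipate a substantive obstacle here, as the result is essentially a bookkeeping composition of two already-established facts. The only point requiring care is confirming that the stack store type genuinely instantiates the abstract store-type framework so that Corollary \ref{onewaytwoway} is applicable; this is precisely how stack automata were set up earlier in the paper, so the reduction is legitimate and the argument is short.
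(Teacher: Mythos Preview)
Your proposal is correct and matches the paper's own approach: the paper states this corollary directly as a consequence of the preceding proposition (the $L(M)=\{\epsilon\}$ case) together with Corollary~\ref{onewaytwoway}, which is exactly the reduction you describe via the equality of families~(3) and~(4).
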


\begin{corollary}
If $M$ is a one-way nondeterministic stack automaton, then $S(M) \in \LFam(\REG)$.
\end{corollary}

Corollary \ref{onewaytwoway} is also useful in other circumstances. For example, if a one-way machine
model has store languages in some family $\LFam$ that is closed under homomorphism and
$\LFam$ has a decidable emptiness problem, then the corresponding two-way model has its store language on a fixed word $w$ being in $\LFam$. By testing whether this store language is non-empty, this is
determining whether $w$ is accepted by the two-way machine. Hence, membership is decidable for two-way machines. Therefore, for all one-way models studied here where the store languages are in
$\LFam(\REG)$ or $\LFam(\NCM)$, membership in the corresponding two-way models is then
decidable.

\section{Applications to Right Quotient}

There are some nice applications of the results in this paper.
For example, it was shown in \cite{DLTJournalIJFCS} that it is decidable whether
the language accepted by a one-way reversal-bounded pushdown automaton is
dense (the set of subwords is equal to $\Sigma^*$). Furthermore, this problem is
also decidable for nondeterministic Turing machines with a one-way read-only
input tape and a reversal-bounded worktape \cite{DenseJALC} (using Proposition
\ref{TMReg} proven here).
Also, certain applications to problems in the area of verification and model checking are presented in \cite{StoreApplications}.
Another application is addressed here.

A general proof is exhibited whereby it is shown that any
deterministic automata class $\MFam$ obtained from so-called ``readable'' store types, 
where the nondeterministic machines with the same store types
only have regular store languages, 
then $\LFam(\MFam)$ is closed under right quotient with regular languages.
This is perhaps surprising since right quotient seems to be quite
difficult for deterministic machines.

\begin{definition}
Let $\Omega$ be a store type. Define $\Omega$ to be {\em readable} if the following are true:\begin{itemize}
\item $\Omega$ has stay instructions.
\item At any point, if the store contains $y$ say, it is possible to switch to a configuration where
 the store can be read one letter at a time, either from left-to-right (like a queue), or right-to-left (like a pushdown).
\end{itemize}
\end{definition}
The first condition is enforcing that it is possible to keep
the same store contents. For example, with a pushdown automaton,
it is always possible to replace the top of the pushdown $x$
with $x$, thereby keeping it the same. One could define a store type which is a pushdown
with only push and pop instructions (the size of the stack is not allowed to stay the same), and such
a store type would not be readable.

\begin{proposition}
Let $\Omega_1, \ldots, \Omega_k$ be readable store types.
Let $\MFam_N$ be the set of all one-way nondeterministic $(\Omega_1, \ldots, \Omega_k)$-machines, and let
$\MFam_D$ be the set of all one-way deterministic $(\Omega_1, \ldots, \Omega_k)$-machines.
If $\SFam(\MFam_N) \subseteq \LFam(\REG)$, then
$\LFam(\MFam_D)$ is closed under right quotient with regular
languages.
\end{proposition}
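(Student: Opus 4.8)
The plan is to turn a deterministic machine $M \in \MFam_D$ (with $L = L(M)$) and a regular language $R$ into a deterministic machine $M' \in \MFam_D$ accepting $L R^{-1}$ by a two-phase strategy: on input $x$, $M'$ first deterministically simulates $M$ on $x$ to reach the store configuration $c_x = q\gamma_1\cdots\gamma_k$ that $M$ occupies just before it would read past $x$, and then, at its right end-marker $\lhd$, verifies that $c_x$ lies in a fixed \emph{regular} set $F$ of configurations from which $M$ can read some $y \in R$ and accept. The heart of the argument is to choose $F$ so that ``$c_x \in F$'' is equivalent to ``$x \in L R^{-1}$'', and to obtain $F$ as a store language so that the hypothesis makes it regular.

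Concretely, I would define
\[
F = \{q\gamma_1\cdots\gamma_k \mid \exists x',y:\ M \text{ reaches } q\gamma_1\cdots\gamma_k \text{ after reading } x',\ y \in R,\ x'y \in L\},
\]
and realize $F$ via a store language as follows. Construct a one-way nondeterministic $N \in \MFam_N$ over the same store types: $N$ simulates $M$ from $M$'s initial configuration on a guessed input; at a nondeterministically chosen step it performs a store-preserving move (available by the first readability condition) to enter a one-step distinguished copy $\langle q,\mathrm{mark}\rangle$ of $M$'s current state $q$ \emph{without altering any store}, and then resumes simulating $M$ while running a finite automaton for $R$ on the remaining input in the finite control; $N$ accepts iff $M$ accepts and the post-mark input is in $R$. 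Then $\langle q,\mathrm{mark}\rangle\gamma_1\cdots\gamma_k$ occurs in an accepting computation of $N$ exactly when $q\gamma_1\cdots\gamma_k \in F$. Since $N \in \MFam_N$, the hypothesis $\SFam(\MFam) \subseteq \REG$ gives $S(N) \in \LFam(\REG)$; intersecting with the regular set of configurations whose state is marked and applying the homomorphism renaming each $\langle q,\mathrm{mark}\rangle$ to $q$ yields $F \in \LFam(\REG)$.

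Next I would build $M'$. On input $x\lhd$, $M'$ deterministically simulates $M$ on $x$ up to the $x$-boundary, reaching $c_x$ and holding $q$ in its finite control (rejecting if $M$ blocks). At $\lhd$ it uses the second readability condition to read each store $\gamma_i$ one symbol at a time, feeding the symbols into a deterministic automaton for $F$; since each store is read only left-to-right or right-to-left and the $\Gamma_i$ are disjoint, $M'$ tests membership in the variant of $F$ obtained by reversing each block read in reverse, which is again regular. Every phase is deterministic, so $M' \in \MFam_D$. Correctness is then a Myhill--Nerode-style consequence of determinism: if $c_x \in F$, witnessed by $x'$ reaching $c_{x'}=c_x$ with $x'y\in L$, $y\in R$, then $M$ runs identically on $xy$ and $x'y$ from $c_x$, so $xy \in L$ and $x \in L R^{-1}$; the converse uses $x'=x$. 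Hence $L(M') = L R^{-1}$.

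The main obstacle is the regularity of $F$: this is exactly where $\SFam(\MFam) \subseteq \REG$ is indispensable, and where the marking trick does the real work, converting the semantic ``co-reachable via $R$'' set into the flagged part of a genuine store language of an $\MFam_N$-machine. A secondary point needing care is that both $N$ and $M'$ must remain legal $(\Omega_1,\ldots,\Omega_k)$-machines, i.e.\ their instruction sequences must satisfy each $L_{I,i}$; the store-preserving mark move contributes nothing (in particular no reversals), and the verification read-out is a single directional pass, so the two clauses of readability are precisely what keep both constructions inside the prescribed store types.
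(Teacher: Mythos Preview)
Your proposal is correct and follows essentially the same approach as the paper: build a nondeterministic machine that simulates $M$ and at a guessed point passes through a distinguished marked state without changing the store, use the hypothesis to extract the marked slice of its store language as a regular set, and then have the deterministic machine simulate $M$ to the end-marker and verify the resulting configuration against that regular set by reading out the stores. The paper uses a pair of primed state copies $q',q''$ where you use $\langle q,\mathrm{mark}\rangle$, and its auxiliary machine reads the real input rather than guessing it, but these are cosmetic differences; if anything, your treatment of per-store reversal and of $L_{I,i}$-validity is slightly more explicit than the paper's.
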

\begin{proof}
Let $M_1$ be a deterministic machine $M_1 \in \MFam_D$ with state set $Q$.
Let $M_2$ be a $\DFA$. 
A deterministic machine $M_5 \in \MFam_D$ will be built
accepting the right quotient of $L(M_1)$ with $L(M_2)$.

First, build a new intermediate 
{\bf nondeterministic} machine $M_3 \in \MFam_N$ 
with states $Q \cup Q' \cup Q''$ with $Q, Q', Q''$ being disjoint, and $Q'$ being primed versions of
states in $Q$ ($Q''$ described below).  It accepts the following language:
$$\{ w x \mid wx \in L(M_1), x \in L(M_2)\}.$$

Intuitively, $M_3$ simulates $M_1$, and at some nondeterministically guessed spot, starts
simulating $M_2$ in parallel using a second component simulating $M_2$ in the states.
Specifically, at the nondeterministically guessed spot, 
if it's in state $q$, it switches to state $q' \in Q'$, 
then to a state in $Q''$ (requiring the store contents to not change
between these configurations, which is possible by the first condition of the readable store type definition), then $M_3$ continues the simulation only using states from $Q''$ (with two components, the second component simulating $M_2$). Certainly, $M_3$ is nondeterministic as it needs to guess where to start
simulating $M_2$.

Next, construct the store language $S(M_3)$. It is regular by the assumption. In fact, only words of $S(M_3)$ that begin with $Q'$ are needed. 
Consider $S(M_3) \cap Q' \Gamma^*$, and build a $\DFA$ $M_4$ accepting this set.

Now build a new {\bf deterministic} machine $M_5 \in \MFam_D$ 
that operates as follows. It simulates $M_1$  on the input $w$ until it hits the right input end-marker. At that point, say $y$ is the contents of the
store, and it is in state $q$. 
First, assume that there is only one store which can be read from left-to-right (the store is readable).
Then read $q'$ in the store language $\DFA$ 
$M_4$ and see if $q' y$ is in the store language deterministically on the store. If using a store that reads from right-to-left, instead use a $\DFA$ accepting $S(M_4)^R$ instead of using $S(M_4)$.
Similarly, if using $k>1$ stores that are all readable (but the store language is still regular),
then $M_4$ is constructed to reverse the subwords from stores read from right-to-left.
In any of the cases, if $M_4$ accepts $q' y$, then $M_5$ accepts the input. 

Let $w \in L(M_5)$. Then reading $w$ in $M_5$ (upon consuming the last letter) takes it to some
configuration $q y$. Then $q' y \in L(M_4)$, and so 
$q' y$ is in the store language of $M_3$, which means that the machine $M_3$ can accept from this configuration. 
And the fact that primed states are being used to enforce that it is at the right spot of the store language ensures that from that point on, the remaining word is in $L(M_2)$.
Thus, there must be some $x$ 
such that $wx$ is in $L(M_1)$ and $x$ is in $L(M_2)$.  

Conversely, if $wx \in L(M_1)$ with $x \in L(M_2)$, then reading $w$
in $M_1$ takes it to some configuration $q y$. Then $q' y$ must be
in $L(M_4)$. Hence, by the construction of $M_5$, $w \in L(M_5)$.

Hence, $\LFam(\MFam_D)$ is closed under right quotient with regular languages.
\qed \end{proof}
In the proof above, if the store languages of machines in $\MFam_N$ can
be effectively constructed, then the machines accepting the right quotients can
also be effectively constructed.

The following classes are readable, and hence the languages are closed under right quotient
with regular languages: deterministic pushdown automata, deterministic one counter automata,
deterministic $k$-flip pushdown automata, and deterministic reversal-bounded queue automata.

For deterministic stack automata, checking stack automata, and variants of $\DTM$s, they are not
exactly readable, and the proof above does not completely apply, but can be adjusted.
With e.g.\ stack automata, when $M_5$ reaches the end of the input, it could be in read mode;
i.e.\ the store contents could be $\gamma = Z_b y_1 \rw y_2 Z^t$ where $y_2 \neq \epsilon$.
In this case, in order to read the stack contents from right-to-left (similarly with left-to-right)
to verify that $\gamma$ is in $M_4$, the position of the read head is lost. (In other words, it
is easy to verify that $Z_b y_1 y_2 \rw Z_t$ is in $L(M_4)$, but not $\gamma$.) 
For deterministic Turing machines, it is possible to mark the position of the read/write head to
make it verifiable.
For stack automata, a slightly more complicated construction is needed. First make $M_4$ a complete $\DFA$ and adjust the stack alphabet to be ordered
pairs, where the first component is an element from $\Gamma$, and the second component
is a state in $M_4$. Whenever $M_5$ simulates the pushing of a symbol of $M_1$, $M_5$ pushes this
as the first component, and for the second component, pushes the state of $M_4$ obtained from the state in the second
component of the previous topmost
symbol by reading the stack symbol pushed. Thus, if the stack contains $(b_0,p_0)\cdots (b_m,p_m)$,
$b_i \in \Gamma, p_i$ is a state of $M_4$, then for all $i$, reading $b_0 \cdots b_i$ in $M_4$ ends in state $p_i$. If a pop instruction occurs, then the state of $M_4$ is recoverable from the second component.
At the end of the input, if $M_5$ is at the top of the stack, then the state in the second component
immediately indicates whether the stack contents is in $L(M_4)$. If $M_5$ is
inside the stack with say $(b_0,p_0) (b_1,p_1) \cdots (b_i,p_m) \rw (b_{i+1},p_{i+1}) \cdots (b_m,p_m)$
on the stack, then $M_5$ simulates $M_4$ starting from $p_i$, and verifies that from there, reading
$\rw b_{i+1} \cdots b_m$ brings $M_4$ to a final state, thus verifying that its contents are in the store language.
(Note that the state sequence $p_{i+1}, \ldots, p_m$ was calculated without reading $\rw$ first, and
therefore is different than reading $\rw b_{i+1} \cdots b_m$) Hence, it is possible to verify that
$b_0 \cdots b_i \rw b_{i+1} \cdots b_m \in L(M_4)$. Therefore, the proof can be adjusted to work for stack
automata and checking stack automata as well.

This implies closure under right quotient with regular languages for several families.
\begin{corollary}
The following language families are closed under right quotient with regular languages: 
\begin{itemize}
\item deterministic stack languages \cite{DetStackQuotient},
\item deterministic checking stack languages,
\item deterministic $k$-flip pushdown languages,
\item deterministic pushdown automata \cite{GinsburgDPDAs},
\item deterministic one counter automata \cite{TAMCJournal},
\item deterministic reversal-bounded queue automata,
\item deterministic one-way read-only input Turing machines with a reversal-bounded worktape.
\end{itemize}
\label{quotientCors}
\end{corollary}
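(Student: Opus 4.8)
The plan is to derive this corollary as a direct application of the preceding proposition to each family in the list. For every entry I must supply exactly two ingredients: (i) that the underlying store type(s) are \emph{readable} in the sense of the definition above, and (ii) that the \emph{nondeterministic} machines built from the same store type(s) accept only regular store languages, i.e.\ $\SFam(\MFam_N) \subseteq \REG$. Once both hold, the proposition immediately yields closure of the corresponding deterministic family under right quotient with regular languages, with no further work.

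First I would assemble the regularity-of-store-language ingredient, which is already available from earlier results in every case. Proposition~\ref{NPDAStorage} handles deterministic pushdown automata, and it also covers deterministic one-counter automata, since a single unrestricted counter is just a pushdown over the unary stack alphabet $\{Z_0,c\}$, so its store language is regular for the same reason. Proposition~\ref{TMReg} supplies the Turing-machine case (one-way read-only input, reversal-bounded worktape). The regularity results shown earlier for reversal-bounded queue automata, for $k$-flip pushdown automata, and for one-way (and hence two-way-with-finite-language) stack automata cover queues, flip pushdowns, and stacks respectively; deterministic checking stack automata fall under the stack case because a checking stack is a restriction of a general stack, so its store language is regular \emph{a fortiori}.

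Next I would verify readability for each store type, which amounts to exhibiting a ``no change'' instruction and a one-letter-at-a-time reading mechanism. A no-op is available in every case: a pushdown (also flip pushdown and one-counter) can replace its top symbol by itself, a counter can add $0$, a queue can enqueue the empty word, a Turing tape can rewrite the scanned symbol and stay, and a (checking) stack can apply the $0$ move without pushing or popping. For reading, pushdown-type stores are read right-to-left destructively by popping, queues are read left-to-right destructively by dequeuing, the Turing tape is read by sweeping its head in one direction, and (checking) stacks are read non-destructively using the read head; in the latter two cases the reading is exactly the non-destructive variant permitted by the definition. I would note that the reversal-bounded restrictions attached to the worktape and queue do not obstruct readability, since the no-op and the single reading pass each contribute at most a bounded number of additional reversals.

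The only genuinely nontrivial point — hence the step I expect to be the main obstacle — is confirming readability cleanly for the stack and checking-stack stores, where reading must be performed non-destructively through the read head while leaving the stack contents intact, and ensuring that the availability of the no-op instruction is compatible with any reversal bound. Everything else is bookkeeping. With readability and regularity established for each listed store type, the preceding proposition applies verbatim and the corollary follows.
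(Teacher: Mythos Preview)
Your proposal is correct and follows exactly the approach the paper intends: the corollary is stated without proof because it is meant to be an immediate application of the preceding proposition, and you have simply spelled out the two required verifications (readability of each store type and regularity of the nondeterministic store languages via the earlier results) that the paper leaves implicit. Your observation that the final reading pass adds only boundedly many reversals, so that the constructed $M_5$ remains in the reversal-bounded class, is a detail the paper does not make explicit but which is indeed needed.
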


This does provide an alternate, much shorter and more general proof for stack and
pushdown automata. It also resolves an explicitly stated unsolved
open problem for $k$-flip pushdown automata \cite{kflipquotient}.
All others are, to our knowledge, also unknown.

It is worth noticing the tight relationship between store
languages and quotients. The intuition behind the closures
under right quotient of all the families in 
Corollary \ref{quotientCors} is that when the deterministic
machines reach the end of their inputs, they can verify that
their store contents are in the regular language constructed
from the store language of a very similar nondeterministic machine.
This same technique can even be true for non-regular store
languages. For example, a similar technique could be
used to show that $\DCM$ is closed under right quotient with 
$\NCM$. This is because when the $\DCM$ reaches the end
of its input, it only needs to verify that its store contents
are in another $\NCM$ language, and the store language of
an $\NCM$ language is in $\DCM$. So it can do this in parallel
with additional counters. However, in \cite{TAMCJournal}, a 
more general technique was used to show that
$\DCM$ is closed under right quotient with even
more general families such as $\NPDA$ and $\NPCM$. 

Note as well that not all deterministic families are closed
under right quotient with regular languages, as
$\DPCM$ is not \cite{TAMCJournal}. Indeed, the store
of a $\DPCM$ is not necessarily in $\DPCM$, so when such
a machine reaches the end of its input, there is not any
way to verify that its store contents are ``good'' by using
a store language within another $\DPCM$ machine.

\section{Space Lower Bounds for Non-Regular Store Languages of Turing Machines}
 \label{sec:bounds}
 
In this section, the lower bounds will be studied on the space
complexity of $\NTM$s and $\DTM$s for the store language not to be
regular. Here, $1\NTM$ ($1\DTM$) is used to denote a nondeterministic
(deterministic) Turing machine with a one-way read-only input and a Turing tape,
and $2\NTM$ ($2\DTM$) is used to denote a nondeterministic
(deterministic) Turing machine with a two-way read-only input and a Turing tape.
 
A configuration of $M$ is a tuple $(q, \rhd w\lhd , x, i)$,
where $q$ is a state, $w$ is the input with the input
head on the $i$'th position, and the worktape contains
string $x$ which includes the read/write head.

Let $M$ be any such Turing machine with either a one-way or two-way read-only input
and one read/write worktape (i.e., store) tape.  The following
two notions of $M$ being $s(n)$ space-bounded are used (see \cite{SublinearBook}):

\begin{enumerate}
\item
$M$ is strongly $s(n)$ space-bounded if, for any input $w$ of length $n$,
all computations on $w$ (accepting or not) use at most $s(n)$ space on
the worktape.
\item
$M$ is middle $s(n)$ space-bounded if, for any input $w$ of length $n$ that 
is accepted, all accepting computations on $w$  use at most $s(n)$ space.
\end{enumerate}

The following known results are needed:
\begin{proposition} \label{known}
$~~$
\begin{enumerate}
\item
$\log \log n$ is the lower bound for accepting non-regular
languages by strongly (middle respectively) space-bounded $2\NTM$s and $2\DTM$s.
\cite{hartmanis,HopcroftUllman}.
\item
$\log n$ is the lower bound for accepting non-regular
languages by strongly (middle respectively) space-bounded $1\NTM$s and $1\DTM$s.
\cite{hartmanis}.
\end{enumerate}
\end{proposition}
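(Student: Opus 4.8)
The plan is to treat all four cases (strongly/middle space bounded, deterministic/nondeterministic) by a single template, since the argument will only ever inspect a single accepting computation, so the weaker ``middle'' hypothesis already suffices and the ``strongly'' case follows a fortiori. The core observation is that a store configuration (a state together with the finite worktape content and the position of the read/write head) of worktape length at most $s$ can be one of at most $N(s) \le c^{s}$ possibilities, for a constant $c$ depending only on $|Q|$ and $|\Gamma|$. I would reduce each bound to a \emph{gap} statement: it suffices to prove that if the machine uses only $o(\log n)$ (respectively $o(\log\log n)$) worktape space, then it in fact uses only $O(1)$ space on all accepted inputs. Once the space is bounded by a constant, the set of reachable store configurations is finite, so the machine is equivalent to a one-way (respectively two-way) finite automaton over the input, both of which accept only regular languages; this gives the ``only regular'' direction whose contrapositive is the stated lower bound.

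For the one-way bound (Part 2) the engine is a minimality-plus-splicing lemma. Fix $s$ and let $w$ be a \emph{shortest} accepted input admitting an accepting computation that uses at least $s$ worktape cells. Record the store configuration at each input-reading boundary along this fixed accepting computation (resolving $\epsilon$-moves at a canonical point; a space-bounded machine either reaches such a point or loops, and a loop may be discarded). I claim all these boundary store configurations are distinct: if the same store configuration recurred at boundaries $i<j$, then, because the input is one-way, the computation after boundary $j$ depends only on that store configuration and the remaining suffix, so deleting the input block between $i$ and $j$ and splicing the tail of the computation onto boundary $i$ yields a \emph{shorter} accepted input whose spliced computation still uses $\ge s$ cells, contradicting minimality. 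Hence the number of boundaries, namely $|w|$, is at most $N(s) \le c^{s}$, so $s \ge \log_{c}|w|$. If the machine used unbounded space, letting $s\to\infty$ produces infinitely many lengths $n=|w|$ with $s(n)\ge \log_{c} n$, contradicting $s(n)=o(\log n)$; therefore the space is bounded and the language is regular.

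For the two-way bound (Part 1) the same skeleton is used, but single boundary configurations must be replaced by \emph{crossing sequences}: at each input boundary, the sequence of store configurations recorded each time the head crosses it. In a halting (looping-free) space-$s$ computation a crossing sequence has length at most $N(s)$, so the number of distinct crossing sequences is at most $N(s)^{N(s)} \le 2^{2^{O(s)}}$. For a shortest accepted input forcing $s$ cells, a repeated crossing sequence at two boundaries would let one cut out the intervening block and paste the two compatible crossing behaviours together (the standard two-way cut-and-paste, valid because each half-computation interacts with the boundary only through its crossing sequence), again producing a shorter accepted input still using $\ge s$ cells. Thus all crossing sequences are distinct and $|w| \le N(s)^{N(s)} \le 2^{2^{O(s)}}$, i.e.\ $s \ge c'\log\log|w|$ for a constant $c'>0$. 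The gap argument then forces constant space and regularity exactly as before.

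The step I expect to be the main obstacle is the two-way cut-and-paste with crossing sequences, carried out simultaneously for nondeterministic machines under the weaker middle-space hypothesis: one must verify that splicing two accepting partial computations with a common crossing sequence again yields a legal \emph{accepting} computation, that its worktape usage is still $\ge s$, and that stationary ($\epsilon$) moves and potential loops are handled so that crossing sequences are finite and well defined. The one-way splicing is comparatively routine, and the bookkeeping for determinism versus nondeterminism collapses because in every case we argue about one fixed accepting computation and only need its existence to be preserved.
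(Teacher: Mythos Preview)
The paper does not prove this proposition; it is quoted as a known result with citations to Stearns--Hartmanis--Lewis and Hopcroft--Ullman and is then used as a black box in the remainder of Section~6. So there is no in-paper argument to compare against beyond noting that your outline is the classical minimality-plus-splicing (one-way) and crossing-sequence (two-way) approach of those references.

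That said, your one-way argument contains a real gap, and you flag the same point as an ``obstacle'' in the two-way case without resolving it there either. You fix $s$, take $w$ to be a shortest accepted input with an accepting computation using at least $s$ cells, and then assert that if two boundary configurations $\sigma_i=\sigma_j$ ($i<j$) coincide, the spliced computation on the shortened input ``still uses $\ge s$ cells''. This is false in general: if the maximal space is attained strictly between boundaries $i$ and $j$, the spliced run visits only $\sigma_0,\dots,\sigma_i,\sigma_{j+1},\dots,\sigma_n$ and may use strictly fewer than $s$ cells, so you get no contradiction to the minimality of $w$. A second, smaller slip: you bound the number of boundary configurations by $N(s)$, but your fixed computation uses \emph{at least} $s$ cells, possibly more, so its configurations need not have worktape length $\le s$; you should take $s$ to be the exact space of the chosen run (equivalently, quantify over the values of $s$ that are actually realised as the space of some accepting computation).

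The standard repair is to anchor the cut around the point where space $s$ is witnessed. Let $t$ be a boundary at which the chosen computation has already used $s$ cells. If $\sigma_i=\sigma_j$ with $i<j\le t$, the excised block lies entirely to the left of $t$, so the spliced run still contains $\sigma_t$ (or $\sigma_i=\sigma_t$ when $j=t$) and hence still uses $s$ cells; symmetrically for $t\le i<j$. Thus $\sigma_0,\dots,\sigma_t$ are pairwise distinct and $\sigma_t,\dots,\sigma_n$ are pairwise distinct, giving $|w|\le 2N(s)$ and $s\ge c'\log|w|$, which is enough. The identical anchoring---choose a boundary whose crossing sequence records the visit achieving space $s$, and only excise blocks lying entirely on one side of it---closes the corresponding gap in your two-way argument.
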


In addition to the usual notion of the store language of space-bounded
Turing machines, also the {\em strong store language} will be
considered which is the set of reachable configurations; that is, 
if $M$ is a $2\NTM$ ($1\NTM$, $2\DTM$, $1\DTM$),
the strong store language of M is  
$S^s (M) = \{qw  \mid $ there is computation of
$M$ (accepting or not) on some input of length $n$
that enters a configuration with state $q$ and $w$ on the worktape$\}$.

\begin{proposition} \label{result1}
If $M$ is a middle $s(n)$ space-bounded $2\NTM$
and $s(n)$ grows slower than $\log \log n$, then $S(M)$ is regular.
\end{proposition}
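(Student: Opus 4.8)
The plan is to present $S(M)$ as a homomorphic image of a language recognized by a middle space-bounded $2\NTM$ whose space bound stays $o(\log\log n)$ in \emph{its own} input length, and then to invoke Proposition~\ref{known}(1). The naive attempt --- build a recognizer that reads a candidate store configuration $qw$ and checks $qw\in S(M)$ --- fails on space grounds: since $S(M)\subseteq Q\Gamma^*$, the string $qw$ has length only $|w|\le s(n)$, so such a recognizer would be allotted a budget of only $o(\log\log|w|)$, whereas merely holding a worktape of $M$ (of length up to $s(n)\approx|w|$) already costs $\Theta(|w|)$ cells. The way out is to \emph{append the witness to the input}, so that the recognizer's input length is on the order of $n$ rather than $s(n)$.

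Concretely, I would introduce the padded language
\[
\widehat{S}(M) = \{\, q\,w\,\$\,x' \mid x' \text{ is an input } x\in\Sigma^* \text{ of } M \text{ with exactly one cell marked, at a position } i, \text{ and } M \text{ has an accepting computation on } x \text{ passing through state } q \text{ with worktape } w \text{ and input head at } i \,\},
\]
where \$ is a fresh separator and the marking uses a disjoint copy of $\Sigma$. A $2\NTM$ $M'$ recognizes $\widehat{S}(M)$ as follows. It first copies $w$ from the prefix onto its own worktape (costing $|w|\le s(n)$ cells) and records $q$; it then locates the marked cell of $x'$ by scanning --- this is why the position is \emph{marked} rather than stored, so that no $\Theta(\log n)$ pointer is ever needed. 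From the configuration $(q,w,i)$ it runs two simulations of $M$ that read $x'$ two-way off $M'$'s own input (ignoring the mark): a forward simulation that must reach an accepting configuration, and a backward simulation (reversing $M$'s transitions nondeterministically) that must reach the initial configuration $(q_0,\rw\blank,1)$. Both simulations reload $w$ and relocate the marked cell as needed, and each stores only the current worktape of $M$. Since a configuration lies on an accepting computation iff it is both reachable from the start and can reach acceptance, $M'$ accepts $q\,w\,\$\,x'$ exactly when $qw$ occurs in an accepting computation of $M$ on $x$ through position $i$.

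For the space accounting, note that on an \emph{accepted} input the simulated forward and backward runs are sub-runs of a single accepting computation of $M$ on $x$; because $M$ is middle $s(n)$ space-bounded, that computation never exceeds $s(n)$ worktape cells, so every accepting computation of $M'$ uses at most $s(n)+O(1)$ cells. Moreover $|q\,w\,\$\,x'|=|w|+1+n\in[n,2n+1]$, so $\log\log$ of $M'$'s input length equals $\log\log n+o(1)$; hence $M'$ is a middle $o(\log\log)$ space-bounded $2\NTM$, and Proposition~\ref{known}(1) yields $\widehat{S}(M)\in\LFam(\REG)$. Finally, the homomorphism $g$ that fixes $Q\cup\Gamma$ and erases \$, the marked symbols, and all of $\Sigma$ (possible because these alphabets are disjoint) gives $g(\widehat{S}(M))=S(M)$, and $\LFam(\REG)$ is closed under homomorphism, so $S(M)$ is regular.

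The step I expect to be the main obstacle is the space bookkeeping for $M'$: everything hinges on never storing an input-head position (handled by the explicit position mark) and on confining the worktape copy to $s(n)$ cells, together with the verification that appending $x'$ really keeps the ratio of simulated space to input length inside the $o(\log\log)$ regime --- which is exactly where the hypothesis $s(n)=o(\log\log n)$ is spent. Checking the faithfulness of the reverse simulation, and that the forward/backward decomposition captures precisely the configurations on accepting computations, is routine but must be done with care so that $M'$ has no spurious accepting computations that would violate the space bound.
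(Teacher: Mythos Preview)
Your argument is correct and shares the paper's key insight: append the witnessing input $x$ of $M$ to the candidate configuration, so that the auxiliary $2\NTM$'s input has length $\Theta(n)$ rather than $\Theta(s(n))$, and then invoke Proposition~\ref{known}(1) followed by an erasing homomorphism. The execution, however, differs. You place the configuration $(q,w,i)$ on the input (with $i$ encoded by a marked cell of $x$), copy $w$ to the worktape, and then run two simulations of $M$ from that configuration --- one forward to acceptance and one backward to the initial configuration. The paper instead runs a single forward simulation of $M$ on $x$ from the initial configuration; at a nondeterministically chosen moment it freezes a copy $q'x'$ of the current state and worktape, writes $q'x'\#q'x'$ on its tape, continues the simulation to the right of $\#$, and upon acceptance compares the frozen $q'x'$ against the $qx$ given on the input.

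Your route buys nothing extra and costs a bit more machinery: the backward simulation and the position-marking device are both avoidable, because a forward-from-start simulation already carries the correct input-head position implicitly, and the nondeterministic ``snapshot'' replaces the reachability half of your argument. Your space analysis is sound --- the crucial point, that any successful forward run glued to any successful backward run through $(q,w,i)$ yields \emph{one} accepting computation of $M$ on $x$ and is therefore globally bounded by $s(n)$ --- is exactly what makes the middle space bound transfer to $M'$. The paper's version needs the same observation (every accepting run of $M'$ traces an accepting run of $M$), but gets it for free from the single forward simulation.
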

\begin{proof}
Construct a $2\NTM$ $M'$ which, given an input
$wqx$, where $w$ is over the input alphabet of $M$,
$q$ is a state, and $x$ is over the worktape alphabet of $M$ 
(assume that the state set and
alphabets are distinct) operates as follows:
\begin{enumerate}
\item
$M'$ simulates $M$ on $w$.
\item
At some point nondeterministically chosen, $M$
stops the simulation. Let the state and store
contents of $M$ (and, hence,
also of $M'$) at that time be $q'$ and $x'$. 
$M'$ converts $x'$  to $q'x' \# q' x'$, where
$\#$ is a new symbol.  (Thus $M'$
makes two copies of $q'x'$ separated by $\#$ with $x'$ marking the position of the read/write head).
\item
$M'$ then resumes the simulation of $M$ using only
the area to the right of $\#$ in the worktape.
\item
When $M$ accepts, $M'$ checks that $qx$ on the input
is identical to $q'x'$ on the worktape and  accepts.
\end{enumerate}

Clearly $M'$ is also $s(n)$ space-bounded,  
hence $L(M')$ is regular by Proposition \ref{known}, part 1.
Now, the strings in $L(M')$ are of the form $wqx$. 
A homomorphism deleting $w$ is then applied.  It follows 
that the strong store language is regular.
\qed
\end{proof}
Furthermore, given a Turing machine $M$ that is strongly
$s(n)$ space-bounded, one can build
$M'$ exactly like $M$ but with all states final, and 
$M'$ is middle $s(n)$ space-bounded and $S^s(M) = S(M')$.
Therefore:
\begin{corollary}
If $M$ is a strongly $s(n)$ space-bounded $2\NTM$
and $s(n)$ grows slower than $\log \log n$, then $S^s(M)$ is regular.
\end{corollary}

Next, it will be shown that the $\log \log n$ bound above is tight.
\begin{proposition} \label{result2}
There is a strongly  $\log \log n$ space-bounded
$2\DTM$ $M$ such that $S^s(M)$ is not regular.
\end{proposition}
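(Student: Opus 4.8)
The plan is to build a single $2\DTM$ $M$ whose worktape acts as a binary counter that $M$ drives, at a unique distinguished state $f$, to precisely the values $\{t^2 \mid t\ge 1\}$, and then to observe that the binary representations of the squares do not form a regular set. The device that lets a $\log\log n$-space head reach numbers exponential in its space is the classical ``census'' input format: take the input alphabet to be $\{0,1,\#\}$ and work with words of the form $\#\,\mathtt{bin}(1)\,\#\,\mathtt{bin}(2)\,\#\cdots\#\,\mathtt{bin}(m)\,\#$, where $\mathtt{bin}(i)$ is the binary notation of $i$. For such an input of length $n$ one has $n=\Theta(m\log m)$, so $|\mathtt{bin}(m)|=\Theta(\log m)=\Theta(\log n)$, and a binary counter holding a \emph{position within a single block} occupies only $\Theta(\log\log n)$ worktape cells.

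First I would have $M$ verify the census format, and here lies the point on which the strong space bound rests: $M$ must compare block $i$ with block $i{+}1$ in lockstep. Using its two-way input head, $M$ advances a single position counter $p$ while reading the $p$-th symbols of the two blocks, stopping as soon as either block reaches its closing $\#$, and then checks that the later block is the correct binary increment of the earlier one. Because the comparison halts the instant the shorter of the two blocks ends, $p$ never exceeds the length of an already-verified block, hence never exceeds $\Theta(\log m)=\Theta(\log\log n)$, even on malformed inputs: any first violation is detected while $p$ is still this small, and $M$ rejects. Thus $M$ stays within $O(\log\log n)$ cells throughout verification, and a constant-factor compression of the worktape alphabet then yields the stated $\log\log n$ bound (matching the tightness of Proposition~\ref{result1}).

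Once the format is verified, $M$ scans the last block and loads $v=|\mathtt{bin}(m)|$, a number of size $\Theta(\log n)$, into a binary counter using $\Theta(\log\log n)$ cells. In a separate scratch region it tests whether $v$ is a perfect square by trying $t=1,2,\dots$, forming $t^2$, and comparing with $v$; since $t\le\sqrt v=O(\sqrt{\log n})$ and $t^2\le v=O(\log n)$, both quantities fit in $O(\log\log n)$ cells and the multiplication uses $O(\log\log n)$ space. If $v$ is a perfect square, $M$ erases the scratch, leaves $\mathtt{bin}(v)$ on the worktape, and enters a unique state $f$ (entered nowhere else); otherwise it rejects. Crucially, the potentially expensive loading of $v$ is performed \emph{only after} verification succeeds, so on every input — valid or not — the worktape stays within $O(\log\log n)$ cells and $M$ halts, giving the required strong space bound.

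It then remains to read off the strong store language. By construction $f$ is reachable with worktape content $w$ exactly when $w=\mathtt{bin}(v)$ for a perfect square $v$, and every perfect square $v$ is realized (take $m=2^{v-1}$, so $|\mathtt{bin}(m)|=v$). Hence $S^s(M)\cap f\{0,1\}^* = \{\,f\,\mathtt{bin}(t^2)\mid t\ge 1\,\}$. Since the binary representations of the squares are non-regular — the gaps $2t+1$ between consecutive squares are unbounded, so a pumping argument applies, exactly as a non-regular set is exposed inside a store language in Proposition~\ref{finiteCrossing2DFA} — and the regular languages are closed under intersection, $S^s(M)$ cannot be regular. I expect the main obstacle to be the \emph{strong} bound on malformed inputs: getting the lockstep comparison right so that the position counter is provably capped by a short, already-verified block length, and sequencing the phases so that the length-$\Theta(\log n)$ value $v$ is never formed until the census structure guarantees the last block is short. (If one prefers a fully elementary non-regularity argument, the same construction works verbatim with the test ``$v$ is a power of two whose exponent is a power of two'', giving the slice $\{\,f\,1\,0^{2^{j}}\mid j\ge 0\,\}$, whose non-regularity is the textbook pumping example.)
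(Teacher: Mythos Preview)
Your construction is correct and takes a genuinely different route at the key step. Both you and the paper start from the classical census input $\#\mathtt{bin}(1)\#\cdots\#\mathtt{bin}(m)\#$ and verify consecutive blocks with a small position counter; you are explicit about why the comparison stays within $O(\log\log n)$ cells even on malformed inputs, which the paper leaves implicit. Where the two proofs diverge is in how a non-regular slice is planted in the store at the distinguished state. The paper, after verification, arranges the worktape to hold $1^m\#1^m$ (with $m$ the length of the last block) and enters $f$, so that $S^s(M)\cap f1^+\#1^+=\{f1^m\#1^m\mid m\ge1\}$; you instead keep everything binary, load $v=|\mathtt{bin}(m)|$, filter on $v$ being a perfect square (or $v=2^{2^j}$), and have the worktape carry $\mathtt{bin}(v)$ at state $f$. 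What your route buys is that the final worktape content $\mathtt{bin}(v)$ itself has length $\Theta(\log\log n)$, so the $\log\log n$ bound is visibly maintained through the last phase, whereas the paper's string $1^m\#1^m$ has length $\Theta(m)=\Theta(\log n)$. What the paper's route buys is an immediate non-regularity argument for the copy language; by contrast, your stated justification for the non-regularity of $\{\mathtt{bin}(t^2)\mid t\ge1\}$ --- ``gaps $2t+1$ are unbounded, so pumping applies'' --- is the \emph{unary} argument and does not transfer to binary representations as written (the set is indeed non-regular, but the proof needs more care). Your parenthetical variant yielding the slice $\{f\,10^{2^j}\mid j\ge0\}$ avoids this and I would make it the primary construction.
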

\begin{proof}
Let $L = \{ x_1 \# x_2 \# \cdots  \# x_k \#  ~|~  k \ge 1, x_i \in 1\{0,1\}^*,
x _1  = 1, x_{i+1} = x_i + 1 \mbox{~for~}  1 \le i  < k, x_k = 1^m \mbox{~for some~} m \}$. The addition is binary number addition.
So, e.g., $1 \# 10 \#\ 11 \# 100 \# 101 \# 110 \# 111 \#$ is in $L$.
Construct a $2\DTM$ $M$ which, when given a string $w = x_1 \# x_2 \# \cdots  \# x_k \#$, verifies
that $x_1 = 1, x_k = 1^m$ for some $m \geq 1$, each $x_i$ starts with $1$, and also verifies that each $x_i+1 = x_{i+1}$. To do the latter, $M$ uses the worktape to keep a binary counter referring to a bit position of each string $x_i$. The counter starts
at $1$, then it compares the last bit of $x_i$ to $x_{i+1}$, then it increases the counter by $1$, and
compares the second last bit of $x_i$ to $x_{i+1}$, etc. It is clear that this counter can grow
as the large as the length of the longest $x_i$. As the counter is in binary, this requires approximately
$\log |x_i|$ bits of space.
When $M$ determines that $w$
is in $L$, the worktape will have $m$ on its worktape in binary; call this string $b(m)$.
$M$ then transforms $b(m)$ to $b(m) \#b(m)$ and enters state $f$. 
Then $S^s(M) \cap f \{0,1\}^+ \# \{0,1\}^+ = \{ fw \# w ~|~ w \in 1 \{0,1\}^*\}$,
which is not regular.  Hence $S^s(M)$ is not regular. 

Clearly, on input longer than $k$, $|x_k|$ is approximately $\log k$,
and the worktape is approximately the size of $\log |x_k|$. Thus, $M$ is strongly
$\log \log n$ space-bounded.
\qed
\end{proof}
Hence, the following is immediate:
\begin{corollary}There is a middle  $\log \log n$ space-bounded
$2\DTM$ $M$ such that $S(M)$ is not regular.
\end{corollary}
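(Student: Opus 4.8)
The plan is to obtain the required machine directly from Proposition \ref{result2} by the same strongly-to-middle conversion already recorded just after Proposition \ref{result1}. Proposition \ref{result2} supplies a strongly $\log\log n$ space-bounded $2\DTM$ $M$ whose \emph{strong} store language $S^s(M)$ (the set of reachable configurations $qw$) is non-regular; what remains is to turn this statement about $S^s$ into one about the ordinary store language $S$ of a middle space-bounded machine.

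First I would form $M'$ from $M$ by declaring every state of $M$ to be accepting, leaving the transition function otherwise unchanged. This keeps $M'$ a deterministic two-way machine, since designating states as final does not affect whether $\delta$ offers at most one move. Moreover $M'$ visits exactly the same configurations as $M$ on every input, so it uses exactly the same amount of worktape; since $M$ is strongly $\log\log n$ space-bounded, so is $M'$, and in particular $M'$ is middle $\log\log n$ space-bounded (a strong bound is always a middle bound).

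The key point is then that $S(M') = S^s(M)$. Indeed, by the two-way store-language definition a word $qw$ lies in $S(M')$ iff some reachable configuration $(q, \rhd u \lhd, w, j)$ can be continued to one in a final state; but with every state of $M'$ final, the zero-step continuation already ends in a final state, so $qw \in S(M')$ iff $(q,w)$ is reachable in $M$ from the initial configuration on some input, which is precisely membership in $S^s(M)$. Hence $S(M') = S^s(M)$ is non-regular, and $M'$ witnesses the corollary.

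I do not expect a genuine obstacle here; the only thing to verify carefully is this identity $S(M') = S^s(M)$, together with the (immediate) observations that determinism and the space bound are preserved under adding final states. These are exactly the ingredients of the \emph{Furthermore} remark preceding the corollary to Proposition \ref{result1}, so the statement is essentially a direct consequence of Proposition \ref{result2}.
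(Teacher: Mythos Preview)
Your proposal is correct and follows exactly the approach the paper intends: it invokes the \emph{Furthermore} remark (make all states final so that $S^s(M)=S(M')$ and strong space bounds become middle ones) applied to the machine of Proposition~\ref{result2}. The paper itself states the corollary as immediate, and your write-up simply unpacks that one-line deduction.
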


Turning now to one-way machines:

\begin{proposition} \label{result4}
If $M$ is a middle $s(n)$ space-bounded $1\NTM$
and $s(n)$ grows slower than $\log n$, then $S(M)$ is regular.
\end{proposition}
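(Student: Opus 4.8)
The plan is to adapt the proof of Proposition~\ref{result1} to the one-way setting, using the $\log n$ threshold of Proposition~\ref{known}, part~2, in place of the $\log\log n$ threshold of part~1. As there, I would build an auxiliary $1\NTM$ $M'$ that recognizes, up to a prefix, the store language of $M$, and then argue that $M'$ is itself space-bounded below $\log n$, so that $L(M')$ is regular.

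Concretely, $M'$ has input of the form $wqx$, where $w$ is over the input alphabet of $M$, $q$ is a state, and $x$ is over the worktape alphabet (the three alphabets taken disjoint). Reading $w$ left-to-right, $M'$ simulates $M$ on $w$; at a nondeterministically chosen step it records the current simulated state and store as $q'x'\# q'x'$ (two copies separated by a new symbol $\#$), then resumes the simulation of $M$ on the right copy while continuing to consume the rest of $w$. When the simulated $M$ accepts, the input head of $M'$ sits exactly at the start of $qx$; $M'$ then compares $qx$ against the preserved left copy $q'x'$, scanning input and worktape copy in the same left-to-right direction, accepting iff they agree. The point that makes the one-way restriction harmless is that the comparison against the trailing $qx$ is deferred until after $w$ has been fully read, so a single forward pass over the input suffices. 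By construction $M'$ accepts $wqx$ iff $qx$ occurs in an accepting computation of $M$ on $w$, that is, iff $qx\in S(M)$.

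For the space bound, observe that on an accepting computation $w\in L(M)$, so the simulated accepting computation of $M$ on $w$ uses at most $s(|w|)$ worktape cells; hence $|x'|\le s(|w|)$ and the worktape of $M'$ never holds more than the two copies and $\#$, i.e.\ $O(s(|w|))$ cells. Since on accepted inputs $|x|\le s(|w|)$ we have $|w|\le n\le |w|+1+s(|w|)$, so $|w|=\Theta(n)$ and the worktape usage is $O(s(|w|))=o(\log n)$. Thus $M'$ is middle $s'(n)$ space-bounded for some $s'(n)=o(\log n)$, and Proposition~\ref{known}, part~2, gives that $L(M')$ is regular.

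Finally, applying the homomorphism $h$ that erases every input-alphabet symbol of $M$ and fixes the state and worktape symbols yields $h(L(M'))=S(M)$; since the regular languages are closed under homomorphism, $S(M)$ is regular. The step I expect to require the most care is the space accounting: unlike the two-way case, I must confirm both that the bound $O(s(|w|))$ is genuinely sublogarithmic in the length $n$ of $M'$'s own (longer) input, and that the reading order can be arranged so that simulating $M$ on $w$ and comparing against $qx$ together need only one forward pass over the input.
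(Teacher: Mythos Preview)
Your proposal is correct and follows essentially the same approach as the paper: the paper's proof simply says to reuse the construction of Proposition~\ref{result1}, invoke Proposition~\ref{known} part~2 instead of part~1, and observe that the auxiliary machine $M'$ is one-way whenever $M$ is. You have supplied exactly those details, and in fact been more careful than the paper on two points it leaves implicit---that the deferred comparison with $qx$ makes a single forward pass suffice, and that $|w|=\Theta(n)$ so $O(s(|w|))=o(\log n)$ on $M'$'s own input length.
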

\begin{proof}
The proof is the same as the proof of Proposition \ref{result1}
using Proposition \ref{known}, part 2, and noting that the $M'$
constructed in that proof would also be one-way if $M$ is one-way.
\qed
\end{proof}
\begin{corollary}
If $M$ is a strongly $s(n)$ space-bounded $1\NTM$
and $s(n)$ grows slower than $\log n$, then $S^s(M)$ is regular.
\end{corollary}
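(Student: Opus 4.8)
The plan is to reduce the claim about the strong store language $S^s(M)$ to Proposition \ref{result4} on the ordinary store language, mirroring exactly the derivation of the Corollary that follows Proposition \ref{result1} in the two-way setting.

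First I would, given a strongly $s(n)$ space-bounded $1\NTM$ $M$, construct a machine $M'$ identical to $M$ in every respect except that every state of $M'$ is declared final. Strong space-boundedness means that all computations of $M$ (accepting or not) on an input of length $n$ stay within $s(n)$ cells, and since $M'$ has exactly the same transitions as $M$, every computation of $M'$ — in particular every accepting one — also uses at most $s(n)$ space. Hence $M'$ is middle $s(n)$ space-bounded, and since $s(n)$ grows slower than $\log n$, Proposition \ref{result4} applies to $M'$ and yields that $S(M')$ is regular.

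The heart of the argument is then the identity $S^s(M) = S(M')$. The inclusion $S(M') \subseteq S^s(M)$ is immediate: any store configuration $qw$ occurring in an accepting computation of $M'$ occurs in a computation of $M$ as well, since the transitions coincide, so $(q,w)$ is reachable and $qw \in S^s(M)$. For the reverse inclusion, suppose $qw \in S^s(M)$, so some computation of $M$ reaches a configuration with state $q$ and worktape contents $w$ after having consumed an input prefix $u$. Because the input is read-only and one-way, the prefix of the computation that reaches this configuration depends only on $u$ and not on any unread suffix, so I would take the input of $M'$ to be exactly $u$: the machine reaches the configuration $(q,w)$ with empty remaining input, and since $q$ is final in $M'$ this configuration is itself accepting. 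Thus $qw$ appears in an accepting computation of $M'$, giving $qw \in S(M')$.

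The step I expect to require the most care is this last truncation argument, owing to the input end-markers in the space-bounded model: one must verify that cutting the input down to the consumed prefix still yields a legitimate accepting computation in which $(q,w)$ appears as an intermediate (here, terminal) step, rather than forcing the head through extra end-marker transitions that could modify the store before a final state is reached. Once $S^s(M) = S(M')$ is settled, the conclusion is immediate, since $S(M')$ is regular by Proposition \ref{result4}, and therefore so is $S^s(M)$.
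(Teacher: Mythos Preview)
Your proposal is correct and follows exactly the approach the paper uses: the paper's justification (stated after Proposition \ref{result1} and implicitly reused here) is precisely to build $M'$ from $M$ by declaring all states final, observe that $M'$ is middle $s(n)$ space-bounded, and that $S^s(M) = S(M')$, then invoke Proposition \ref{result4}. Your discussion of the truncation argument and the end-marker subtlety is in fact more careful than the paper, which simply asserts $S^s(M) = S(M')$ without elaboration.
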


The next result shows that Proposition \ref{result4} is tight.

\begin{proposition} \label{result5}
There is a strongly $\log n$ space-bounded
$1\DTM$ $M$ such that $S^s(M)$ and $S(M)$ are not regular.
\end{proposition}
\begin{proof}
Let $L = \{a^n b^n ~|~ n \ge 1 \}$.
Construct a strongly $\log n$ space-bounded 
$1\DTM$ to accept $L$.  $M$ when given an input
$a^n b^m$, first reads $a^n$ and stores $n$
in binary, say $x$, on the worktape. Then
$M$ converts $x$ to $x \# x$. Next, $M$ reads
$b^m$ while decrementing the second $x$ on the worktape
to check that $m=n$.  Finally, $M'$ converts
the worktape to $x \# x$ and accepts in state $f$. 
Clearly, $S^s(M) \cap f (0+1)^+ \# (0+1)^+
= \{ f x \# x ~|~ x \in 1 \{0,1\}^* \}$
is not regular.  Hence, $S^s(M)$ is not regular. Making all states
final then gives the same result for $S(M)$.
\qed
\end{proof}
\begin{corollary}
There is a middle $\log n$ space-bounded
$1\DTM$ $M$ such that $S(M)$ is not regular.
\end{corollary}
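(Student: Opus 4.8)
The plan is to exploit the canonical logarithmic-space language $L = \{a^n b^n \mid n \geq 1\}$ and arrange for the worktape to hold two identical binary copies at a recognizable moment, so that the strong store language contains a copy language once intersected with a suitable regular set.

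First I would construct a $1\DTM$ $M$ recognizing $L$ in the standard way: on input $a^n b^m$, $M$ reads the block of $a$'s and writes the binary representation $x$ of $n$ on the worktape, using only $O(\log n)$ cells. $M$ then duplicates this string, producing $x \# x$ on the worktape, where $\#$ is a fresh separator. Next, while scanning the block of $b$'s, $M$ decrements the second copy once per input symbol, checking that it reaches zero exactly when the input ends, i.e.\ that $m = n$. If the check succeeds, $M$ restores the worktape to $x \# x$ and enters a distinguished state $f$, which is used nowhere else in the computation. By packing a bounded number of bits per worktape cell through a suitable tape alphabet, the whole computation fits within $\log n$ cells.

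The key observation is that, because $M$ is deterministic and $f$ is entered only upon a successful verification, for every $n \geq 1$ the configuration whose state is $f$ and whose worktape content is $x \# x$ (with $x$ the binary encoding of $n$) is reachable, and these are the only configurations with state $f$ whose worktape lies in $(0+1)^+ \# (0+1)^+$. Hence $S^s(M) \cap f(0+1)^+\#(0+1)^+ = \{f\,x \# x \mid x \text{ is the binary encoding of some } n \geq 1\}$. Since the binary encodings form an infinite regular set, this is a copy language and is non-regular by a routine pumping argument. As the regular languages are closed under intersection, $S^s(M)$ itself is non-regular.

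The main obstacle will be verifying the two side conditions rather than the non-regularity itself. I must check that $M$ is \emph{strongly} $\log n$ space-bounded, so that \emph{every} computation (including rejecting ones on ill-formed inputs) stays within $\log n$ cells; this requires $M$ to abort and reject as soon as the input departs from the form $a^* b^*$, and never to grow the worktape beyond the two binary copies plus the separator. I must also ensure that $f$ is entered with worktape content of exactly the form $x \# x$ and only then, so that the intersection above isolates precisely the copy language and does not pick up spurious strings $u \# v$ with $u \neq v$ that could collapse the set back to something regular; determinism together with the single controlled use of $f$ is what guarantees this.
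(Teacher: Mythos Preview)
Your construction is exactly the one the paper uses (it is the proof of the immediately preceding Proposition, from which the corollary is meant to follow), so the approach is right. However, as written you have proved the wrong variant: you argue that $S^s(M)$ is non-regular and that $M$ is \emph{strongly} $\log n$ space-bounded, whereas the corollary asks for $S(M)$ (the ordinary store language, consisting only of configurations that occur in \emph{accepting} computations) and for \emph{middle} space-boundedness. Non-regularity of $S^s(M)$ does not by itself imply non-regularity of $S(M)$, since $S(M)\subseteq S^s(M)$.

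The patch is a one-line observation in each case. For the space bound, strongly $s(n)$ space-bounded trivially implies middle $s(n)$ space-bounded, so the bound you verified is more than sufficient. For the store language, make $f$ a final state and note that, by your own design, $f$ is entered only at the very end of a successful computation on an input $a^n b^n$, with the worktape already restored to $x\#x$. Hence every configuration of the form $f\,x\#x$ lies on an accepting computation and therefore belongs to $S(M)$, not merely to $S^s(M)$; conversely no other string in $f(0+1)^+\#(0+1)^+$ can appear in $S(M)$. Thus $S(M)\cap f(0+1)^+\#(0+1)^+$ equals the same copy set $\{f\,x\#x\mid x\text{ a binary encoding of some }n\geq 1\}$, and your pumping argument now finishes the corollary as stated.
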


\section{Conclusions and Future Directions}

Store languages are studied in a general fashion, by varying the types
of stores used. Certain specific models are studied such as 
nondeterministic Turing machines with a one-way read-only input
tape and a reversal-bounded read/write worktape, and it is shown that
all store languages are regular. Similarly, all store languages of $k$-flip pushdown automata are regular. Then it is shown that store languages of one-way 
nondeterministic, and one-way deterministic machines coincide,
when using the same store types. Similarly, these coincide with two-way
machines that accept finite languages over the same store types after applying a homomorphism.
One application of store languages is presented here. If there is a one-way
nondeterministic model with readable store types that only has regular store
languages, then the languages accepted by deterministic machines with
the same store types are closed under right quotient with regular languages. This resolves several open problems in the literature.
This type of result is only possible by studying store languages
in the general fashion done here. Lastly, space-bounded Turing machines
are studied, and lower bounds are given to have non-regular store
languages.

There are many other machine models in the literature that have yet to have their store language studied. The store languages of $\NPCM$ will be considered in a follow-up paper. Also, the pushdown hierarchy
is of interest \cite{IteratedStack}.
We also believe that
there are many other applications of store languages, similar
to the result on right quotient studied here.

\section*{Acknowledgements}
We thank the reviewers for their comments that improved the presentation of our results.

\bibliography{undecide_refs}{}
\bibliographystyle{elsarticle-num}


\end{document}